\documentclass[a4paper,UKenglish,numberwithinsect]{lipics-v2021}

\usepackage{amsmath}
\usepackage{mathpartir}
\usepackage{amssymb}
\usepackage{mathtools}
\usepackage{xcolor}
\usepackage{graphicx}
\usepackage{calc}
\usepackage{xspace}

\usepackage{amsthm}
\usepackage{stmaryrd}
\usepackage{cases}

\usepackage{xfakebold}

\usepackage{cite}

\usepackage{tikz}
\usetikzlibrary{arrows,positioning,fit,calc,automata,shapes}
\tikzset{mynode/.style={rectangle,
rounded corners,draw,
text height=0.7em,
initial text=$ $,
fill=white},
align = justify}

\tikzset{mynodew/.style={rectangle,
rounded corners,draw,
text height=0.7em,
initial text=$ $,
fill=white},
align = justify}

\tikzset{mynodesh/.style={rectangle,
rounded corners,draw,
text height=0.7em,
initial text=$ $,
fill=white},
align = justify}

\tikzset{mynodel/.style={rectangle,
rounded corners,draw,
text height=0.7em,
initial text=$ $,
fill=white},
align = justify}

\tikzset{smallnode/.style={
initial text=$ $,
fill=white}}

\usepackage[capitalise]{cleveref}

\newcommand{\ag}{\mathsf{Ag}}

\newcommand{\R}{\mathcal{R}}



\makeatletter
\newcommand\bigDiamond{\mathop{\mathpalette\bigDi@mond\relax}}
\newcommand\bigDi@mond[2]{  \vcenter{\hbox{\m@th
    \scalebox{\ifx#1\displaystyle 2\else1.2\fi}{$#1\Diamond$}  }}}

\makeatletter
\def\bRightarrowfill@{\arrowfill@{\bm\Relbar}{\bm\Relbar}{\bm\Rightarrow}}
\def\xyarrow#1#2#3#4#5{\mathrel{\ext@arrow 0959{#1}{#3}{#2}{\!\!}^{#4}_{#5}}}
\def\xybRightarrow{\xyarrow\bRightarrowfill@}
\makeatother

\newcommand{\logicize}[1]{\ensuremath{\mathsf{#1}}}
\newcommand{\prop}{\logicize{Prp}}
\newcommand{\vbl}[1]{\logicize{#1}}

\newcommand{\true}{\top}
\newcommand{\false}{\bot}

\newcommand{\block}[2]{\ensuremath{\genfrac{[}{]}{0pt}{1}{#1}{#2}}}

\newcommand{\act}{\ensuremath{\mathsf{Act}}}

\newcommand{\goal}{\ensuremath{\varPhi_\mathsf{G}}}

\newcommand{\initial}{\ensuremath{s_\mathsf{I}}}

\newcommand{\M}{\ensuremath{\mathcal{M}}}
\newcommand{\A}{\ensuremath{\mathcal{A}}}
\newcommand{\pre}{\ensuremath{\mathtt{pre}}}
\newcommand{\LG}{\logicize{L}}

\newcommand{\kk}{\logicize{K}}

\newcommand{\remv}{\logicize{rmv}}
\newcommand{\ad}{\logicize{add}}
\newcommand{\nxtstg}{\logicize{nxt\_stg}}

\newcommand{\okstate}{\logicize{ok\_state}}

\renewcommand{\aa}{\vbl{a}}
\newcommand{\bb}{\vbl{b}}

\newcommand{\xx}{x}

\nolinenumbers

\title{An Undecidability Proof for the Plan Existence Problem}

\author{Antonis Achilleos}{Department of Computer Science, Reykjavik University, Iceland}{antonios@ru.is}{https://orcid.org/0000-0002-1314-333X}{}

\authorrunning{A. Achilleos}
    
    \ccsdesc[500]{Theory of computation~Automated reasoning}
    \ccsdesc[500]{Theory of computation~Modal and temporal logics}
        
\keywords{Epistemic Planning, Modal Logic, Undecidability}

\begin{document}

\maketitle

\begin{abstract}
The plan existence problem asks, given a goal in the form of a formula in modal logic, an initial epistemic state (a pointed Kripke model) and a set of epistemic actions, whether there exists a sequence of actions that can  be applied to reach the goal.
We prove that even in the case where the preconditions of the epistemic actions have modal depth at most 1, and there are no postconditions, the plan existence problem is undecidable.
The (un)decidability of this problem was previously unknown.
\end{abstract}

\section{Introduction}

In the field of Epistemic Planning (EP) \cite{bol-et-al-del-based-ep,bolander243gentle}, we seek to construct a \emph{plan}, that is, a sequence of actions that achieve a given \emph{goal}.
EP is an approach based on dynamic epistemic logic~\cite{dyn_epist_logic} that extends classical planning \cite{FIKES1971189} by incorporating the beliefs and knowledge of multiple agents, allowing for the modeling of complex interactions such as coordination, communication, and secrecy.

As expected, in EP, the goal is typically epistemic, in that it involves statements regarding the belief or knowledge of one or multiple agents, and it is expressed with a formula from a multi-agent epistemic logic. 
The state of the world (including the beliefs or knowledge of the agents) 
is represented by an \emph{epistemic state}, which is a pointed Kripke model. 
Finally, an \emph{epistemic action} transforms an epistemic state, by affecting the beliefs of the agents, and possibly also the values of the propositional atoms. The agents may  be uncertain about the epistemic actions, and therefore these are represented as pointed Kripke frames, where each state is called an event. 
Each event 
is further equipped with \emph{preconditions}, \emph{i.e.}~epistemic formulas that describe when the event can be applied to a world of the epistemic state; and with \emph{postconditions} that describe how the propositions of a world change value when the event is applied to that world (see \cite{bol-et-al-del-based-ep} for a more comprehensive description of EP).
Thus, a central problem in EP is the \emph{plan existence problem}: given an initial epistemic state, a goal, and a set of epistemic actions, is there a plan (a sequence of epistemic actions from the given ones) that transforms the initial epistemic state into one that satisfies the goal?

The plan existence problem is generally undecidable \cite{Bolander01012011}. Furthermore, it remains so when epistemic actions are restricted to having no postconditions  and preconditions of modal depth at most 2~\cite{on-the-impact-of-modal-depth};  and to having preconditions of modal depth at most 1 and propositional postconditions \cite{small-problems,bol-et-al-del-based-ep}.
On the other hand, it is known that plan existence is decidable when both preconditions and postconditions are propositional \cite{Yu-Wen-Liu2013multiagent,aucher:hal-01098740,doueneau2018chain}.
The case of preconditions of modal depth 1 and no postconditions, to the best of our knowledge, remains open to this day.

\paragraph*{Our contribution}
We prove that the plan existence problem with preconditions of modal depth at most 1 and no postcondition \cite{bol-et-al-del-based-ep}, is undecidable.
Our proof is by a reduction from Post's Correspondence Problem (PCP) \cite{Post1946AVO}.
We prove that the problem is undecidable in the case of at least two agents without any assumptions on the modal logic that the  agents' beliefs are based on. 
For the single-agent case, we need to distinguish between two cases. 
If the agent's knowledge behaves according to logic S5, or even if the agent has negative introspection (they know what they do not know), then it is a somewhat straightforward observation that the plan existence problem is decidable.
When the agent does not have negative introspection, we consider agents based on a modal logic that extends the basic logic K with axioms $T, B, 4$, but not both $B$ and $4$, as these imply negative introspection. 
For these cases, we prove that the plan existence problem is undecidable. 

\subparagraph*{Our reduction} makes substantial use of the fact that successive applications of epistemic actions 
can construct long paths with starting and ending states definable by a formula of modal depth 1. We need this property, as we cannot use postconditions to cycle through the states of an encoded machine, nor general preconditions to access the status of worlds further away. Thus, our constructions need to operate in a local manner, having events with preconditions that depend on the immediately accessible worlds.

In \Cref{sec:back}, we give the necessary background on epistemic logic and epistemic planning.
In \Cref{sec:K}, we prove the undecidability of the plan existence problem for a single agent based on modal logic K. This case is likely not particularly realistic for planning situations, or general, but it is sufficiently simple to help us demonstrate our reduction method.
In \Cref{sec:multi}, we extend our reduction to the case of multiple agents, and in \Cref{sec:rs} we show how to further adjust our reduction to the case of agents based on logics KTB and KT4, and therefore on any logic between K and KTB, KT4.
We conclude in \Cref{sec:conclusions} and give a straightforward decidability result for a single agent with negative introspection.

\section{Background}
\label[section]{sec:back}

Modal formulae extend propositional logic with a set of modal operators $K_i$, $i \in \ag$, often interpreted as asserting an agent $i$'s knowledge or belief of the statement it is applied on.
We use a set of propositional variables $\prop$ that can be assumed to be finite or countably infinite. The set of modal formulae is given by the following grammar:
\begin{align*}
  \varphi ::= \false \mid p \mid \neg \varphi \mid \varphi \land \varphi \mid K_i \varphi,
\end{align*}
where $p \in \prop$ and $i \in ag$. We use the usual definitions for the dual operators of the language: $\true := \neg \false$; $\varphi_1 \lor \varphi_2 := \neg (\neg \varphi_1 \land \neg \varphi_2)$; and $\bar{K} \varphi := \neg K \neg \varphi$.

A Kripke model is a triple $(W,R,V)$, where $W$ is a set of \emph{states} (or worlds), representing different possibilities about reality; $R = (R_i)_{i \in \ag}$, where $R_i \subseteq W^2$ is an \emph{accessibility relation}, where $(w,w') \in R_i $ represents that at state $w$, agent $i$ considers $w'$ a possible reality; and $V: W \to 2^\prop$ labels each state with a set of propositional variables, with the intended meaning that these are the variables that are true at that state. We usually write $w R_i w'$ instead of $(w,w') \in R_i$. The pair $(W,R)$ is a \emph{Kripke frame}.

An \emph{epistemic state}, also known as a pointed model, is a pair $(\M,w)$, where $\M = (W,R,V)$ is a Kripke model and $w \in W$ --- it can also be written as a quadruple $(W,R,V,w)$.
We interpret modal formulae on epistemic states. Let $\M = (W,R,V)$ be a Kriple model and $w \in W$. Then we define $\models$ between epistemic states and formulae in the following way.
\begin{align*}
  (\M,w) &\not \models \false \\ 
  (\M,w) &\models p &\text{ if and only if }& &p \in V(w) \\ 
  (\M,w) &\models \neg \varphi &\text{ if and only if }& &(\M,w) \not \models \varphi \\
  (\M,w) &\models \varphi_1 \land \varphi_2 &\text{ if and only if }& &(\M,w) \models \varphi_1 \text{ and }(\M,w) \models \varphi_2 \\ 
  (\M,w) &\models K_i \varphi &\text{ if and only if }& &(\M,w') \models \varphi \text{ for every } w R_i w'.
\end{align*}

In the case where $|\ag| = 1$, we write $K \varphi$ instead of $K_i \varphi$, and $R = R_i$. 
For each epistemic state $s = (\M,w) = (W,R,V,w)$ and $u \in W$, we use $(s,u)$ to denote the epistemic state $(\M,u)$.

An \emph{epistemic action} is a pair $(\A,e)$, where $\A = (E,Q,\pre)$, $(E,Q)$ is a Kripke frame, $e \in E$, and $\pre : E \to \LG$ maps each element of $E$ to a modal formula.
Elements of $E$ are called \emph{events} and for each event $f$, $\pre(f)$ is called the precondition of $f$.
We restrict ourselves to preconditions of modal depth at most $1$, that is, for each $f \in E$, $\pre(f)$ does not have any nested modal operators.

We say that an epistemic action $(\A,e)$ \emph{applies} to an epistemic state $(\M,w)$, where $\A = (E,Q,\pre)$ and $\M = (W,R,V)$, when $(\M,w) \models \pre(e)$. In that case, we define the application of $(\A,e)$ to $(\M,w)$, and we denote it as $(\M,w) \times (\A,e) = ((W',R',V'),(w,e))$, where
\begin{align*}
  W' =& \{ (u,f) \in W\times E \mid (\M,u) \models \pre(f) \}; \\
  R'_i =& \{ ((u_1,f_1),(u_2,f_2)) \in W'^2 \mid (u_1,u_2) \in R_i \text{ and } (f_1,f_2) \in Q_i \}; \text{ and} \\ 
  V'(u,f) =& V(u) \text{ for every } (u,f) \in W'.
\end{align*}

We can recursively extend the definition of the application of an epistemic action to an epistemic state for sequences of actions: if $\alpha$ is a sequence of epistemic actions, $s$ is an epistemic state,  $s \times \alpha$ is well-defined, and 
$a$ is an epistemic action that applies to $s \times \alpha$, then $s \times \alpha a = (s \times \alpha) \times a$.

We now define our (constrained) version of the plan existence problem.

\begin{definition}[Plan Existence Problem]\label[definition]{def:plan-existence-problem}
Given:
\begin{itemize}
    \item an initial epistemic state $\initial$;
    \item a goal $\varphi_\mathcal{G}$, which describes the desired knowledge or belief conditions to be achieved; and 
    \item a set of epistemic actions $\act$, where each action has preconditions of modal depth at most $1$,
\end{itemize}
the \emph{Plan Existence Problem} asks whether there exists a plan, \emph{i.e.}~a sequence of actions $\alpha = a_1a_2\cdots a_m \in \act^*$, 
such that
$\initial \times \alpha$ is defined and $\initial \times \alpha \models \goal$.
\end{definition}

\begin{remark}
Typically (\emph{e.g.}~\cite{bol-et-al-del-based-ep,bolander243gentle}), the plan existence problem is defined more generally, allowing preconditions of arbitrary modal depth and postconditions that allow an event to change the values of propositional variables. Here, we restrict our attention to the case of no postconditions and preconditions of modal depth at most $1$, as this suffices to prove undecidability.
\end{remark}

We observe that if two epistemic states $s_1, s_2$ are bisimilar ($s_1 \sim s_2$), then for every epistemic action $\alpha$, $s_1 \times \alpha \sim s_2 \times \alpha$. Therefore, plans and plan existence are preserved under bisimulation, and, for convenience we will often 
identify 
epistemic states with their bisimilar states.

\paragraph*{Frame conditions} 
The semantics we have given for modal formulas correspond to the simplest modal logic $\kk^n$ on $n$ agents.
By imposing restrictions on the accessibility relations for the frames that we allow, the resulting semantics correspond to modal logics with different behaviours --- that is, each of the frame properties that we examine corresponds to a specific modal axiom. The relationship between frame conditions and modal axioms is a well-studied phenomenon called correspondence theory (see, for example,~\cite{SAHLQVIST1975110,van1984correspondence}).

In this paper, we consider the following typical properties of the accessibility relations:
\begin{description}
  \item[reflexivity] corresponds to the factivity axiom: $K_i \varphi \to \varphi$ and is usually required when the operator $K_i$ corresponds to knowledge;
  \item[transitivity] corresponds to positive introspection: $K_i \varphi \to K_i K_i \varphi$;
  \item[symmetry] corresponds to the axiom $\varphi \to K_i \bar{K}_i \varphi$ and usually results from the Eucledian property (below) and reflexivity;
  \item[the Eucledian property]  corresponds to negative introspection: $\neg K_i \varphi \to K_i \neg K_i \varphi$.
\end{description}

\paragraph*{Post's Correspondence Problem and the Reduction Plan}

To prove that the plan existence problem is undecidable, we reduce from Post's Correspondence Problem (PCP) \cite{Post1946AVO}, a well-known undecidable problem.

\begin{definition}[Post's Correspondence Problem (PCP),~\cite{Post1946AVO}:]

Given $n \geq 1$ and $B = \{ \block{a_i}{b_i} \mid 1 \leq i \leq n\}$, where for every $i \leq n$, $a_i, b_i \in \{0,1\}^*$, the \emph{Post's Correspondence Problem} asks whether there exists a non-empty sequence of indices $i_1, i_2, \dots, i_k$ such that:
\[
a_{i_1} a_{i_2} \dots a_{i_k} = b_{i_1} b_{i_2} \dots b_{i_k}.
\]
Such a sequence is called a \emph{match} for $B$.

\end{definition}

Let $n \geq 1$ and $B = \{ \block{a_i}{b_i} \mid 1 \leq i \leq n\}$, where for every $i \leq n$, $a_i, b_i \in \{0,1\}^*$.
The following sections demonstrate how $B$ can be turned into an instance of the plan existence problem, such that $B$ has a match if and only if that instance has a plan.
\subparagraph*{The overall idea} is that a plan constructs a sequence of epistemic states in two stages. In the first stage, the application of the epistemic actions represents adding blocks to a sequence of blocks from $B$, which is encoded as two branching paths of the epistemic state, one representing the top of the sequence, and the second the bottom of the sequence. In the second stage, the application of the epistemic states verifies that the resulting sequence is a match, by iteratively removing one symbol from the end of the sequence --- \emph{i.e.}~by removing the last state from each path, if these match. If both branches can eventually be pruned to the root, the plan has succeeded in finding a match.

We first give the reduction for the case of a single agent with no restrictions on the accessibility relations.
This case allows us to demonstrate our method in a clear setting; we then adjust our construction for the cases of multiagent logics with any set of frame conditions; and to single-agent logics with frame conditions that may include reflexivity, symmetry, or transitivity, \emph{but not both} symmetry and transitivity.

\section{The Plan Existence Problem is Undecidable for Single-agent $\kk$}

We now show how to reduce PCP to the plan existence problem for the case when the accessibility relations have no restrictions and there is only one agent. 
Let $EP = (\initial,\act,\goal)$, which we define in the following.

\label{sec:K}We use the following set of propositional variables: 
\[
\prop = \{ 0, 1, 
\aa, \bb, \vbl{root}, \vbl{stg1}, \vbl{empty}, \vbl{end}, \vbl{ntF}, \vbl{lp}  \}.
\]

\subsection{The Epistemic States}

The initial epistemic state $\initial = ((W_0,R_0,V_0),w_{\vbl{root}})$, where
\begin{align*}
        W_0 = \{ &w_p, w_{\vbl{bt},\xx} &\mid p \in \prop\setminus \{0,1\}, \vbl{bt}\in \{0,1\}, \xx \in \{\aa,\bb\} \};\\
            R_0 = \{ 
        &(w_{\vbl{root}},w_{\vbl{empty}}), (w_{\vbl{root}},w_{\vbl{stg1}}), (w_{\vbl{root}},w_{\xx}), 
        \\ 
    &
    (w_\xx,w_{\vbl{bt},\xx}), 
                                                    (w_{\xx},w_{\vbl{end}}),
     (w_{\xx},w_{\vbl{ntF}}),
     \\ 
    &
     (w_{\vbl{bt},\xx},w_{\vbl{ntF}}),
     (w_{\vbl{bt},\xx},w_{\vbl{end}}),
     (w_{\vbl{bt},\xx},w_{\vbl{lp}}),
     \\ 
     &
         (w_{\vbl{0},\xx},w_{\vbl{bt},\xx}), (w_{\vbl{1},\xx},w_{\vbl{bt}})
          &\mid x \in \{\aa,\bb\}, ~ \vbl{bt} \in \{0,1\} \};     \end{align*}
        and 
    for every $p \in \prop\setminus \{0,1\}$ and $\xx \in \{\aa, \bb\}$, $V_0(w_p) = \{p\}$, $V_0(w_{0,\xx}) = \{0,\xx\}$, and $V_0(w_{1,\xx}) = \{1,\xx\}$.

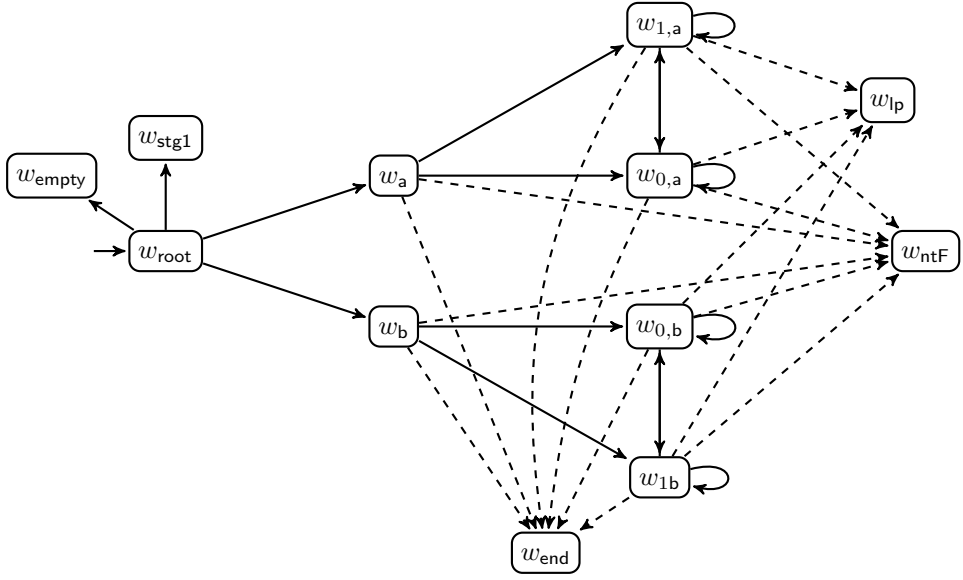
\begin{figure}
  \centering 
\begin{tikzpicture}[->,>=stealth',shorten >=1pt,auto,node distance=2.5cm, thick]

    \node[mynode,initial] (r0)   at (0, 0)    {$w_{\vbl{root}}$};
  \node[mynode] (stg)   at (0, 1.5)    {$w_{\vbl{stg1}}$};
  \node[mynode] (empty)   at (-1.5, 1)    {$w_{\vbl{empty}}$};
  \node[mynodesh] (wa)   at (3, 1)  {$w_{\aa}$};
  \node[mynodesh] (wb)   at (3, -1) {$w_{\bb}$};
  
  \node[mynodesh] (w0a)   at (6.5, 1)  {$w_{0,\aa}$};
  \node[mynodesh] (w1a)   at (6.5, 3) {$w_{1,\aa}$};
    \node[mynodesh] (wnf) at (10, 0)   {$w_{\vbl{ntF}}$};

  \node[mynodesh] (wlp) at (9.5, 2)   {$w_{\vbl{lp}}$};
  
  \node[mynodesh] (w0b)   at (6.5, -1)  {$w_{0,\bb}$};
  \node[mynodesh] (w1b)   at (6.5, -3) {$w_{1\bb}$};
  \node[mynodesh] (wend) at (5, -4)   {$w_{\vbl{end}}$};

    \draw[->] (r0) -- (stg);
  \draw[->] (r0) -- (empty);
  \draw[->] (r0) -- (wa);
  \draw[->] (r0) -- (wb);

  \draw[->] (wa) -- (w0a);
  \draw[->] (wa) -- (w1a);
  \draw[->] (wb) -- (w0b);
  \draw[->] (wb) -- (w1b);

    \draw[->] (w0a) -- (w1a);

  \draw[->] (w1a) -- (w0a);

  \draw[->] (w0a) edge[loop right] (w0a);
  \draw[->] (w1a) edge[loop right] (w1a);

  \draw[->] (w0b) -- (w1b);

  \draw[->] (w1b) -- (w0b);

  \draw[->] (w0b) edge[loop right] (w0b);
  \draw[->] (w1b) edge[loop right] (w1b);

    \draw[->,dashed] (wa) -- (wend);
  \draw[->,dashed] (wb) -- (wend);

  \draw[->,dashed] (w0a) edge[bend right=10] (wend);
  \draw[->,dashed] (w1a) edge[bend right=20] (wend);

  \draw[->,dashed] (w0a) edge   (wnf);
  \draw[->,dashed] (w1a) edge  (wnf);

  \draw[->,dashed] (w0b) -- (wend);
  \draw[->,dashed] (w1b) -- (wend);

  \draw[->,dashed] (w0b) edge  (wnf);
  \draw[->,dashed] (w1b) edge  (wnf);

  \draw[->,dashed] (wa) edge  (wnf);
  \draw[->,dashed] (wb) edge  (wnf);

  \draw[->,dashed] (w0a) edge   (wlp);
  \draw[->,dashed] (w1a) edge  (wlp);

  \draw[->,dashed] (w0b) edge  (wlp);
  \draw[->,dashed] (w1b) edge  (wlp);

\end{tikzpicture}
\caption{The initial epistemic state $\initial$. The designated state $w_{\vbl{root}}$ is marked with an arrow pointing to it. We used dashed arrows to the auxiliary states $w_{\vbl{end}}, w_{\vbl{ntF}}$ in this and in other figures, to emphasize the part of the model that encodes a sequence of blocks and uses solid arrows.}
\end{figure}

\begin{remark}
  The reader will likely realise while going through the constructions in the paper, that we could use simpler and smaller epistemic states.
  The ones we use, however, will help shorten and simplify some of our upcoming arguments, which we consider a significant gain.
\end{remark}

Intuitively, a plan is split into two stages, and the epistemic actions are split into three kinds. The first kind is applied during the first stage, and each action amounts to adding a block from $B$ at the end of the block sequence; the second kind consists of one action,  which signals the transition from the first stage to the second stage by removing state $w_{\vbl{stg1}}$ or its analogous in the current model; and the third kind is applied during the second stage to iteratively delete matching symbols from the end of the constructed block sequence.

The states $w_{\vbl{empty}}, w_{\vbl{stg1}}, w_{\vbl{end}}, w_{\vbl{ntF}}$ can be thought of as auxiliary, as they are used to keep track of useful information about the states of an epistemic state. 
As we apply actions to the epistemic states, in a state, the truth of
\begin{itemize}
  \item $\bar{K} \vbl{empty}$ encodes that we have applied no actions (and the state is the root state);
  \item $\bar{K} \vbl{stg1}$ encodes that we are on the first stage of the plan;
  \item $\bar{K} \vbl{end}$ encodes that we are at a state at the end of the encoding of the blocks; and   \item $\bar{K} \vbl{ntF}$ encodes that the plan has not attempted to remove the wrong symbol from the state during the second stage --- \emph{i.e.}, the plan has not failed yet.
\end{itemize}

\begin{definition}
Let $q_\aa, q_\bb \in \{0,1\}^*$.
We define the epistemic state $s\block{q_\aa}{q_\bb}$ to be (any epistemic state bisimilar to) the epistemic state 
$s\block{q_\aa}{q_\bb} = ((W\block{q_\aa}{q_\bb},R\block{q_\aa}{q_\bb},V\block{q_\aa}{q_\bb}),w_{\vbl{root}})$, where
    \begin{align*}
      W\block{q_\aa}{q_\bb} = \{ &w_p, w_\xx &&\mid p \in \{ \vbl{root}, \aa, \bb, \vbl{ntF} \},~ x \in \{\aa,\bb\}  \} 
      \cup \\ 
      \{ & 
      w_{\xx,j}
                                    &&\mid 
            x \in \{\aa,\bb\}  \text{ and } 1 \leq j \leq |q_\xx|
            \};
        \\ 
    R\block{q_\aa}{q_\bb} = \{
        &(w_{\vbl{root}},w_{\xx}), (w_\xx,w_{\vbl{ntF}})
    && \mid 
    x \in \{\aa,\bb\}
    \} ~\cup 
     \\ 
     \{& 
        (w_\xx,w_{\xx,1}), (w_{\xx,|q_{\xx}|},w_{\vbl{ntF}})
    && \mid 
    x \in \{\aa,\bb\}, 
    1 \leq |q_{\xx}|
        \} ~\cup 
    \\ 
    \{& 
    (w_{\xx,j},
    w_{\xx,j+1}), 
                (w_{\xx,j},w_{\vbl{ntF}})
       && \mid 
    x \in \{\aa,\bb\}, 
    1 \leq j < |q_{\xx}|
        \} 
                        ; \text{ and }
    \end{align*}
        for every $p \in \{ \vbl{root},\aa,\bb,\vbl{ntF} \}$, $V(w_p) = \{p\}$, and for every $\xx \in \{ \aa, \bb \}$ and $1 \leq j \leq |q_{\xx}|$, 
        $V(w_{\xx,j}) = \{q_{\xx}[j]\}$.
                        \end{definition}

The epistemic state $s\block{q_\aa}{q_\bb}$ is used to encode a sequence of blocks from $B$ whose top parts form $q_\aa$ and bottom parts form $q_\bb$; it is illustrated in \Cref{fig:sab}.
We also use the epistemic state $s\block{q_\aa}{q_\bb}\{0,1\}$ defined below, to encode the same sequences of blocks, while also retaining a loop of states that allows the plan to add more blocks to the sequence.

\begin{figure}
  \centering 
\begin{tikzpicture}[->,>=stealth',shorten >=1pt,auto,node distance=2.8cm, thick]

    \node[mynode,initial] (wroot)   at (0, 0)      {$w_\vbl{root}$};
    \node[mynode] (wa)   at (1, 1)    {$w_\aa$};
  \node[mynode] (wb)   at (1, -1)   {$w_\bb$};
  \node[mynode] (wj0a) at (3, 1)    {$w_{a,1}$};
  \node[mynode] (wj0b) at (3, -1)   {$w_{b,1}$};
  \node[mynode] (wj1a) at (5, 1)    {$w_{a,2}$};
  \node[mynode] (wj1b) at (5, -1)   {$w_{b,2}$};

  \node[mynode] (ejka) at (9, 1)   {$w_{a,|q_\aa|}$};
  \node[mynode] (ejkb) at (9, -1)  {$w_{b,|q_\bb|}$};

  \node[mynode] (wnf) at (6, 0)   {$w_{\vbl{ntF}}$};

      \draw[->] (wroot) -- (wa);
  \draw[->] (wroot) -- (wb);
  
  \draw[->] (wa) -- (wj0a);
  \draw[->] (wb) -- (wj0b);
  
  \draw[->] (wj0a) -- (wj1a);
  \draw[->] (wj0b) -- (wj1b);

  \draw[dotted] (wj1a) -- (ejka);
  \draw[dotted] (wj1b) -- (ejkb);

  \draw[->,line width=0.1,dashed] (wj0a) -- (wnf);
  \draw[->,line width=0.1,dashed] (wj0b) -- (wnf);

  \draw[->,line width=0.1,dashed] (wj1a) -- (wnf);
  \draw[->,line width=0.1,dashed] (wj1b) -- (wnf);

  \draw[->,line width=0.1,dashed] (ejka) -- (wnf);
  \draw[->,line width=0.1,dashed] (ejkb) -- (wnf);

  \draw[->,line width=0.1,dashed] (wa) edge (wnf);
  \draw[->,line width=0.1,dashed] (wb) -- (wnf);

\end{tikzpicture}
\caption{The epistemic state $s\block{q_\aa}{q_\bb}$. 
}\label[figure]{fig:sab}
\end{figure}

\begin{definition}
  Let $q_1, q_2 \in \{0,1\}^*$. We define the epistemic state $s\block{q_\aa}{q_\bb}\{0,1\}$ to be 
  $s\block{q_\aa}{q_\bb}\{0,1\} = ((W'\block{q_\aa}{q_\bb},R'\block{q_\aa}{q_\bb},V'\block{q_\aa}{q_\bb}),w_{\vbl{root}})$, where
  \begin{align*}
                  W'\block{q_\aa}{q_\bb} &= W\block{q_\aa}{q_\bb} & \cup \{ & w_{\vbl{stg1}}, w_{\vbl{0},\xx}, w_{\vbl{1},\xx}, w_{\vbl{end}} ,  w_{\vbl{lp}} 
      \mid \xx \in \{\aa,\bb\}
            \}; \\
            R'\block{q_\aa}{q_\bb} &= 
        R\block{q_\aa}{q_\bb} &\cup \{                   &
      (w_{\vbl{root}},w_{\vbl{stg1}}),                         (w_{\vbl{0},\xx},
      w_{\vbl{1},\xx}), 
      (w_{\vbl{1},\xx},
                        w_{\vbl{0},\xx}),
      (w_{\vbl{0},\xx},
      w_{\vbl{0},\xx}), 
      (w_{\vbl{1},\xx},
      w_{\vbl{1},\xx}),
      \\
      && 
       &
      (w_{\vbl{0},\xx},w_{\vbl{ntF}}),  
      (w_{\vbl{1},\xx},w_{\vbl{ntF}}),
      (w_{\vbl{0},\xx},w_{\vbl{end}}),  
      (w_{\vbl{1},\xx},w_{\vbl{end}}),
      \\
      && 
       &
      (w_{\vbl{0},\xx},w_{\vbl{lp}}),  
      (w_{\vbl{1},\xx},w_{\vbl{lp}})
                        \mid ~ \xx \in \{\aa,\bb\}
      \}  \\
      &&
      \cup 
      \{ 
       &(w_{\xx,|q_{\xx}|
               },w_{\vbl{bt},\xx}) ,
        (w_{\xx,|q_{\xx}|
             },w_{\vbl{end}}) 
        \mid ~ \xx \in \{\aa,\bb\}, ~ q_\xx \neq \varepsilon, ~\vbl{bt} \in \{0,1\}
      \}  \\
      &&
      \cup 
      \{ 
       & (w_{\xx},w_{\vbl{bt},\xx}), (w_{\xx},w_{\vbl{end}}) \mid ~ \xx \in \{\aa,\bb\}, ~ q_\xx = \varepsilon, ~ \vbl{bt} \in \{0,1\} \}
      ; \text{ and }
  \end{align*}
            for every $w \in W\block{q_\aa}{q_\bb}$, $V'\block{q_\aa}{q_\bb}(w) = V\block{q_\aa}{q_\bb}$, for 
      $p \in \{ \vbl{stg1},
            \vbl{end}, \vbl{lp} \}$, $V'\block{q_\aa}{q_\bb}(w_p) = \{p\}$, and for 
      $p \in \{\vbl{0},\vbl{1} \}$ and $\xx \in \{\aa,\bb\}$, $V'\block{q_\aa}{q_\bb}(w_{p,\xx}) = \{p,\xx\}$.

      \end{definition}

\begin{figure}
  \centering 
\begin{tikzpicture}[->,>=stealth',shorten >=1pt,auto,node distance=2.8cm, thick]

    \node[mynode,initial] (wroot)   {$w_\vbl{root}$};
  \node[mynode] (wst)  [above =0.7 of wroot]     {$w_{\vbl{stg1}}$};
  \node[mynodesh] (wa)   [above right = 0.7 and 0.8 of wroot]    {$w_\aa$};
  \node[mynodesh] (wb)  [below right =0.7 and 0.8 of wroot]   {$w_\bb$};
  \node[mynode] (wj0a) [right=1 of wa]   {$w_{a,1}$};
  \node[mynode] (wj0b) [right =1 of wb] {$w_{b,1}$};
  \node[mynode] (wj1a) [right=0.8 of wj0a]   {$w_{a,2}$};
  \node[mynode] (wj1b) [right =0.8 of wj0b] {$w_{b,2}$};

  \node[mynode] (ejka) [right =1.5 of wj1a]  {$w_{a,|q_\aa|}$};
  \node[mynode] (ejkb) [right =1.5 of wj1b] {$w_{b,|q_\bb|}$};

  \node[mynode] (w0a)   [above right=1 and 0.5 of ejka]  {$w_{0,\aa}$};
  \node[mynode] (w1a)  [right =1 of ejka] {$w_{1,\aa}$};
  \node[mynode] (w0b)   [right =1 of ejkb]  {$w_{0,\bb}$};
  \node[mynode] (w1b)  [below right=1 and 0.5 of ejkb] {$w_{1,\bb}$};
      \node[mynode] (wend) [below =0.7 of ejka] {$w_{\vbl{end}}$};
  
  \node[mynode] (wnf) [below = 0.7 of wj0a]   {$w_{\vbl{ntF}}$};
  
    \node[mynodesh] (wlp)   [below right = 0.5 and 0.5 of w1a]  {$w_\vbl{lp}$};

    \draw[->] (wroot) -- (wst);
  \draw[->] (wroot) -- (wa);
  \draw[->] (wroot) -- (wb);
  
  \draw[->] (wa) -- (wj0a);
  \draw[->] (wb) -- (wj0b);

  \draw[->] (wj0a) -- (wj1a);
  \draw[->] (wj0b) -- (wj1b);

  \draw[dotted] (wj1a) -- (ejka);
  \draw[dotted] (wj1b) -- (ejkb);

  \draw[->] (ejka) -- (w0a);
  \draw[->] (ejka) -- (w1a);
  \draw[->] (ejkb) -- (w0b);
  \draw[->] (ejkb) -- (w1b);

  \draw[<->] (w0a) -- (w1a);
  \draw[->] (w0a) edge[loop above] (w0a);
  \draw[->] (w1a) edge[loop below] (w1a);

  \draw[->,dashed] (w0a) -- (wend);
  \draw[->,dashed] (w1a) -- (wend);

   \draw[<->] (w0b) -- (w1b);
  \draw[->] (w0b) edge[loop above] (w0b);
  \draw[->] (w1b) edge[loop below] (w1b);

  \draw[->,dashed] (w0b) -- (wend);
  \draw[->,dashed] (w1b) -- (wend);

      \draw[->,dashed] (ejka)  -- (wend);
  \draw[->,dashed] (ejkb) -- (wend);

  \draw[->,dashed] (w0a) 
  --
    (wnf);
  \draw[->,dashed] (w1a) 
    --
  (wnf);

  \draw[->,dashed] (w0b) 
  --
    (wnf);
  \draw[->,dashed] (w1b) 
    --
  (wnf);

  \draw[->,dashed] (w0a) 
  --
    (wlp);
  \draw[->,dashed] (w1a) 
    --
  (wlp);

  \draw[->,dashed] (w0b) 
  --
    (wlp);
  \draw[->,dashed] (w1b) 
    --
  (wlp);

  \draw[->,line width=0.1,dashed] (wj0a) -- (wnf);
  \draw[->,line width=0.1,dashed] (wj0b) -- (wnf);

  \draw[->,line width=0.1,dashed] (wj1a) -- (wnf);
  \draw[->,line width=0.1,dashed] (wj1b) -- (wnf);

  \draw[->,line width=0.1,dashed] (ejka) -- (wnf);
  \draw[->,line width=0.1,dashed] (ejkb) -- (wnf);

  \draw[->,line width=0.1,dashed] (wa) -- (wnf);
  \draw[->,line width=0.1,dashed] (wb) -- (wnf);
\end{tikzpicture}
\caption{The epistemic state $s\block{q_\aa}{q_\bb} \{0,1\}$. 
}
\end{figure}
  
Observe that $\initial \neq s\block{\varepsilon}{\varepsilon}\{0,1
\}$, as state $w_{\vbl{empty}}$ is absent from $s\block{\varepsilon}{\varepsilon}\{0,1\}$.

As we will see, a successful plan will generate a sequence of epistemic states of the form 
$s\block{q_\aa}{q_\bb}\{0,1\}$ or $s\block{q_\aa}{q_\bb}$. 

\subsection{The Epistemic Actions}

We now proceed to define the epistemic actions that we use. 
We use the following shorthand formulas:
\begin{align*}
                            \logicize{symb} & = 0 \lor 1 
    ,
  &
  \logicize{tail} & = 
  (
      \aa \lor \bb) \land K \neg \logicize{symb}
  \\ 
  \logicize{last} & = \bar{K} \vbl{end} \land K \neg \vbl{lp}
                                        ,~~~
    &
  \logicize{failed}
  & =
                        (\aa \lor \bb 
                ) 
        \land 
        K \neg \vbl{ntF}
                        ,~~~
  \text{ and}
  \\
  \logicize{loop}_\xx & = \xx \land \bar{K} \vbl{lp},
            \end{align*}
for every $\xx \in \{\aa,\bb\}$.

Note that the states in $s \block{q_\aa}{q_\bb} \{0,1\}$ that satisfy $\logicize{loop}_\xx$ are exactly $w_{0,\xx}, w_{1,\xx} 
$.
The states that satisfy $\logicize{last}$ are $w_{\xx,|q_\xx|}$, if $q_\xx \neq \varepsilon$, and $w_\xx$, if $q_\xx = \varepsilon$, where $\xx \in \{\aa,\bb\}$.
The states in $s \block{q_\aa}{q_\bb}$ that satisfy $\logicize{tail}$ are exactly $w_{\xx,|q_\xx|}$, if $q_\xx \neq \varepsilon$, and $w_\xx$, if $q_\xx = \varepsilon$.
No state in $s \block{q_\aa}{q_\bb}$ or $s \block{q_\aa}{q_\bb} \{0,1\}$ satisfies $\logicize{failed}$, which will be used to mark a state that witnesses that the plan has failed (see \Cref{lem:failed-states-fail}).

\subsubsection{Adding Blocks}
For every $1 \leq i \leq n$, we define 
$
\ad_i = (\mathcal{AD}_i,e_s), ~~ \text{ where }~ \mathcal{AD}_i = (E_i,R_i,\pre), $
        \begin{align*}
                        E_i = \{& e_s,  
                                    e_\xx,  
                                                      e_{\xx,lst},
                                    e_{st}, e_{\vbl{end}}, e_{\vbl{ntF}} ,
                  e_{01,\xx}
                                    &\mid& \xx \in \{\aa,\bb\} 
                  \} \\ 
                  & \cup \\ 
                  \{ & 
                  e_{\xx,j}
                  &\mid& x \in \{\aa,\bb\}, ~~ 0 < j \leq  |\xx_i| 
                  \}; 
                  \\ 
                                          R_i = \{&                             (e_s,e_{st}), (e_s,e_{\xx}),  
                (e_{s},e_{\xx,lst}), 
                \\ 
                &
                (e_\xx,e_{\xx}), 
                (e_\xx,e_{\vbl{ntF}}), 
                (e_\xx,e_{\xx,lst}),\\ 
                &
                (e_{\xx,lst},e_{\vbl{ntF}}), 
                                \\ 
                &
                (e_{01,\xx},
                                e_{01,\xx}),
                                (e_{01,\xx},e_{\vbl{end}}), 
                                                (e_{01,\xx},e_{\vbl{ntF}})
                                &\mid &
                \xx \in \{\aa,\bb\} 
                \}\\
                & \cup \\ 
                \{ & 
                (e_{\xx,lst},e_{\xx,1}), 
                                                (e_{\xx,j},e_{\vbl{ntF}}), 
                (e_{\xx,|\xx_i|},e_{\vbl{end}}) &\mid &
                \xx \in \{\aa,\bb\}, ~~ 0 < j \leq |\xx_i|
                                \}\\ 
                \{ & 
                                                                (e_{\xx,j},e_{\xx,j+1})
                                                                                                                &\mid &
                x \in \{\aa,\bb\}, ~~ 0 < j  < |x_i| 
                \}\\ 
               & \cup \\  
                \{&
                  (e_{\xx,lst},e_{01,\xx}), (e_{\xx,lst},e_{\vbl{end}})
                  &\mid &   
                  x \in \{\aa,\bb\}, ~~  |x_i| = 0 
                \};
                          \end{align*}
            and the precondition for each event in $E_i$ is given below:
            \begin{align*}
                        pre(e_s) =~& \vbl{root} \land \bar{K} \vbl{stg1} 
                        &
                        pre(e_{st}) =~& \vbl{stg1} \\ 
                                    pre(e_\xx) =~& 
                                      \xx 
                          \land \neg \vbl{last}
                                                &
            pre(e_{\xx,lst}) =~& 
                          \xx 
                            \land \vbl{last}
            \\ 
                                    pre(e_{\vbl{ntF}}) =~& \vbl{ntF}
                        &
                                    pre(e_{\xx,j}) =~& \xx_i[j] \land \logicize{loop}_\xx
                                                                                                                                                \\ 
                                    pre(e_{\vbl{end}}) =& \vbl{end}
                        &
                                    pre(e_{01,\xx}) =& \logicize{loop}_\xx \lor \vbl{lp} 
                                                                                    ,  
                                \end{align*}
        for each $ \xx \in \{\aa,\bb\}$ and $ j \leq |\xx_i|$. 

        Action $\ad_i$ corresponds to adding a block at the end of a sequence of blocks. The precondition of $e_s$ ensures that $\bar{K}\vbl{stg1}$ is true at the epistemic state, and therefore we are still in the first stage. The action is illustrated in \Cref{fig:addK}.

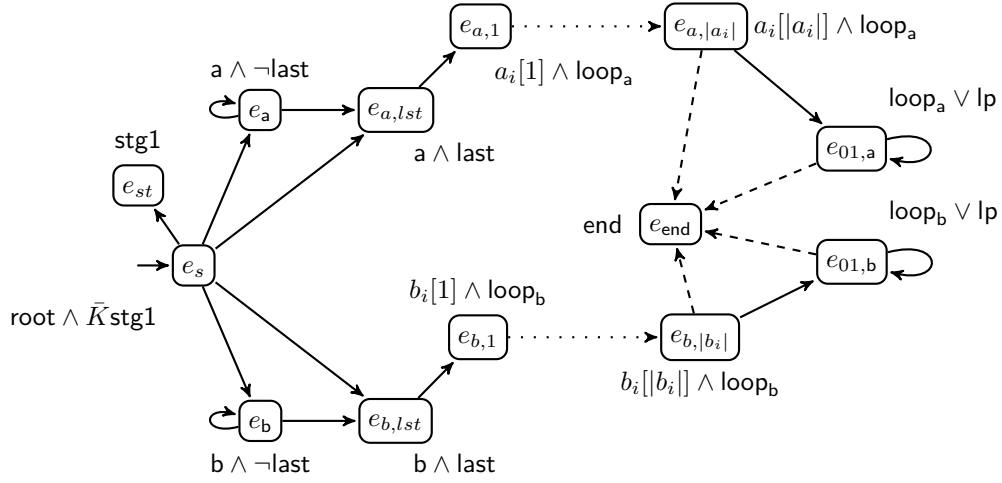
\begin{figure}
  \centering 
\begin{tikzpicture}[->,>=stealth',shorten >=1pt,auto,node distance=3cm, thick]

    \node[mynodesh,initial] (es)   {$e_s$};
  \node[mynodesh, above left=0.5cm and 0.1cm of es] (est)  {$e_{st}$};
  \node[mynodesh, above right=1.5cm and 0.3cm of es] (ea)    {$e_\aa$};
  \node[mynodesh, below right=1.5cm and 0.3cm of es] (eb)   {$e_\bb$};
  \node[mynode, right=1cm of ea] (ej0a)    {$e_{a,lst}$};
  \node[mynode, right=1cm of eb] (ej0b)   {$e_{b,lst}$};
  \node[mynode, above right=0.5cm and 0.2cm of ej0a] (ej1a)   {$e_{a,1}$};
  \node[mynode, above right=0.5cm and 0.2cm of ej0b] (ej1b)  {$e_{b,1}$};
  
  \node[mynode, right=2cm of ej1a] (ejka)   {$e_{a,|a_i|}$};
  \node[mynode, right=2cm of ej1b] (ejkb)  {$e_{b,|b_i|}$};

  \node[mynodesh, right=1cm of ejkb, yshift=2.5cm] (epsilon01a)  {$e_{01,\aa}$};
  \node[mynodesh, right=1cm of ejkb, yshift=1cm] (epsilon01b)  {$e_{01,\bb}$};

  \node[right=1cm of ejkb, yshift=1.5cm] (epsilon01) { };

  \node[mynodesh, left=1.5cm of epsilon01] (eend)  {$e_{\vbl{end}}$};

  \node[below left = 0.5cm of es] at (es)   {$\vbl{root} \land \bar{K}\vbl{stg1}$};
  \node[above = 0.3cm of est] at (est)  {$\vbl{stg1}$};
  \node[above = 0.3cm of ea] at (ea)   {$\aa \land \neg \vbl{last} 
      $};
  \node[below = 0.3cm of eb] at (eb)   {$\bb \land \neg \vbl{last} 
    $};
  \node[below right = 0.3cm and 0.1cm of ej0a] at (ej0a)    {$\aa \land \vbl{last} 
    $};
  \node[below right = 0.3cm and 0.1cm of ej0b] at (ej0b)   {$\bb \land \vbl{last} 
    $};
  \node[below right = 0.3cm and 0.05cm of ej1a] at (ej1a)   {$a_i[1] \land \logicize{loop}_\aa$};
  \node[above = 0.3cm of ej1b] at (ej1b)  {$b_i[1] \land \logicize{loop}_\bb$};
  
  \node[right = 0.5cm of ejka] at (ejka)   {$a_i[|a_i|] \land \logicize{loop}_\aa$};
  \node[below = 0.3cm of ejkb] at (ejkb)  {$b_i[|b_i|] \land \logicize{loop}_\bb$};

  \node[above right = 0.5cm of epsilon01a] at (epsilon01a)  {$\logicize{loop}_\aa \lor \vbl{lp}$};
  \node[above right = 0.5cm of epsilon01b] at (epsilon01b)  {$\logicize{loop}_\bb \lor \vbl{lp}$};

  \node[left = 0.5cm of eend] at (eend)  {$\vbl{end}$};

    \draw[->] (es) -- (est);
  \draw[->] (es) -- (ea); 
  \draw[->] (es) -- (eb);

  \draw[->] (ea) edge[loop left] (ea);
  \draw[->] (eb) edge[loop left] (eb);
  
  \draw[->] (es) -- (ej0a);
  \draw[->] (es) -- (ej0b);

  \draw[->] (ea) -- (ej0a);
  \draw[->] (eb) -- (ej0b);
  
  \draw[->] (ej0a) -- (ej1a);
  \draw[->] (ej0b) -- (ej1b);

  \draw[->, loosely dotted] (ej1a) -- (ejka);
  \draw[->, loosely dotted] (ej1b) -- (ejkb);
  
  \draw[->] (ejka) -- (epsilon01a);
  \draw[->] (ejkb) -- (epsilon01b);
  
  \draw[->] (epsilon01a) edge[loop right] (epsilon01a);
  \draw[->] (epsilon01b) edge[loop right] (epsilon01b);

  \draw[dashed] (ejka) -- (eend);
  \draw[dashed] (ejkb) -- (eend);
  \draw[dashed] (epsilon01a) -- (eend);
  \draw[dashed] (epsilon01b) -- (eend);

\end{tikzpicture}
\caption{The epistemic action $\ad_i$, where the designated event, $e_s$ is marked with an arrow. Each event's precondition appears next to the event. To avoid overloading the figure, we omit event $e_{\vbl{ntF}}$.}
\label{fig:addK}
\end{figure}

\begin{lemma}\label[lemma]{lem:adding-first-block}
  Let $1 \leq i \leq n$.  
  Then, $\initial \times \ad_i = s\block{\varepsilon}{\varepsilon}\{ 0,1 \} \times \ad_i$.
  \end{lemma}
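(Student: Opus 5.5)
We compare the two products directly, showing that the two resulting models are bisimilar (which suffices by the convention that epistemic states are identified up to bisimulation). The initial state $\initial$ and $s\block{\varepsilon}{\varepsilon}\{0,1\}$ differ only in that $\initial$ contains $w_{\vbl{empty}}$, with the edge $(w_{\vbl{root}},w_{\vbl{empty}})$. All other states agree, and so do their valuations and the remaining edges. In both, $w_\xx$ is linked to $w_{0,\xx}$, $w_{1,\xx}$, $w_{\vbl{end}}$ and $w_{\vbl{ntF}}$, since $q_\xx = \varepsilon$.

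\textbf{Step 1: no event of $\ad_i$ survives at $w_{\vbl{empty}}$.} We check each precondition of $\ad_i$ at $w_{\vbl{empty}}$.
\begin{itemize}
\item $V_0(w_{\vbl{empty}}) = \{\vbl{empty}\}$, so it falsifies $\vbl{root}$, $\vbl{stg1}$, $\xx$, $\vbl{ntF}$, $\vbl{end}$ and $\vbl{lp}$.
\item It also falsifies $\logicize{loop}_\xx$, because that formula requires $\xx$.
\item The preconditions of $e_{\xx,j}$ and $e_{01,\xx}$ are conjunctions with, or disjunctions of, these false formulas, so they fail as well.
\end{itemize}
Hence no pair $(w_{\vbl{empty}},f)$ lies in $W'$ for $\initial\times\ad_i$.

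\textbf{Step 2: all other preconditions are evaluated identically.} Every precondition has modal depth at most $1$. So at each remaining world $u$, its truth depends only on $V(u)$ and on the valuations of the $R$-successors of $u$. These agree in the two models, with one exception: $w_{\vbl{root}}$ has the extra successor $w_{\vbl{empty}}$ in $\initial$. At $w_{\vbl{root}}$ only $pre(e_s) = \vbl{root}\land\bar K\vbl{stg1}$ can hold. It holds in both models, because $w_{\vbl{stg1}}$ is a successor in both. Every other precondition requires $\xx$, $\vbl{stg1}$, $\vbl{ntF}$, $\vbl{end}$, $\vbl{lp}$, or a disjunct such as $\logicize{loop}_\xx$ that needs $\xx$, and all of these are false at $w_{\vbl{root}}$ in either model. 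Note also that the formula $\logicize{last}$ inside $pre(e_\xx)$ and $pre(e_{\xx,lst})$ is evaluated at $w_\xx$, whose successors coincide in the two models.

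\textbf{Step 3: conclude.} By Steps 1 and 2, the two products have the same set of worlds $(u,f)$. The relations $R'$ are computed from identical edges among those worlds, and the valuations are inherited, so $\initial\times\ad_i$ and $s\block{\varepsilon}{\varepsilon}\{0,1\}\times\ad_i$ are in fact identical, with the same designated world $(w_{\vbl{root}},e_s)$.

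The only step requiring care is Step 2 at $w_{\vbl{root}}$: we must confirm that the extra successor $w_{\vbl{empty}}$ cannot change the truth of any depth-1 precondition. The reason is that $w_{\vbl{empty}}$ satisfies only $\vbl{empty}$, and $\vbl{empty}$ occurs in no precondition of $\ad_i$, so the verification is a finite case check.
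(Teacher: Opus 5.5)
Your proof is correct and follows the paper's argument: $\initial$ and $s\block{\varepsilon}{\varepsilon}\{0,1\}$ differ only by $w_{\vbl{empty}}$, no precondition of $\ad_i$ holds at $w_{\vbl{empty}}$, and so the two products coincide. Your Step~2 makes explicit something the paper's one-line proof leaves implicit: the extra successor $w_{\vbl{empty}}$ of $w_{\vbl{root}}$ does not change any precondition's truth value at $w_{\vbl{root}}$. That step is carried by your explicit case check, not by your closing remark that $\vbl{empty}$ occurs in no precondition; an extra successor can in general flip a formula like $K\,\vbl{stg1}$ that never mentions $\vbl{empty}$.
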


  \begin{proof}
    Observe that $s\block{\varepsilon}{\varepsilon}\{ 0,1 \} $ is identical to $\initial$ except that it lacks state $w_{\vbl{empty}}$.
    Since there is not any event in $\ad_i$ whose precondition is satisfied at $w_{\vbl{empty}}$, the lemma follows.
  \end{proof}

\begin{lemma}\label[lemma]{lem:adding-blocks}
  Let $s = s\block{q_\aa }{q_\bb }\{ 0,1 \}$ and $1 \leq i \leq n$.  
  Then, $s \times \ad_i = s\block{q_\aa a_i }{q_\bb b_i }\{ 0,1 \}$.
\end{lemma}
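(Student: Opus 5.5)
The plan is to compute the product $s \times \ad_i$ directly and exhibit a bisimulation between it (restricted to the part reachable from the designated world $(w_{\vbl{root}},e_s)$) and $s\block{q_\aa a_i}{q_\bb b_i}\{0,1\}$. Since we identify epistemic states up to bisimilarity, this suffices. First I would tabulate, for each world $u$ of $s$, which preconditions hold at $u$. At $w_{\vbl{root}}$ only $\pre(e_s)$ holds, because $w_{\vbl{root}} R w_{\vbl{stg1}}$. At $w_{\vbl{stg1}}$, $w_{\vbl{ntF}}$ and $w_{\vbl{end}}$ only $\pre(e_{st})$, $\pre(e_{\vbl{ntF}})$ and $\pre(e_{\vbl{end}})$ hold, respectively. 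At $w_{\vbl{lp}}$ only $\pre(e_{01,\xx})$ holds.

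For the path worlds labelled $\xx$, the unique world satisfying $\xx \land \logicize{last}$ is the last world of the $\xx$-branch: $w_{\xx,|q_\xx|}$, or $w_\xx$ if $q_\xx=\varepsilon$. All earlier worlds $w_\xx, w_{\xx,j}$ satisfy $\xx\land\neg\logicize{last}$, since they either lack an $\vbl{end}$-successor or have none of the relevant shape. Finally, $w_{\vbl{bt},\xx}$ satisfies $\logicize{loop}_\xx$, so it satisfies $\pre(e_{01,\xx})$, and it satisfies $\pre(e_{\xx,j})$ exactly when $\xx_i[j]=\vbl{bt}$. I will also check that $\logicize{loop}_\xx$ does not hold at the last world, which has no $\vbl{lp}$-successor. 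This is exactly why $\logicize{last}$ includes $K\neg\vbl{lp}$.

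Next I would trace the successors of $(w_{\vbl{root}},e_s)$ in the product, going along the $\xx$-branch. The pairs $(w_\xx,e_\xx)$, $(w_{\xx,j},e_\xx)$ reproduce the old path, with $e_\xx$ looping. Each of them also has an $e_{\vbl{ntF}}$-successor, and there are no $\vbl{end}$-edges, because $e_\xx$ has no edge to $e_{\vbl{end}}$. The final old world pairs with $e_{\xx,lst}$. It is reached from the previous world via $(e_\xx,e_{\xx,lst})$, or directly from the root when $q_\xx=\varepsilon$ via $(e_s,e_{\xx,lst})$. It loses its $\vbl{end}$-edge, since $e_{\xx,lst}$ only points to $e_{\xx,1}$ and $e_{\vbl{ntF}}$ when $|\xx_i|>0$. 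From it, $e_{\xx,1}$ pairs only with $w_{\xx_i[1],\xx}$.

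Successively, $(w_{\xx_i[j],\xx},e_{\xx,j}) \to (w_{\xx_i[j+1],\xx},e_{\xx,j+1})$ is an edge, because $w_{0,\xx},w_{1,\xx}$ are mutually accessible and reflexive. This yields a path of new worlds labelled by the symbols of $\xx_i$, each with an $\vbl{ntF}$-edge. The last one, $e_{\xx,|\xx_i|}$, connects to $e_{\vbl{end}}$ and $e_{01,\xx}$. The pairs $(w_{\vbl{bt},\xx},e_{01,\xx})$ then form the new $\{0,1\}$ loop, with edges to $(w_{\vbl{end}},e_{\vbl{end}})$, $(w_{\vbl{ntF}},e_{\vbl{ntF}})$ and $(w_{\vbl{lp}},e_{01,\xx})$. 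The case $|\xx_i|=0$ is handled by the extra edges $(e_{\xx,lst},e_{01,\xx})$ and $(e_{\xx,lst},e_{\vbl{end}})$, which keep the old last world as the last one. Together with $(w_{\vbl{stg1}},e_{st})$, the $\vbl{ntF}$ world and the $\vbl{end}$ world, the resulting reachable model is isomorphic to $s\block{q_\aa a_i}{q_\bb b_i}\{0,1\}$. In $s\block{q_\aa a_i}{q_\bb b_i}\{0,1\}$, $w_{\vbl{lp}}$ is a dead end; in the product, $(w_{\vbl{lp}},e_{01,\xx})$ is also a dead end, since $w_{\vbl{lp}}$ has no successors. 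There are two copies, one for each $\xx$, but they are bisimilar, so they merge under bisimulation.

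The main obstacle is making sure no unintended pairs are reachable. Examples are $(w_{\vbl{bt},\xx},e_\xx)$, which is excluded only if $\logicize{last}$ holds there, since $w_{\vbl{bt},\xx}$ satisfies $\xx$. Another is $(w_{\vbl{bt},\xx},e_{\xx,lst})$, or some world pairing with several events. I would settle these by checking that $w_{\vbl{bt},\xx}$ satisfies $\bar K\vbl{lp}$, hence $\neg\logicize{last}$. Consequently, $(w_{\vbl{bt},\xx},e_\xx)$ does exist, but it is unreachable, because the only predecessors of $e_\xx$-pairs are $e_s$- or $e_\xx$-pairs whose worlds never reach $w_{\vbl{bt},\xx}$ before the last world, and the last world pairs with $e_{\xx,lst}$ rather than $e_\xx$. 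Unreachable worlds do not affect bisimilarity, which completes the argument.
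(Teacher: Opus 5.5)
Your method matches the paper's: compute the product and exhibit a functional bisimulation onto $s\block{q_\aa a_i}{q_\bb b_i}\{0,1\}$. The paper does this by splitting the product into $WE_1$, $WE_2(\xx)$, $WE_3(\xx)$ and checking crossing edges. Your handling of $(w_{\vbl{bt},\xx},e_\xx)$ by unreachability is correct, and more careful than the paper's item~1, which wrongly places every $E_1$-pair in $W_1$.

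However, your trace breaks at the new last world whenever $\xx_i\neq\varepsilon$. That world is the pair $(w_{\vbl{bt},\xx},e_{\xx,|\xx_i|})$ with $\vbl{bt}=\xx_i[|\xx_i|]$. Three facts combine:
\begin{itemize}
\item its first component is a loop world, and $w_{\vbl{bt},\xx}\,R\,w_{\vbl{lp}}$ in $s$;
\item its event has the edge to $e_{01,\xx}$, which you use;
\item as you tabulated yourself, $w_{\vbl{lp}}\models\pre(e_{01,\xx})=\logicize{loop}_\xx\lor\vbl{lp}$.
\end{itemize}
So $(w_{\vbl{lp}},e_{01,\xx})$ is also a successor of the new last world, besides the two loop pairs and the $\vbl{end}$/$\vbl{ntF}$ pairs you list. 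In the target, $w_{\xx,|q_\xx\xx_i|}$ has no $\vbl{lp}$-successor. Hence the claimed isomorphism fails, and so does any bisimulation: $\bar{K}^{|q_\xx\xx_i|+1}(\xx\land\bar{K}\vbl{lp})$ holds at the root of $s\times\ad_i$ but not at the root of the target. Your check that $\logicize{loop}_\xx$ fails at the last world concerns only the old last world of $s$. At the new one, $\logicize{last}$ fails and $\logicize{loop}_\xx$ holds, which would also derail the next application of any $\ad_{i'}$.

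The paper's proof has the same blind spot. In its case for $WE_2(\xx)$ it lists only $(w_{\vbl{bt},\xx},e_{01,\xx})$ and the $\vbl{end}$/$\vbl{ntF}$ pairs as successors leaving $WE_2(\xx)$, yet $(w_{\vbl{lp}},e_{01,\xx})\in WE_3(\xx)$ is one as well. So the statement does not hold for $\ad_i$ as defined, and no better argument can close the gap; the action itself must change. Dropping the edge $(e_{\xx,|\xx_i|},e_{01,\xx})$, as the displayed definition of $R_i$ literally does, does not help either: the new loop then becomes unreachable from the last world.

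A simple repair is as follows:
\begin{itemize}
\item give $e_{01,\xx}$ the precondition $\logicize{loop}_\xx$;
\item add a separate event $e_{\vbl{lp}}$ with precondition $\vbl{lp}$, accessible only from the events $e_{01,\xx}$.
\end{itemize}
Then the successors of the new last world are exactly the two loop pairs, $(w_{\vbl{end}},e_{\vbl{end}})$ and $(w_{\vbl{ntF}},e_{\vbl{ntF}})$. Your trace then goes through, with $(w_{\vbl{lp}},e_{\vbl{lp}})$ in place of $(w_{\vbl{lp}},e_{01,\xx})$.
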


\begin{proof}
    Let $((W,R,V),(w_{\vbl{root}},e_s)) = s \times \ad_i$. We show that $s \times \ad_i \sim s\block{q_\aa a_i }{q_\bb b_i }\{ 0,1 \}$.
  For simplicity, we refer to all accessibility relations as $R$.
                                                                                                                            
  Let $W_1 = \{w_{\vbl{root}}, w_{\vbl{stg1}}, w_{\vbl{ntF}}, w_{\vbl{end}}, w_{\xx} \mid \xx \in \{\aa,\bb\} \} \cup \{ w_{\xx,j} \mid \xx \in \{\aa,\bb\}, 0 < j \leq |q_\xx| \}$;
  and 
  let $W_2 (\xx) = \{ w_{\vbl{bt},\xx}, w_{\vbl{lp}} \}.$ We note that $W \block{q_\aa}{q_\bb} = W_1 \cup W_2$.

  Let $E_1 = \{ e_s, e_{st}, e_{\vbl{ntF}}, e_{\vbl{end}}, e_\xx, e_{\xx,lst} \mid \xx \in \{\aa,\bb\} \}$; 
  $E_2(\xx) = \{ e_{\xx,j} \mid  0 < j \leq |\xx_i|\}$; and 
  $E_3(\xx) = \{ e_{01,\xx}  \}$. $E_1, E_2(\aa), E_2(\bb), E_3(\aa), E_3(\bb)$ partition $E_i$.

  We observe that for every $w \in W\block{q_\aa}{q_\bb}$ and $e \in E_i$: 
  \begin{enumerate}
    \item if $w \models \pre(e)$ and $e \in E_1$, then $w \in W_1$;
    \item if $w \models \pre(e)$ and $e \in E_2(\xx)$, then $w \in W_2(\xx)$ (specifically, $w = w_{0,\xx}$ or $w = w_{1,\xx}$);
    \item if $w \models \pre(e)$ and $e \in E_3(\xx)$, then $w \in W_2(\xx)$.
    \item if $w \in W_1$, then there exists a unique $e(w) \in E$, such that $w \models \pre(e(w))$, and furthermore, $\pre(w) \in E_1$ for every $w \in W_1$;
    \item if $e \in E_2 (\xx)$, then there exists a unique $w(e) \models \pre(e)$, and furthermore, $w(e) \in W_2(\xx)$;
    \item if $w \in W_2(\xx)$ and $e \in E_3(\xx)$, then $w \models \pre(e)$.
  \end{enumerate}
  Let $WE_1 = \{ (w,e(w)) \mid w \in W_1 \}$; $WE_2(\xx) = \{ (w(e),e) \mid e \in E_2(\xx) \}$; and $WE_3(\xx) = W_2(\xx) \times E_3(\xx)$.
  From items 1, 2, 3, we see that the states of $s \times \ad_i$ are exactly $WE_1 \cup WE_2(\aa) \cup WE_2(\bb) \cup WE_3(\aa) \cup WE_3(\bb)$.
  
  It is not hard to observe that $(s \times \ad_i |_{WE_1}, (w,e(w)) \sim (s |_{W_1}, w)$ for every $w \in W_1$ (the two structures are actually isomorphic through the projection relation). But then, $s |_{W_1} = s \block{q_\aa \aa_i}{q_\bb \bb_i}\{ 0,1 \} |_{W_1}$, so 
  \[
  (s \times \ad_i |_{WE_1}, (w,e(w)) \sim (s \block{q_\aa \aa_i}{q_\bb \bb_i}\{ 0,1 \} |_{W_1}, w) 
  .\]

  We also see that for every $(w(e_{\xx,j}),e_{\xx,j}) \in WE_2(\xx)$, 
  \[
  (s \times \ad_i |_{WE_2(\xx)}, (w(e_{\xx,j}),e_{\xx,j})) \sim (s \block{q_\aa \aa_i}{q_\bb \bb_i}\{ 0,1 \} |_{\{ w_{\xx,|q_\xx|+h} \mid 0 < h \leq |\xx_i|\}},  w_{\xx,|q_\xx|+j}).
  \]
  Let $w_\times(e_{\xx,j}) = w_{\xx,|q_\xx|+j}$ for each $e_{\xx,j} \in E_2(\xx)$.

  Finally, 
  \[(s \times \ad_i |_{WE_3(\xx)},(w,e_{01,\xx}))) \sim (s |_{W_2(\xx)},w)  = (s \block{q_\aa \aa_i}{q_\bb \bb_i}\{ 0,1 \} |_{W_2(\xx)},w)
  \] 
  for every $w \in W_2(\xx)$.

  Let $\R = \{  
    ((w,e(w)) , w ) \mid w \in W_1
  \}
  \cup
  \{
    ((w(e),e),w_\times(e)) \mid e \in E_2(\xx)  , ~ \xx \in \{\aa,\bb\}
  \}
  \cup 
  \{
    ((w,e_{01,\xx}),w) \mid w \in W_2(\xx), ~ \xx \in \{\aa,\bb\}
  \}$.

  To verify that $\R$ is a bisimulation relation, from the above, it suffices to verify the accessibility pairs that cross between the sets 
  $WE_1 , WE_2(\aa) , WE_2(\bb) , WE_3(\aa) , WE_3(\bb)$ on $s \times \ad_i$, and the accessibility pairs that cross between the sets 
  $W_1 , W_2(\aa) , W_2(\bb), \{ w_{\xx,|q_\xx|+h} \mid 0 < h \leq |\xx_i|\}$ on $s \block{q_\aa \aa_i}{q_\bb \bb_i}\{0,1\}$.
  But these are few and easy to verify. Observe that $\R$ is a function.

  \begin{description}
  \item [Let $(w,e(w)) \in WE_1$.] From the above, we have that $(w,e(w)) \R w$. If $(w,e(w)) R (u,v) \notin WE_1$, then $w R u \notin W_1$ and $w \in W_1$, so $u = w_{\vbl{bt},\xx}$ for some $\xx \in \{\aa,\bb\}$ and $\vbl{bt} \in \{0,1\}$. Then, $e(w) = e_{\xx,lst}$ and $v = e_{\xx,1}$ or $v = e_{01,\xx}$.
  In the first case, $(u,v) \R w_{\xx,|q_\xx|+1}$ and $w = w_{\xx,|q_\xx|} R w_{\xx,|q_\xx|+1}$ in $s \block{q_\aa \aa_i}{q_\bb \bb_i}\{0,1\}$;
  in the second case, $(u,v) \R u$, $\xx_i = \varepsilon$, and $w R u$ in $s \block{q_\aa \aa_i}{q_\bb \bb_i}\{0,1\}$ (as in $s \block{q_\aa}{q_\bb}\{0,1\}$).

  On the other hand, if $w R u \notin W_1$ in $s \block{q_\aa \aa_i}{q_\bb \bb_i}$, then $u = w_{\vbl{bt},\xx}$ and $\xx_i = \varepsilon$, for some $\xx \in \{\aa,\bb\}$ and $\vbl{bt} \in \{0,1\}$; or $u = w_{\xx,|q_\xx|+1}$ for some $\xx \in \{\aa,\bb\}$.
  In the first case, let $(u',v') = (w_{\vbl{bt},\xx},e_{01,\xx})$; in the second case, let $(u',v') = (w(e_{\xx,1}),e_{\xx,1})$.
  In both cases, it is not hard to see that $(w,e(w)) R (u',v') \R u$.

  \item [Let $(w(e),e) \in WE_2(\xx)$.] From the above, we have that $(w,e(w)) \R w_\times(e)$. If $(w(e),e) R (u,v) \notin WE_2(\xx)$, then: 
    \begin{enumerate}
      \item $e = e_{\xx,|\xx_i|}$, $w(e) = w_{01,\xx}$, and $(u,v) = (w_{01,\xx},e_{01,\xx}) \R u$, in which case we also have $w_\times = w_{\xx,|q_\xx|+|\xx_i|} R u$ in $s \block{q_\aa \aa_i}{q_\bb \bb_i}\{0,1\}$; or
      \item $(u,v) = (w_{\vbl{p}},e_{\vbl{p}})$, where $p \in \{ \vbl{end}, \vbl{ntF}\}$, in which case 
            we also have that $w_\times(e) R u$ and $(u,v) \R u$.
    \end{enumerate}
    On the other hand, if $w_\times(e) R u \notin WE_2(\xx)$, then either 
    \begin{enumerate}
      \item $u = w_{01,\xx}$, in which case $w_\times = w_{\xx,|q_\xx|+|\xx_i|}$, $e = e_{\xx,|\xx_i|}$, and $w(e) = w_{01,\xx}$, so let $v = e_{01,\xx}$, and therefore, $(w(e),e) R (u,v) = (w_{01,\xx},e_{01,\xx}) \R u$; or
      \item $u = w_{\vbl{p}}$, where $p \in \{ \vbl{end}, \vbl{ntF}\}$, in which case let  $(u,v) = (w_{\vbl{p}},e_{\vbl{p}})$, so
            we also have that $(w(e),e) R (u,v) \R u$.
    \end{enumerate}

  \item [Let $(w,e_{01,\xx}) \in WE_3(\xx)$.] From the above, we have that $(w,e_{01,\xx}) \R w$. 
        If $(w,e_{01,\xx}) R (u,v) \notin WE_3(\xx)$, then it can only be the case that 
        $(u,v) = (w_{\vbl{p}},e_{\vbl{p}})$, where $p \in \{ \vbl{end}, \vbl{ntF}\}$, and therefore, 
                we have that  $w R u$ and $(u,v) \R u$.
        Similarly, if $w R u \notin W_2(\xx)$, then   it can only be the case that $u = w_{\vbl{p}}$, where $p \in \{ \vbl{end}, \vbl{ntF}\}$, in which case let  $(u,v) = (w_{\vbl{p}},e_{\vbl{p}})$, so
            we also have that  $(w,e_{01,\xx}) R (u,v) \R u$.
  \end{description}

We can conclude that $\R$ is a bisimulation, and therefore the proof of the lemma is complete.
\end{proof}

\begin{corollary}\label[corollary]{cor:adding-blocks}
  Let $\alpha = \ad_{i_1} \ad_{i_2} \cdots \ad_{i_k} \in \{ \ad_i \mid 1 \leq i \leq n \}^*$. 
  Then, $\initial \times \alpha = s\block{a_{i_1} a_{i_2} \cdots a_{i_k}}{b_{i_1} b_{i_2} \cdots b_{i_k}}\{ 0,1\}$.
\end{corollary}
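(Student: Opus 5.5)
The plan is induction on $k$, chaining \Cref{lem:adding-first-block} and \Cref{lem:adding-blocks}. One convention matters throughout: epistemic states are identified up to bisimulation, and bisimilar states stay bisimilar after applying the same action (see \Cref{sec:back}). So each equality below means bisimilarity, and we may replace a state by a bisimilar one before applying the next action.

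\emph{Case $k=0$.} Here $\alpha$ is empty and the claim reads $\initial = s\block{\varepsilon}{\varepsilon}\{0,1\}$. This is false: the paper observes that $\initial$ has the extra state $w_{\vbl{empty}}$, and the root of $\initial$ satisfies $\bar{K}\vbl{empty}$ while the root of $s\block{\varepsilon}{\varepsilon}\{0,1\}$ does not. The corollary should therefore be read for $k \geq 1$, which is the case PCP needs since matches are non-empty. I would state this restriction explicitly at the start of the proof.

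\emph{Base case $k=1$.} By \Cref{lem:adding-first-block}, $\initial \times \ad_{i_1} = s\block{\varepsilon}{\varepsilon}\{0,1\} \times \ad_{i_1}$. By \Cref{lem:adding-blocks} with $q_\aa = q_\bb = \varepsilon$, this equals $s\block{a_{i_1}}{b_{i_1}}\{0,1\}$. Definedness holds because the precondition $\vbl{root} \land \bar{K}\vbl{stg1}$ of $e_s$ is true at $w_{\vbl{root}}$ in $\initial$.

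\emph{Inductive step.} Assume $\initial \times \ad_{i_1}\cdots\ad_{i_k} = s\block{a_{i_1}\cdots a_{i_k}}{b_{i_1}\cdots b_{i_k}}\{0,1\}$ for some $k \geq 1$. The designated state $w_{\vbl{root}}$ of this state satisfies $\vbl{root} \land \bar{K}\vbl{stg1}$, so $\ad_{i_{k+1}}$ applies. By bisimulation invariance we may apply it to the right-hand side, and \Cref{lem:adding-blocks} with $q_\aa = a_{i_1}\cdots a_{i_k}$ and $q_\bb = b_{i_1}\cdots b_{i_k}$ yields $s\block{a_{i_1}\cdots a_{i_{k+1}}}{b_{i_1}\cdots b_{i_{k+1}}}\{0,1\}$.

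\emph{Main obstacle.} None of these steps is substantive. The only points needing care are the invalid $k=0$ case, the explicit check that each action actually applies (so that the sequential product is defined), and the fact that the equalities are up to bisimulation.
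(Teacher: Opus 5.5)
Your proof is correct and is the intended argument: the paper gives no separate proof, and the corollary is the evident induction that uses \Cref{lem:adding-first-block} to replace $\initial$ by $s\block{\varepsilon}{\varepsilon}\{0,1\}$ before the first action and \Cref{lem:adding-blocks} at every step, with preservation of bisimilarity under product update justifying each substitution. Your caveat about $k=0$ is also right, since the paper itself notes $\initial \neq s\block{\varepsilon}{\varepsilon}\{0,1\}$, and the gap is harmless for \Cref{thm:reduction-K}: $\initial \not\models \varPhi_G$, and $\nxtstg$ does not apply to $\initial$ because at $w_{\vbl{root}}$ its precondition reduces to $K\neg\vbl{empty} \land \bar{K}\vbl{stg1}$.
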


        \subsubsection{Moving to the Second Stage}
        We define epistemic action 
        \[\nxtstg = 
                (\{e_{nx}
                \}, \{(e_{nx},e_{nx})
                \}, e_{nx})
        ,
        \text{ where }
        \]
            \begin{align*}
      pre(e_{nx}) &= 
        (K \neg \vbl{empty} \land \bar{K} \vbl{stg1}) \lor 
                \left( 
        (\aa \lor \bb) \land K \neg \vbl{lp} \right)       \lor \vbl{ntF}
                            \end{align*}
                                                                
    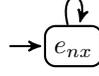
\begin{figure}
      \centering 
    \begin{tikzpicture}[->,>=stealth',shorten >=1pt,auto,node distance=2.8cm, thick]
    
            \node[mynodesh,initial] (es)   at (0, 0)      {$e_{nx}$};

            \draw[->] (es) edge [loop above] (es);
                        
    \end{tikzpicture}
    \caption{The epistemic action $\nxtstg$ corresponds to moving the plan to the second stage.}
    \end{figure}

        Action $\nxtstg$ removes the states $w_{\vbl{stg1}}, w_{\vbl{end}}, w_{0}(\xx), w_{1}(\xx), w_{\vbl{lp}}$ from $s\block{q_\aa}{q_\bb}\{0,1\}$, 
                initiating the second stage of the plan. 

        \begin{lemma}\label[lemma]{lem:next-stage_short}
          Let $s = s\block{q_\aa }{q_\bb }\{ 0,1 \}$.  
          Then, $s \times \nxtstg = s\block{q_\aa }{q_\bb }$.
        \end{lemma}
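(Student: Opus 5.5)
Since $\nxtstg$ has a single event $e_{nx}$ with a reflexive loop, the product $s \times \nxtstg$ is, up to the projection $(w,e_{nx}) \mapsto w$, just the submodel of $s = s\block{q_\aa}{q_\bb}\{0,1\}$ induced by the states satisfying $\pre(e_{nx})$, pointed at $w_{\vbl{root}}$. Accessibility is inherited exactly between surviving states, and valuations are unchanged. So the plan is to check, state by state, which states of $W'\block{q_\aa}{q_\bb}$ satisfy
\[
(K \neg \vbl{empty} \land \bar{K} \vbl{stg1}) \lor \left( (\aa \lor \bb) \land K \neg \vbl{lp} \right) \lor \vbl{ntF}.
\]
I would then observe that the surviving submodel is literally $s\block{q_\aa}{q_\bb}$, so the lemma holds with identity (hence bisimulation) rather than requiring a nontrivial bisimulation as in \Cref{lem:adding-blocks}.

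The key steps, in order, are the following.
\begin{enumerate}
\item $w_{\vbl{root}}$ survives. It has no $\vbl{empty}$-successor, since $w_{\vbl{empty}} \notin W'\block{q_\aa}{q_\bb}$, and it has the successor $w_{\vbl{stg1}}$, so the first disjunct holds.
\item $w_\aa$, $w_\bb$ and every $w_{\xx,j}$ survive, since each satisfies $\xx$ and has no $\vbl{lp}$-successor. Here one must check that the $w_{\xx,j}$ are labelled with $\xx$; I will read the intended valuation as including the branch letter. If instead they carry only $q_\xx[j]$, then the first disjunct is what must be used, and it fails, so this is the point to scrutinise. Under the intended reading it works because only $w_{0,\xx}$, $w_{1,\xx}$ point to $w_{\vbl{lp}}$.
\item $w_{\vbl{ntF}}$ survives by the third disjunct.
\item $w_{\vbl{stg1}}$, $w_{\vbl{end}}$ and $w_{\vbl{lp}}$ have no successors, satisfy neither $\aa$ nor $\bb$ nor $\vbl{ntF}$, and have no $\vbl{stg1}$-successor, so all three disjuncts fail.
\item $w_{0,\xx}$ and $w_{1,\xx}$ satisfy $\xx$ but see $w_{\vbl{lp}}$, and they have no $\vbl{stg1}$-successor, so they are deleted.
\end{enumerate}

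Removing exactly $w_{\vbl{stg1}}, w_{\vbl{end}}, w_{\vbl{lp}}, w_{\vbl{bt},\xx}$ leaves precisely $W\block{q_\aa}{q_\bb}$. The restricted relation is $R'\block{q_\aa}{q_\bb}$ minus all edges touching deleted states, which is $R\block{q_\aa}{q_\bb}$, since every added edge in the definition of $R'$ has an endpoint among the deleted states. Valuations agree. Hence $s \times \nxtstg$ is isomorphic to $s\block{q_\aa}{q_\bb}$ via the projection, and the two are in particular bisimilar.

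The main obstacle is only step 2, namely making sure the path states $w_{\xx,j}$ satisfy the precondition. Everything else is a routine check of the three disjuncts against the explicit edge list.
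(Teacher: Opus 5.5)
Your proposal is correct and is the paper's own argument. Because $\nxtstg$ consists of a single reflexive event, the product is the submodel of $s\block{q_\aa}{q_\bb}\{0,1\}$ induced by the states satisfying $\pre(e_{nx})$; the paper simply asserts that these are exactly the states of $s\block{q_\aa}{q_\bb}$, which you verify disjunct by disjunct.

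Your caveat in step~2 is well founded, and your reading is the intended one. The definition of $s\block{q_\aa}{q_\bb}$ literally sets $V(w_{\xx,j}) = \{q_\xx[j]\}$, and under that valuation the path states would fail $\pre(e_{nx})$. However, the states that $\ad_i$ appends are copies of $w_{0,\xx}, w_{1,\xx}$, so they carry $\xx$. This matches the later sections' definitions and is also needed for the paper's claim that $w_{\xx,|q_\xx|}$ satisfies $\logicize{tail}$.
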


        \begin{proof}
          Simply observe that the states left in $s\block{q_\aa }{q_\bb }$ are exactly the ones that satisfy the precondition of $e_{nx}$.
                                                                                                                                                                                                                                                                                                                                                                                                                                                                                                                                                                                                      \end{proof}
        
        Similarly, it is straightforward to see that the following lemma holds.

        \begin{lemma}\label[lemma]{lem:next-stage_no_addK}
          Let $s$ be an epistemic state where $\nxtstg$ applies.  
          Then, $s \times \nxtstg \models K \neg \vbl{stg1}$.
        \end{lemma}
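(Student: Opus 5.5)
We need to show: if $\nxtstg$ applies to $s$, then in $s \times \nxtstg$ every successor of the designated state fails to satisfy $\vbl{stg1}$. The plan is to examine the precondition of $e_{nx}$ directly, using only the definition of the product and the fact that valuations are inherited.

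First I would record what each disjunct of $\pre(e_{nx})$ forces about the world where it holds. Write $s \times \nxtstg = ((W',R',V'),(w,e_{nx}))$. Since $\nxtstg$ has the single event $e_{nx}$, the worlds of the product are exactly the pairs $(u,e_{nx})$ with $(s,u) \models \pre(e_{nx})$. Its relation is $(u_1,e_{nx}) R' (u_2,e_{nx})$ if and only if $u_1 R u_2$, and $V'(u,e_{nx}) = V(u)$. So it suffices to show that no world $u$ with $(s,u) \models \vbl{stg1}$ satisfies $\pre(e_{nx})$. Then no world of $W'$ is labelled $\vbl{stg1}$, and $K\neg\vbl{stg1}$ holds at the designated world, indeed at every world.

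Next I would go through the three disjuncts at a world $u$ with $\vbl{stg1}\in V(u)$.
\begin{itemize}
\item The disjunct $\vbl{ntF}$ and the disjunct $(\aa\lor\bb)\land K\neg\vbl{lp}$ require $u$ to be labelled $\vbl{ntF}$, $\aa$ or $\bb$.
\item The disjunct $K\neg\vbl{empty}\land\bar K\vbl{stg1}$ is the problematic one: a world labelled $\vbl{stg1}$ could in principle also have a $\vbl{stg1}$-successor and no $\vbl{empty}$-successor.
\end{itemize}

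The main obstacle is therefore that for an arbitrary epistemic state $s$ the claim is false as stated. In a model where a $\vbl{stg1}$-world has a loop, that world survives and $K\neg\vbl{stg1}$ fails. Likewise, a world labelled both $\vbl{stg1}$ and $\aa$ satisfying $K\neg\vbl{lp}$ would survive. So I would interpret the lemma, as the paper implicitly does, for epistemic states reachable from $\initial$ by the given actions. In those states $\vbl{stg1}$-worlds carry no other propositional variable and have no successors. I would establish this invariant as a preliminary: it holds in $\initial$, where $w_{\vbl{stg1}}$ has label $\{\vbl{stg1}\}$ and no outgoing edges. It is preserved by every action, because the valuation is copied from the underlying world, and a successor of $(u,f)$ in a product is always a pair $(u',f')$ with $u R u'$.

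Given the invariant, at such a $u$ the formula $\bar K\vbl{stg1}$ is false because $u$ has no successors. Also $u$ satisfies none of $\vbl{ntF}$, $\aa$ or $\bb$, since its label is exactly $\{\vbl{stg1}\}$. Hence $\pre(e_{nx})$ fails at $u$, and the conclusion follows by the first paragraph.
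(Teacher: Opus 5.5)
Your argument is correct and is essentially the one the paper leaves implicit when it calls the lemma straightforward: the precondition of $e_{nx}$ fails at every $\vbl{stg1}$-world, so no such world survives the update. You are also right that the statement is false for arbitrary $s$ (a $\vbl{stg1}$-world with a self-loop, accessible from a designated world satisfying $\pre(e_{nx})$, is a counterexample); your invariant for states reachable from $\initial$, which survives any product update because labels are copied and successors of $(u,f)$ lie over successors of $u$, is exactly the restriction needed, since the paper only uses the lemma on states of the form $\initial\times\alpha$.
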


        We call the epistemic states of the form $\initial \times \alpha$, where $\alpha \in \{\ad_i \mid 1 \leq i \leq n\}^*$, first-stage states. 

\begin{lemma}\label[lemma]{lem:K-is-preservedK}
  Let $s$ be an epistemic state, $f$ an epistemic action, and $K \varphi$ an epistemic formula of modal depth $1$.
  If $s \models K \varphi$, then $s \times f \models K \varphi$.
\end{lemma}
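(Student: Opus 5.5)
The argument rests on two facts: a product update with no postconditions leaves the valuation unchanged, and the successors of a product state are pairs whose first component is a successor of the original world. Since $K\varphi$ has modal depth $1$, $\varphi$ is propositional. So the truth of $\varphi$ at a pair depends only on the underlying world of the original model.

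Write $s = ((W,R,V),w)$, $f = (\A,e)$ with $\A = (E,Q,\pre)$, and $s \times f = ((W',R',V'),(w,e))$. I fix the agent $i$ with $K = K_i$; in the multi-agent setting the same argument works for each $K_i$ separately. The lemma implicitly assumes that $f$ applies to $s$, so $s \times f$ is defined.

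\textbf{Step 1: propositional formulas are preserved.} I show by a routine induction on propositional $\psi$ that $((W',R',V'),(u,g)) \models \psi$ iff $((W,R,V),u) \models \psi$, for every $(u,g) \in W'$. The base case $p \in \prop$ uses $V'(u,g) = V(u)$. The cases for $\false$, $\neg$ and $\land$ are immediate, and no modal case arises.

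\textbf{Step 2: successors project to successors.} Suppose $(w,e) R'_i (u,g)$. By the definition of $R'_i$, this gives $(w,u) \in R_i$, so $u$ is an $R_i$-successor of $w$ in $s$. Since $s \models K_i\varphi$, we get $(s,u) \models \varphi$. By Step 1, $(u,g) \models \varphi$ in $s \times f$. As $(u,g)$ was an arbitrary $R'_i$-successor of $(w,e)$, it follows that $s \times f \models K_i\varphi$.

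No step is a real obstacle; the only point needing care is the restriction to modal depth $1$. It is what makes $\varphi$ propositional, so that no induction over modal formulas is required. The claim would in fact fail for nested modalities: $K\bar{K}p$ can become false once the update removes $p$-successors. If $\varphi$ were not purely propositional but were still a Boolean combination of atoms, Step 1 would still apply as stated.
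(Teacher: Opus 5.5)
Your proof is correct and is essentially the paper's argument: every successor of $(w,e)$ in $s \times f$ has a first component that is a successor of $w$ in $s$, and because $\varphi$ is propositional and the update copies valuations, $\varphi$ carries over from that world to the pair. Your closing remark about a formula that is ``not purely propositional but still a Boolean combination of atoms'' is vacuous, since those are the same thing, but it does not affect the proof.
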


\begin{proof}
  Let $s = ((W,R,V),w)$ and $f = (E,R',e)$. It suffices to prove that for every $w R w'$ and $e R' e'$, $(s\times f,(w',e')) \models \varphi$. But $\varphi$ is propositional and therefore,  $(s\times f,(w',e')) \models \varphi$ if and only if $((W,R,V),w') \models \varphi$. The fact that $w R w'$ and $s \models K \varphi$ yield that $((W,R,V),w') \models \varphi$ and the proof is complete. 
\end{proof}

\begin{lemma}
  For every epistemic state $s 
    $ and $i \leq n$, neither $\ad_i$ nor $\nxtstg$ apply to $s \times \nxtstg$. Furthermore, if $\ad_i$ (and $\nxtstg$) does not apply to $s$, then for every epistemic action $f$ that applies to $s$, neither $\ad_i$ nor $\nxtstg$ apply to $s \times f$.
\end{lemma}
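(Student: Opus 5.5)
The plan is to isolate, for each of $\ad_i$ and $\nxtstg$, a formula of the form $K\psi$ with $\psi$ propositional whose truth at an epistemic state blocks the action. We then transport it with \Cref{lem:K-is-preservedK}, since such formulas survive every action application.

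For the first claim, \Cref{lem:next-stage_no_addK} gives $s \times \nxtstg \models K\neg\vbl{stg1}$. The designated event of $\ad_i$ is $e_s$, with precondition $\vbl{root} \land \bar{K}\vbl{stg1}$, and this fails whenever $K\neg\vbl{stg1}$ holds. So $\ad_i$ does not apply to $s \times \nxtstg$.

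For $\nxtstg$, we examine the three disjuncts of $\pre(e_{nx})$ at the designated state of $s\times\nxtstg = ((W',R',V'),(w,e_{nx}))$. The first disjunct needs $\bar{K}\vbl{stg1}$, which is again excluded by $K\neg\vbl{stg1}$. The other two, $(\aa\lor\bb)\land K\neg\vbl{lp}$ and $\vbl{ntF}$, are propositionally about the designated state, whose valuation is $V(w)$. Since $\nxtstg$ applied to $s$, $w$ satisfied $\pre(e_{nx})$, so these disjuncts cannot be excluded unconditionally. If $w$ satisfies $\vbl{ntF}$ (or $\aa$ with no $\vbl{lp}$-successor), then $\nxtstg$ does apply again to $s\times\nxtstg$. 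The statement is therefore only true for the states the paper actually has in mind, namely those whose designated state satisfies $\vbl{root}$ and none of $\aa,\bb,\vbl{ntF}$, as in every state reachable from $\initial$. I would make this implicit restriction explicit: every reachable state has designated world labelled $\{\vbl{root}\}$, since valuations are preserved and designated events do not change the underlying world. Under this restriction, the last two disjuncts fail because they require $\aa$, $\bb$ or $\vbl{ntF}$ at the designated world, and the first fails by $K\neg\vbl{stg1}$. This bookkeeping of what "every epistemic state" should mean is the main obstacle; everything else is routine.

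For the second claim, suppose $\ad_i$ does not apply to $s$, whose designated world satisfies $\vbl{root}$. Then $s \models K\neg\vbl{stg1}$, and \Cref{lem:K-is-preservedK} gives $s\times f \models K\neg\vbl{stg1}$ for every applicable $f$. Since the designated world's valuation is unchanged, the designated world of $s\times f$ still satisfies $\vbl{root}$ and not $\aa,\bb,\vbl{ntF}$. The same argument as above then shows that neither $\ad_i$ nor $\nxtstg$ applies to $s\times f$.

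The parenthetical hypothesis ``and $\nxtstg$'' does not give $K\neg\vbl{stg1}$ by itself. Under the root restriction, failure of $\pre(e_{nx})$ means $\neg(K\neg\vbl{empty}\land\bar{K}\vbl{stg1})$. Combined with failure of $\ad_i$ it yields $K\neg\vbl{stg1}$, and from there the argument proceeds identically.
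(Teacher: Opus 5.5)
Your argument follows the paper's route: $K\neg\vbl{stg1}$ blocks $e_s$, \Cref{lem:next-stage_no_addK} supplies it after $\nxtstg$, and \Cref{lem:K-is-preservedK} carries it forward. Your criticism of the statement is correct. The paper's proof asserts that $\ad_i$ fails exactly when $K\neg\vbl{stg1}$ holds, but the condition is really $\neg\vbl{root}\lor K\neg\vbl{stg1}$. It also never examines the disjuncts $(\aa\lor\bb)\land K\neg\vbl{lp}$ and $\vbl{ntF}$ of $\pre(e_{nx})$. So the lemma fails as written, e.g.\ for a designated $\vbl{ntF}$-world. Under your restriction (designated world labelled $\{\vbl{root}\}$), your proof of the second claim is complete. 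It only uses preservation of $K\neg\vbl{stg1}$ and of the designated valuation.

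The gap is in the first claim: your restriction does not repair it. The lemma you cite there, \Cref{lem:next-stage_no_addK}, is false on such states as well. Take worlds $w,u$ with $V(w)=\{\vbl{root}\}$, $V(u)=\{\vbl{stg1}\}$, designated world $w$, and $R=\{(w,u),(u,u)\}$.
Both $w$ and $u$ satisfy $K\neg\vbl{empty}\land\bar{K}\vbl{stg1}$, so $\nxtstg$ applies and $(u,e_{nx})$ survives.
Hence $(w,e_{nx})\models\vbl{root}\land\bar{K}\vbl{stg1}\land K\neg\vbl{empty}$, and both $\ad_i$ and $\nxtstg$ apply to $s\times\nxtstg$.

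What actually yields $K\neg\vbl{stg1}$ after $\nxtstg$ is not the root's label. It is that the $\vbl{stg1}$-successors of the root fail $\pre(e_{nx})$, and that is a property of reachable states. Before $\nxtstg$, only the $\ad_i$ apply, since $\remv_{\vbl{bt}}$ is blocked at the root while $\bar{K}\vbl{stg1}$ holds. Also, $\nxtstg$ fails at $\initial$ because of $w_{\vbl{empty}}$. So by \Cref{cor:adding-blocks}, the reachable states where the first claim is non-vacuous are the $s\block{q_\aa}{q_\bb}\{0,1\}$. There the unique $\vbl{stg1}$-world $w_{\vbl{stg1}}$ has no successors, so it satisfies no disjunct of $\pre(e_{nx})$ and is deleted, which gives $K\neg\vbl{stg1}$ directly. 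Your second-claim argument then propagates non-applicability to all later states. State the lemma for states reachable from $\initial$ (or add the needed condition on $\vbl{stg1}$-successors to your invariant), and prove the first claim this way rather than via \Cref{lem:next-stage_no_addK}.
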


\begin{proof}
  Observe that 
  $\ad_i$ does not apply to an epistemic state $s'$ if and only if 
  $s' \models K \neg \vbl{stg1}$.
  Furthermore, by \Cref{lem:next-stage_no_addK}, 
  $s \times \nxtstg  \models K \neg \vbl{stg1}$, which implies the first statement.
  The second statement is a consequence of \Cref{lem:K-is-preservedK}.  
\end{proof}

\begin{corollary}
  Let 
  $s = \initial \times \alpha$, where $\alpha$ is a sequence of epistemic actions 
    that includes $\nxtstg$. Then, neither $\nxtstg$ nor $\ad_i$ 
    apply to $s$. 
    \end{corollary}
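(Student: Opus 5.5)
We argue by induction on the length of the suffix of $\alpha$ that follows the first occurrence of $\nxtstg$, using the preceding lemma as the engine. Write $\alpha = \beta\, \nxtstg\, \gamma$, where $\beta$ contains no occurrence of $\nxtstg$. Since $s = \initial \times \alpha$ is assumed to be defined, every intermediate state $\initial \times \beta$, $\initial \times \beta\,\nxtstg$, and $\initial \times \beta\,\nxtstg\,\gamma'$ for each prefix $\gamma'$ of $\gamma$ is defined. Moreover, each action in the sequence applies to the state it is applied to.

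For the base case, $\gamma$ is empty. Let $s_0 = \initial \times \beta$. Then $s = s_0 \times \nxtstg$, and the first statement of the preceding lemma gives directly that neither $\ad_i$ (for any $i \leq n$) nor $\nxtstg$ applies to $s$.

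For the inductive step, write $\gamma = \gamma' f$ and let $s' = \initial \times \beta\,\nxtstg\,\gamma'$. By the induction hypothesis, neither $\ad_i$ nor $\nxtstg$ applies to $s'$. The action $f$ applies to $s'$, since $s = s' \times f$ is defined. The second statement of the preceding lemma then yields that neither $\ad_i$ nor $\nxtstg$ applies to $s' \times f = s$. A by-product is that $f$ cannot itself be $\ad_i$ or $\nxtstg$, but the argument does not need this.

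The only delicate point is that the second statement of the preceding lemma has as hypothesis only that $\ad_i$ does not apply; the conclusion about $\nxtstg$ is derived from it. This is fine, because its proof shows that non-applicability of $\ad_i$ is equivalent to $K\neg\vbl{stg1}$, and this formula is preserved by \Cref{lem:K-is-preservedK}. It also blocks $\nxtstg$: the disjunct $K\neg\vbl{empty}\land\bar K\vbl{stg1}$ fails under $K\neg\vbl{stg1}$. The other disjuncts of $\pre(e_{nx})$ require $\aa\lor\bb$ or $\vbl{ntF}$ at the designated state. The designated state remains the root world, which satisfies neither, because the valuation is inherited through products. I would note this explicitly, since the preceding lemma relies on it implicitly. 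Otherwise the proof is a routine induction.
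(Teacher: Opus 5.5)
Your proof is correct and matches the paper's approach: the paper states this corollary without proof immediately after the lemma, and it follows by exactly your induction on the suffix after the first $\nxtstg$, using the lemma's first statement for the base case and its second statement for the inductive step. Your explicit observation that the designated world always has valuation $\{\vbl{root}\}$ is a worthwhile addition, since two steps the paper leaves unjustified depend on it: the claimed equivalence between non-applicability of $\ad_i$ and $K\neg\vbl{stg1}$, and the failure of the $(\aa\lor\bb)$ and $\vbl{ntF}$ disjuncts of $\pre(e_{nx})$.
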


Therefore, epistemic action $\nxtstg$ can only be applied once in a plan, and it marks the end of the applications of the actions $\ad_i$.

        \subsubsection{Verifying one Symbol at Each Step}
        Finally, for each $\vbl{bt} \in \{0,1\}$, we define the epistemic action 
        \[ \remv_{\vbl{bt}} = (E_{\vbl{bt}},R_{\vbl{bt}},e^{\vbl{bt}}), ~~\text{ where}  \]
        \begin{align*}
                E_{\vbl{bt}} =& \{ 
                                                  e^{\vbl{bt}}
                    ,  e^{\vbl{bt}}_{\vbl{ntF}}, e^{\vbl{bt}}_{\text{fail}} \};\\ 
                        R_{\vbl{bt}} =& \{ 
                                                            (e^{\vbl{bt}},e^{\vbl{bt}}), 
                                                                                    (e^{\vbl{bt}},e_{\text{fail}}^{\vbl{bt}}),
            (e^{\vbl{bt}},e_{\vbl{ntF}}^{\vbl{bt}})
                                                \};\\
                                                    pre(e^{\vbl{bt}}) =&  
        (\vbl{root} \land K \neg \vbl{stg1}) \lor 
        ( (\aa \lor \bb 
                ) \land \neg \logicize{tail} )
                ;
        \\ 
                                                                                                        pre(e_{\text{fail}}^{\vbl{bt}}) =& 
          (\logicize{tail} \land \neg \vbl{bt}
        ) 
        \lor 
        \vbl{failed}
                                                        ; ~~~ \text{ and}
                \\ 
        pre(e_{\vbl{ntF}}^{\vbl{bt}}) =& \vbl{ntF}
        .
            \end{align*}

        For each $\vbl{bt} \in \{0,1\}$, $\remv_{\vbl{bt}}$ only activates during the second stage and if the epistemic state is of the 
    right form, \emph{i.e.}
        $s\block{q_\aa\vbl{bt}}{q_\bb\vbl{bt}}$, the resulting state is $s\block{q_\aa}{q_\bb}$. If the epistemic state is not of the right form, then $e_{\text{fail}}^{\vbl{bt}}$ activates and ensures that a last state can no longer access $w_{\vbl{ntF}}$, thus marking the plan as a failed one.

\begin{figure}
  \centering 
\begin{tikzpicture}[->,>=stealth',shorten >=1pt,auto,node distance=2.8cm, thick]

      \node[mynode,initial] (eloop)  at (3, 0)      {$e^{\vbl{bt}}$};
  \node[mynode] (efail)   at (6, 1)    {$e_{\text{fail}}^{\vbl{bt}}$};
  \node[mynode] (entf)   at (6, -1)    {$e_{\vbl{ntF}}^{\vbl{bt}}$};

  \node (pre-es)   at (0, -0.6)      {$(\vbl{root} \land K \neg \vbl{stg1}) \lor 
  ((\aa \lor \bb 
    ) \land \neg \logicize{tail})
  $};
        \node (pre-ea)   at (8, 0.4)    {$(\logicize{tail} \land \neg \vbl{bt}
        ) 
        \lor 
          \vbl{failed}
                                                  $};
  \node (pre-fl)   at (6.3, -1.5)    {$\vbl{ntF}
  $};
  
      \draw[->] (eloop) edge [loop above] (eloop);
  \draw[->] (eloop) -- (efail);
  \draw[->] (eloop) -- (entf);
    
\end{tikzpicture}
\caption{The epistemic action $\remv_{\vbl{bt}}$ corresponds to removing the bit $\vbl{bt}$ from the end of the encoding.}
\end{figure}

\begin{lemma}\label[lemma]{lem:remove-only-at-the-end}
  Let $s$ be an epistemic state where $\remv_{\vbl{bt}}$ applies.  
  Then, 
    neither 
  $\ad_i$ nor $\nxtstg$ applies to $s$.
\end{lemma}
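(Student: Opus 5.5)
The plan is to compare the precondition of the designated event $e^{\vbl{bt}}$ of $\remv_{\vbl{bt}}$ with the preconditions of the designated events $e_s$ of $\ad_i$ and $e_{nx}$ of $\nxtstg$. In every case the designated world must satisfy the relevant precondition.

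The statement cannot hold for an arbitrary pointed model. At a designated world satisfying $\aa \land \neg\logicize{tail} \land K\neg\vbl{lp}$, both $\remv_{\vbl{bt}}$ and $\nxtstg$ apply. So I will prove the lemma for the epistemic states that actually arise in plans, namely $s = \initial \times \alpha$ for some sequence $\alpha$ of actions. This is the reading needed for the plan argument.

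\textbf{Step 1: the designated world of a reachable state.} By induction on $\alpha$, the designated world of $\initial \times \alpha$ has the form $(w_{\vbl{root}}, e, \dots)$. Since valuations are copied from the first component, its valuation is $V_0(w_{\vbl{root}}) = \{\vbl{root}\}$. Hence at the designated world, $\vbl{root}$ is true and $\aa$, $\bb$, $\vbl{ntF}$ are false.

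\textbf{Step 2: the consequence of $\remv_{\vbl{bt}}$ applying.} Suppose $\remv_{\vbl{bt}}$ applies to $s$. Then $s \models (\vbl{root} \land K\neg\vbl{stg1}) \lor ((\aa\lor\bb)\land\neg\logicize{tail})$. By Step 1, $\aa \lor \bb$ fails at the designated world, so the first disjunct must hold, giving $s \models K\neg\vbl{stg1}$.

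\textbf{Step 3: $\ad_i$ does not apply.} The precondition of $e_s$ is $\vbl{root} \land \bar K\vbl{stg1}$. This contradicts $K\neg\vbl{stg1}$, as already observed in the proof of the preceding lemma.

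\textbf{Step 4: $\nxtstg$ does not apply.} I check the three disjuncts of $\pre(e_{nx})$ in turn. The first contains $\bar K\vbl{stg1}$, which is false by Step 2. The second requires $\aa\lor\bb$, and the third requires $\vbl{ntF}$; both are false at the designated world by Step 1.

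The only step needing any care is the invariant in Step 1. It follows immediately from the definition of $V'$ in the product and from the designated world being the pair of designated components. I expect no real obstacle beyond making the reachability assumption explicit.
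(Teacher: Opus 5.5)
Your proof is correct and uses the paper's idea of comparing the preconditions of the designated events; the paper's proof simply asserts that $\pre(e^{\vbl{bt}})$ is inconsistent with both $\pre(e_s)$ and $\pre(e_{nx})$. Restricting to reachable states is a necessary repair, not a weakening. Taken literally, that inconsistency claim is false: the disjunct $(\aa\lor\bb)\land\neg\logicize{tail}$ of $\pre(e^{\vbl{bt}})$ holds together with $\pre(e_s)$ at a world labelled $\{\vbl{root},\aa\}$ with successors labelled $\vbl{stg1}$ and $0$, and it holds together with $\pre(e_{nx})$ in your example. So the lemma only holds under the hypothesis you make explicit, namely that the designated world of $\initial\times\alpha$ is labelled exactly $\{\vbl{root}\}$, and this is all that \Cref{cor:plan-sequence} needs.
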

\begin{proof}
  The lemma results from the observation that the precondition of $e^{\vbl{bt}}
    $ is consistent with neither of the preconditions for $e_s$ and $e_{nx}$.
\end{proof}

\begin{lemma}\label[lemma]{lem:removing-bits-successfully}
  Let $q_\aa, q_\bb \neq \varepsilon$ and
                    let $s = s\block{q_\aa \vbl{bt}}{q_\bb \vbl{bt}}$. 
    Then, $s \times \remv_{\vbl{bt}} = s\block{q_\aa }{q_\bb }$.
  \end{lemma}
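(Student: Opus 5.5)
Following the pattern of the proof of \Cref{lem:adding-blocks}, I will compute $s \times \remv_{\vbl{bt}}$ state by state and then give an explicit bisimulation (in fact a function) onto $s\block{q_\aa}{q_\bb}$. Write $s = ((W,R,V),w_{\vbl{root}})$ with $W = \{w_{\vbl{root}}, w_\aa, w_\bb, w_{\vbl{ntF}}\} \cup \{ w_{\xx,j} \mid 1 \leq j \leq |q_\xx|+1\}$, where $w_{\xx,|q_\xx|+1}$ carries the final $\vbl{bt}$. The first step is to find which event preconditions hold at which states.
\begin{itemize}
\item $w_{\vbl{root}}$ satisfies $\vbl{root} \land K\neg\vbl{stg1}$, because $s$ has no $\vbl{stg1}$ state. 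So $\pre(e^{\vbl{bt}})$ holds there, and no other precondition does.
\item $w_{\vbl{ntF}}$ satisfies only $\pre(e^{\vbl{bt}}_{\vbl{ntF}})$.
\item By the remark after the shorthand definitions, the only $\logicize{tail}$ states of $s$ are the two last states $w_{\xx,|q_\xx|+1}$. These satisfy $\logicize{tail}$ and $\vbl{bt}$, so $\pre(e^{\vbl{bt}}_{\text{fail}})$ fails there. Here I will check that $\logicize{tail}$ requires $\aa\lor\bb$, and argue that the intended reading labels path states with $\xx$ as well. I flag this labelling point below.
\item The remaining states $w_\xx$ and $w_{\xx,j}$ with $j \leq |q_\xx|$ are not tails. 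The hypothesis $q_\xx \neq \varepsilon$ guarantees that each of them has a successor labelled by a symbol, so each satisfies $\pre(e^{\vbl{bt}})$.
\item Nothing in $s$ satisfies $\logicize{failed}$, as noted before \Cref{lem:failed-states-fail}. So $e^{\vbl{bt}}_{\text{fail}}$ is never enabled.
\end{itemize}

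It follows that the surviving worlds are exactly $(w,e^{\vbl{bt}})$ for $w \in W \setminus \{w_{\vbl{ntF}}, w_{\aa,|q_\aa|+1}, w_{\bb,|q_\bb|+1}\}$, together with $(w_{\vbl{ntF}}, e^{\vbl{bt}}_{\vbl{ntF}})$. The second step is to read off the edges. Since $e^{\vbl{bt}}$ has a self-loop and an edge to $e^{\vbl{bt}}_{\vbl{ntF}}$, every $R$-edge between surviving $e^{\vbl{bt}}$-worlds is kept. Every edge into $w_{\vbl{ntF}}$ becomes an edge into $(w_{\vbl{ntF}},e^{\vbl{bt}}_{\vbl{ntF}})$. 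The edges into the deleted last states disappear. The relation $\R$ sending $(w,e) \mapsto w$ is then an isomorphism onto $s\block{q_\aa}{q_\bb}$. In particular, the new last state $w_{\xx,|q_\xx|}$ keeps its edge to $w_{\vbl{ntF}}$, and its outgoing path edge is removed. So its edge set is exactly that of the last state in $s\block{q_\aa}{q_\bb}$. Valuations are preserved by the definition of product, which completes the bisimulation.

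The main obstacle is the precise bookkeeping of $\logicize{tail}$ versus non-tail states, since this is where $q_\xx \neq \varepsilon$ is used. I must confirm that $w_{\xx,|q_\xx|}$ still satisfies $\pre(e^{\vbl{bt}})$ in $s$, because it sees the symbol state $w_{\xx,|q_\xx|+1}$. I must also confirm that the propositional labels of path states ($\aa$/$\bb$ versus symbols) make the conjunct $(\aa\lor\bb)$ behave as intended. If the literal valuation only assigns $q_\xx[j]$, I would first try treating the path labels as also containing $\xx$, consistently with $w_{\vbl{bt},\xx}$ carrying $\{\vbl{bt},\xx\}$ and with the paper's claim that the $\logicize{tail}$ states are exactly the last ones.
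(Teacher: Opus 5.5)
Your proposal is correct and follows the paper's proof. Like the paper, you classify which event each world enables ($e^{\vbl{bt}}_{\text{fail}}$ nowhere, $e^{\vbl{bt}}_{\vbl{ntF}}$ only at $w_{\vbl{ntF}}$, and $e^{\vbl{bt}}$ at every other world except the two final $\vbl{bt}$-states), and then observe that the projection $(w,e)\mapsto w$ is an isomorphism onto $s\block{q_\aa}{q_\bb}$.

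Your reading of the path valuations as $\{q_\xx[j],\xx\}$ is the one the paper implicitly needs, since it is what $\ad_i$ actually produces. One small correction: the symbol successors of $w_\xx$ and $w_{\xx,j}$ ($j\leq|q_\xx|$) come from the appended $\vbl{bt}$, not from $q_\aa,q_\bb\neq\varepsilon$. So that hypothesis is not actually used in the argument.
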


\begin{proof}
      We first observe that the precondition of $e^{\vbl{bt}}_{\text{fail}}$ is not satisfied by any state in $s\block{q_\aa \vbl{bt}}{q_\bb \vbl{bt}}$; therefore, $e^{\vbl{bt}}_{\text{fail}}$ does not activate in the product update.
  Furthermore, the precondition of $e^{\vbl{bt}}_{\vbl{ntF}}$ is satisfied by $w_{\vbl{ntF}}$ only.
  Finally, every other state in $s\block{q_\aa \vbl{bt}}{q_\bb \vbl{bt}}$, except for $w_{\vbl{ntF}}$, $w_{\aa,|q_\aa|+1}$, $w_{\bb,|q_\bb|+1}$, satisfies the precondition of $e^{\vbl{bt}}$. 
  Let $((W,R,V),(w_{\vbl{root}},e_s^{\vbl{bt}})) = s \times \remv_{\vbl{bt}}$. We show that $s \times \remv_{\vbl{bt}} \sim s\block{q_\aa }{q_\bb }$.
  Let 
    \begin{align*}
      \R = \{& 
      \left((w_{\vbl{root}},e^{\vbl{bt}}) , w_{\vbl{root}} \right), \\
      & \left((w_{\xx},e^{\vbl{bt}}) , w_{\xx} \right), \\
      & \left((w_{\xx,j},e^{\vbl{bt}}) , w_{\xx,j} \right), \\
                  & \left((w_{\vbl{ntF}},e^{\vbl{ntF}}) , w_{\vbl{ntF}} \right)
                                                                  ~~ &
      \mid &~~ 
                  0 < j \leq |q_\xx|, ~ x \in \{\aa,\bb\}
            \}
  \end{align*}
  It is now straightforward to verify that $\R$ is a bisimulation
  --- in fact, $\R$ is an isomorphism between the two models.
        \end{proof}

\begin{definition}\label[definition]{def:failed-state}
Let $s = ((W,R,V),s_0)$ be an epistemic state. We say that $s$ is a \emph{failed state} when there is a sequence of states $s_0 R s_1 R \cdots R s_k$, where $k \geq 1$, such that: 
\begin{enumerate}
  \item
$((W,R,V),s_0) \models \vbl{root} \land K \neg \vbl{stg1}$;
  \item
$((W,R,V),s_1) \models \xx $ for some $\xx \in \{\aa,\bb\}$;
  \item  
  $((W,R,V),s_i) \models \vbl{symb} $ for every $2 \leq i \leq k$;
and 
  \item
$((W,R,V),s_k) \models 
\vbl{failed}$.
\end{enumerate}
\end{definition}

A failed state is the result of applying $\remv_{\vbl{bt}}$ to an epistemic state that is not of the form $s\block{q_\aa \vbl{bt}}{q_\bb \vbl{bt}}$.
In that case, the action $\remv_{\vbl{bt}}$ marks that it failed to remove that last bit from (at least) one of the state's branches. Furthermore, as the following  \Cref{lem:removing-bt-when-hash} demonstrates, once an epistemic state has been marked as a failed one, it will remain marked as such throughout the application of the plan.

\begin{lemma}\label[lemma]{lem:removing-bt-when-hash}
  If $\remv_{\vbl{bt}}$ applies to epistemic state $s$, and
  \begin{itemize}
    \item 
      $s = s\block{q_\aa}{q_\bb}$ and 
            $q_\aa = \varepsilon$ or $q_\bb = \varepsilon$; 
            \item 
      $s = s\block{q_\aa \vbl{bt}'}{q_\bb}$ or $s = s\block{q_\aa }{q_\bb \vbl{bt}'}$, and 
    $\vbl{bt}' \neq \vbl{bt}$; or  
    \item 
      $s$ is a failed state;
  \end{itemize}
  then 
  $s \times \remv_{\vbl{bt}}$ is a failed state.
\end{lemma}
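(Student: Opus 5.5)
In each of the three cases I exhibit an explicit path $(s_0,e^{\vbl{bt}}) R' (s_1,e^{\vbl{bt}}) R' \cdots R' (s_{k'},f)$ in $s \times \remv_{\vbl{bt}}$ that witnesses \Cref{def:failed-state}. The designated world is $(w_{\vbl{root}},e^{\vbl{bt}})$, or $(s_0,e^{\vbl{bt}})$ in general. Two observations are used throughout. First, valuations are copied by product update, so conditions 2 and 3 of the definition (about $\xx$ and $\logicize{symb}$) transfer along the path. Second, condition 1 transfers: $\vbl{root}$ is propositional, and $K\neg\vbl{stg1}$ is preserved by \Cref{lem:K-is-preservedK}. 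So the real work is to make sure every intermediate world on the path survives with event $e^{\vbl{bt}}$, and that the final world satisfies $\logicize{failed}$, i.e.\ it is labelled $\aa$ or $\bb$ and has no $R'$-successor satisfying $\vbl{ntF}$.

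For the first two cases, $s = s\block{q_\aa}{q_\bb}$, and I take the branch $\xx$ that is defective. In the first bullet it is a branch with $q_\xx=\varepsilon$. In the second bullet it is the branch ending in $\vbl{bt}'\neq\vbl{bt}$. The path is $w_{\vbl{root}}, w_\xx, w_{\xx,1},\dots,w_{\xx,|q_\xx|-1}$, paired with $e^{\vbl{bt}}$. These worlds satisfy $\pre(e^{\vbl{bt}})$, because $w_{\vbl{root}}\models \vbl{root}\land K\neg\vbl{stg1}$ and the non-last branch worlds are not $\logicize{tail}$ (each sees a symbol). Note that $w_{\xx,j}$ is labelled only by a bit and not by $\xx$, so I will check whether $\pre(e^{\vbl{bt}})$ really holds there; if not, the path simply needs care. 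The last world $w$ of the branch ($w_\xx$ in the first bullet, $w_{\xx,|q_\xx|}$ in the second) is the delicate one.

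\begin{itemize}
\item In the second bullet, $w$ satisfies $\logicize{tail}\land\neg\vbl{bt}$, so $(w,e^{\vbl{bt}}_{\text{fail}})$ exists and is reachable from the previous world via $e^{\vbl{bt}}R e^{\vbl{bt}}_{\text{fail}}$.
\item In the first bullet, $w=w_\xx$ with $w_\xx\models \logicize{tail}\land\neg\vbl{bt}$, so $(w_\xx,e^{\vbl{bt}}_{\text{fail}})$ is an $R'$-successor of the root pair. Here $k=1$.
\end{itemize}

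In both cases, $e^{\vbl{bt}}_{\text{fail}}$ has no outgoing edges in $R_{\vbl{bt}}$, so $(w,e^{\vbl{bt}}_{\text{fail}})$ has no successors at all. Hence it satisfies $K\neg\vbl{ntF}$, and it satisfies $\logicize{failed}$ provided it is labelled $\aa$ or $\bb$.

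That proviso is exactly the main obstacle, and it is a labelling issue. In $s\block{q_\aa}{q_\bb}$ the worlds $w_{\xx,j}$ carry only the bit $q_\xx[j]$, not $\xx$. So $\logicize{tail}$ and $\logicize{failed}$, which require $\aa\lor\bb$, literally hold only at $w_\xx$. I will therefore read the definitions as intended by the paper's own remarks ("the states that satisfy $\logicize{tail}$ are exactly $w_{\xx,|q_\xx|}$"), i.e.\ branch worlds are understood to carry the label $\xx$. Under that reading, the witness in the second bullet is $w_{\xx,|q_\xx|}$ paired with the fail event, and the chain of intermediate worlds is labelled by symbols, as required by condition 3. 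The same reading covers the case $q_\xx=\varepsilon$ with $\vbl{bt}'$, where the path is $w_{\vbl{root}}, w_\xx, (w_{\xx,1},e^{\vbl{bt}}_{\text{fail}})$.

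For the third bullet, $s$ is a failed state with witness path $s_0R s_1\cdots R s_k$. I show that $(s_0,e^{\vbl{bt}}),\dots,(s_{k-1},e^{\vbl{bt}}),(s_k,e^{\vbl{bt}}_{\text{fail}})$ is a path in the product.

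\begin{itemize}
\item $s_0$ satisfies the first disjunct of $\pre(e^{\vbl{bt}})$.
\item Each $s_i$ with $1\le i<k$ satisfies $(\aa\lor\bb)\land\neg\logicize{tail}$, because it has the successor $s_{i+1}$ satisfying $\logicize{symb}$.
\item $s_k\models\logicize{failed}$, so $(s_k,e^{\vbl{bt}}_{\text{fail}})$ exists.
\end{itemize}

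Consecutive pairs are related because $e^{\vbl{bt}}Re^{\vbl{bt}}$ and $e^{\vbl{bt}}Re^{\vbl{bt}}_{\text{fail}}$. The final pair has no successors, so it satisfies $\logicize{failed}$. Conditions 2 and 3 again transfer by valuation. This case is routine, and it shows that the "failed" mark is stable under further $\remv$ actions.
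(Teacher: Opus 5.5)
Your treatment of the first two bullets matches the paper's proof. You walk down the defective branch with $e^{\vbl{bt}}$ and end at its last world paired with $e^{\vbl{bt}}_{\text{fail}}$, which has no successors and so satisfies $\logicize{failed}$. Your reading of the labels is also right. The worlds $w_{\xx,j}$ must carry $\xx$, and this is exactly what $\ad_i$ produces, because $\pre(e_{\xx,j})$ contains $\logicize{loop}_\xx$. So the valuation $\{q_\xx[j]\}$ in the definition of $s\block{q_\aa}{q_\bb}$ is a typo.

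The gap is in the third bullet, and it is a real one.
\begin{enumerate}
\item You claim every $s_i$ with $1\le i<k$ satisfies $(\aa\lor\bb)\land\neg\logicize{tail}$. However, \Cref{def:failed-state} only gives $s_1\models\xx$, and $s_i\models\logicize{symb}$ for $i\ge 2$.
\item A world that satisfies $\logicize{symb}$ but not $\aa$, $\bb$ or $\vbl{root}$ meets no precondition of $\remv_{\vbl{bt}}$. It is therefore deleted, and your path breaks.
\item No cleverer choice of path repairs this, because the third bullet is false for the literal definition. Take worlds $s_0,s_1,s_2,s_3,n$ with $s_0Rs_1$, $s_1Rs_2$, $s_1Rn$, $s_2Rs_3$. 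Label them $\{\vbl{root}\}$, $\{\aa\}$, $\{0\}$, $\{0,\aa\}$, $\{\vbl{ntF}\}$, and designate $s_0$.
\item The path $s_0s_1s_2s_3$ makes this a failed state, and $\remv_{\vbl{bt}}$ applies for either $\vbl{bt}$. In the product, $s_2$ disappears.
\item So the part generated by $(s_0,e^{\vbl{bt}})$ is just $(s_0,e^{\vbl{bt}})\,R\,(s_1,e^{\vbl{bt}})\,R\,(n,e^{\vbl{bt}}_{\vbl{ntF}})$. The world $(s_1,e^{\vbl{bt}})$ sees $\vbl{ntF}$ and has no $\logicize{symb}$-successor, so no witness exists.
\end{enumerate}

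The fix is to strengthen condition 3 of \Cref{def:failed-state} to $s_i\models\logicize{symb}\land(\aa\lor\bb)$. With that change:
\begin{itemize}
\item Your third-bullet argument goes through verbatim, since valuations are copied.
\item The witnesses you build in the first two bullets satisfy the stronger condition under your labelling.
\item \Cref{lem:failed-states-fail} is unaffected. It only uses that $s_1$ is $\logicize{failed}$ (if $k=1$), or that $s_1$ has the $\logicize{symb}$-successor $s_2$ and so is not $\logicize{tail}$ (if $k\ge 2$).
\end{itemize}
The paper writes out only the second bullet and calls the others similar, so it shares this hole. You should make the strengthening explicit rather than assert $\aa\lor\bb$ at the intermediate worlds.
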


\begin{proof}
  We prove the lemma for the case where $s = s\block{q_\aa \vbl{bt}'}{q_\bb}$, as the other cases are similar.
  There exist some states $w_{\vbl{root}} = r_0 ~ R ~ w_{\aa} = r_1 ~R~\cdots ~R~ r_k = w_{a,|q_\aa|+1}$ in $s$.
  We then have that in $s$:
  \begin{enumerate}
    \item 
  $r_0 \models \vbl{root} \land K \neg \vbl{stg1}$;
    \item
  $r_1 \models \aa $;
    \item
  $r_i \models \vbl{symb} $ for every $2 \leq i \leq k$; and 
    \item
  $r_k \models 
   \vbl{tail} \land \vbl{bt}'$ in $s$.
  \end{enumerate}
  It is also straightforward to see that for each $0<i<k$, $r_i \models \neg \vbl{tail}$.
  Therefore, it is not hard to see that $s \times \remv_{\vbl{bt}}$ contains the sequence of states
  \[  
    (r_0,e^{\vbl{bt}}) ~ R ~ (r_1,e^{\vbl{bt}})  ~R~\cdots ~R~ (r_k,e^{\vbl{bt}}_{\text{fail}}) 
  \]
  that satisfy the conditions in \Cref{def:failed-state} for $s \times \remv_{\vbl{bt}}$ to be a failed state.
  \end{proof}

\begin{corollary}\label[corollary]{cor:plan-sequence}
  Let $\alpha$ be a plan (a sequence of epistemic actions) that applies to epistemic state $s_I$.
  Then, $\alpha$ is of the form $\ad_{i_1}\ad_{i_2}\cdots \ad_{i_k}$ or 
  \[\ad_{i_1}\ad_{i_2}\cdots \ad_{i_k} ~\nxtstg~ \remv_{\vbl{bt}_1}\remv_{\vbl{bt}_2}\cdots \remv_{\vbl{bt}_l}.\]
\end{corollary}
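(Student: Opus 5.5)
The plan is to analyse a plan $\alpha = f_1 f_2 \cdots f_m$ over $\act = \{\ad_i \mid 1 \leq i \leq n\} \cup \{\nxtstg\} \cup \{\remv_0,\remv_1\}$ that applies to $\initial$, and to cut it at the first action that is not of the form $\ad_i$. Let $k$ be the largest index such that $f_1,\dots,f_k$ are all of the form $\ad_{i_j}$. If $k = m$, we are in the first form and there is nothing to prove. Otherwise, by \Cref{cor:adding-blocks}, $s := \initial \times f_1\cdots f_k = s\block{q_\aa}{q_\bb}\{0,1\}$ for the appropriate concatenations $q_\aa, q_\bb$. In that case $f_{k+1}$ is either $\nxtstg$ or some $\remv_{\vbl{bt}}$.

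First I rule out $f_{k+1} = \remv_{\vbl{bt}}$. The precondition of the designated event $e^{\vbl{bt}}$ is $(\vbl{root} \land K\neg\vbl{stg1}) \lor ((\aa\lor\bb)\land\neg\logicize{tail})$. The designated state of $s$ is $w_{\vbl{root}}$, which satisfies neither $\aa$ nor $\bb$, and it sees $w_{\vbl{stg1}}$, so $K\neg\vbl{stg1}$ fails. The same holds for $\initial$ itself when $k=0$, since $w_{\vbl{root}} R_0 w_{\vbl{stg1}}$. Hence $\remv_{\vbl{bt}}$ does not apply, and we conclude $f_{k+1} = \nxtstg$. This is also directly consistent with \Cref{lem:remove-only-at-the-end}, since $\ad_i$ applies to $s$.

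Next I show that every action after $f_{k+1}$ is of the form $\remv_{\vbl{bt}}$. By the corollary preceding \Cref{lem:remove-only-at-the-end}, for every $j \geq k+1$ the state $\initial \times f_1\cdots f_j$ comes from a sequence that includes $\nxtstg$. Therefore neither $\nxtstg$ nor any $\ad_i$ applies to it, so $f_{j+1}$ must be $\remv_0$ or $\remv_1$. Induction on $j$ then gives the second form $\ad_{i_1}\cdots\ad_{i_k}\,\nxtstg\,\remv_{\vbl{bt}_1}\cdots\remv_{\vbl{bt}_l}$.

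The only step that needs care is the first one: checking that $\remv_{\vbl{bt}}$ cannot be applied during stage one, including at $\initial$. This reduces to the observation that $w_{\vbl{root}}$ still accesses a $\vbl{stg1}$-state in every first-stage state. That in turn follows from the explicit description of $s\block{q_\aa}{q_\bb}\{0,1\}$ and of $\initial$, both of which contain the edge $(w_{\vbl{root}},w_{\vbl{stg1}})$.
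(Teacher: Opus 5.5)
Your proof is correct and is essentially the paper's argument spelled out. The paper just cites \Cref{lem:adding-blocks} for the first-stage states, \Cref{lem:next-stage_no_addK} (which underlies the corollary that neither $\ad_i$ nor $\nxtstg$ applies once $\nxtstg$ has been used) and \Cref{lem:remove-only-at-the-end}, which is exactly your decomposition. The only difference is that you check directly that $\pre(e^{\vbl{bt}})$ fails at $w_{\vbl{root}}$ in $\initial$ and in $s\block{q_\aa}{q_\bb}\{0,1\}$, instead of appealing to \Cref{lem:remove-only-at-the-end}. That lemma's ``inconsistent preconditions'' argument silently relies on the designated world being labelled only by $\vbl{root}$, so your version is slightly more self-contained, and you rightly do not need the removal lemmas the paper also lists.
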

\begin{proof}
  The corollary is a consequence of \Cref{lem:next-stage_no_addK,lem:remove-only-at-the-end,lem:adding-blocks,lem:removing-bits-successfully,lem:removing-bt-when-hash}.
\end{proof}

\subsection{Completing the Reduction Plan}

To complete the definition of $EP$, let 
\begin{align*}
  \act =& \{ \ad_i, \nxtstg, \remv^{\vbl{bt}} \mid 1 \leq i \leq n \text{ and } \vbl{bt} \in \{ 0,1 \}  \}, \quad \text{ and}\\
  \varPhi_G =& K \neg \vbl{empty} \land K 
    \left(
        \vbl{tail} \land 
    \neg \logicize{failed}
            \right)
    .
\end{align*}

\begin{lemma}\label[lemma]{lem:failed-states-fail}
  Let $s$ be a failed epistemic state. Then, $s \not \models \varPhi_G$.
\end{lemma}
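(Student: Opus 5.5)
The plan is a direct semantic check: we show that the successor $s_1$ of the designated state $s_0$, given by \Cref{def:failed-state}, falsifies the formula under the box in the second conjunct of $\varPhi_G$. This makes $K(\vbl{tail} \land \neg \logicize{failed})$ false at $s_0$, and so $s \not\models \varPhi_G$. No earlier lemma is needed beyond the definitions of $\vbl{tail}$, $\logicize{failed}$ and a failed state.

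Fix the sequence $s_0 R s_1 R \cdots R s_k$ with $k \geq 1$ that witnesses that $s$ is failed, and write $\M$ for the underlying model. Since $s_0 R s_1$, it suffices to show that $(\M,s_1) \not\models \vbl{tail} \land \neg \logicize{failed}$. I would split on the value of $k$.

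\emph{Case $k = 1$.} Condition 4 of \Cref{def:failed-state} gives $(\M,s_1) \models \logicize{failed}$. So $s_1$ is an $R$-successor of $s_0$ satisfying $\logicize{failed}$, and therefore $(\M,s_0) \not\models K \neg\logicize{failed}$.

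\emph{Case $k \geq 2$.} Condition 3 gives $(\M,s_2) \models \vbl{symb}$, and $s_1 R s_2$, so $(\M,s_1) \models \bar{K}\vbl{symb}$, that is, $(\M,s_1) \not\models K\neg\vbl{symb}$. Since $\vbl{tail} = (\aa \lor \bb) \land K \neg \vbl{symb}$, it follows that $(\M,s_1) \not\models \vbl{tail}$. Hence $(\M,s_0) \not\models K\,\vbl{tail}$.

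In both cases $(\M,s_0) \not\models K(\vbl{tail} \land \neg\logicize{failed})$, so $s \not\models \varPhi_G$.

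I expect no genuine obstacle. The only point that needs care is the case split: when $k \geq 2$ the $\logicize{failed}$ state $s_k$ is not a direct successor of $s_0$, so the $\neg\logicize{failed}$ conjunct alone does not suffice. In that case the $\vbl{tail}$ conjunct does the work instead, witnessed by the symbol-labelled successor $s_2$ of $s_1$.
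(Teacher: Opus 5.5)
Your proof is correct and is essentially the paper's argument made explicit: the paper only says the lemma follows from the definitions of a failed state and of $\varPhi_G$. Your case split on $k$ unpacks that one-line appeal. For $k = 1$ the $\neg\logicize{failed}$ conjunct fails at $s_1$, and for $k \geq 2$ the symbol-labelled successor $s_2$ falsifies $\vbl{tail}$ at $s_1$.
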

\begin{proof}
  A consequence of the definitions of a failed state (\Cref{def:failed-state}) and of $\varPhi_G$.
\end{proof}

\begin{lemma}\label[lemma]{lem:successfully-removing-yields-match}
  Let $q_\aa, q_\bb \in \{0,1\}^*$ and let $rm = \remv_{\vbl{bt}_l}\remv_{\vbl{bt}_{l-1}}\cdots \remv_{\vbl{bt}_1}$.
  If $s\block{q_\aa}{q_\bb} \times rm \models \varPhi_G$, then $q_\aa = q_\bb = \vbl{bt}_1\vbl{bt}_2\cdots \vbl{bt}_l$. 
        \end{lemma}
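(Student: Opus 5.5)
We argue by induction on $l$, peeling off the first removal action $\remv_{\vbl{bt}_l}$ and comparing the last symbols of $q_\aa$ and $q_\bb$ with $\vbl{bt}_l$. Throughout we use the facts established earlier: \Cref{lem:removing-bits-successfully} for a successful removal, \Cref{lem:removing-bt-when-hash} for failures, \Cref{lem:failed-states-fail} to discard failed states, and the observation (from the proof of \Cref{lem:removing-bt-when-hash} together with \Cref{lem:K-is-preservedK}-style reasoning) that once $\remv_{\vbl{bt}}$ produces a failed state, every further application of a removal action again yields a failed state.

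\textbf{Base case $l=0$.} Here $s\block{q_\aa}{q_\bb} \models \varPhi_G$, so every successor of $w_{\vbl{root}}$ satisfies $\logicize{tail}$. The successors are exactly $w_\aa$ and $w_\bb$, and $w_\xx \models \logicize{tail}$ holds iff $w_\xx$ has no successor satisfying $\logicize{symb}$, i.e.\ iff $q_\xx = \varepsilon$. Hence $q_\aa = q_\bb = \varepsilon$, which is the empty sequence of bits.

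\textbf{Inductive step.} Suppose $l \geq 1$ and write $rm = \remv_{\vbl{bt}_l}\, rm'$. We distinguish the possible shapes of $s = s\block{q_\aa}{q_\bb}$.
\begin{itemize}
\item If $q_\aa = \varepsilon$ or $q_\bb = \varepsilon$, or if one of $q_\aa, q_\bb$ ends in $\vbl{bt}' \neq \vbl{bt}_l$, then \Cref{lem:removing-bt-when-hash} (first two bullets) shows that $s \times \remv_{\vbl{bt}_l}$ is failed.
\item Iterating the third bullet of \Cref{lem:removing-bt-when-hash} shows that $s \times rm$ is failed, provided each action applies. If some action does not apply, $s\times rm$ is undefined and there is nothing to prove.
\item In both situations \Cref{lem:failed-states-fail} contradicts $s \times rm \models \varPhi_G$.
\item Otherwise $q_\aa = q'_\aa \vbl{bt}_l$ and $q_\bb = q'_\bb \vbl{bt}_l$. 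Then $s \times \remv_{\vbl{bt}_l} = s\block{q'_\aa}{q'_\bb}$ by \Cref{lem:removing-bits-successfully}, so $s\block{q'_\aa}{q'_\bb} \times rm' \models \varPhi_G$. The induction hypothesis gives $q'_\aa = q'_\bb = \vbl{bt}_1\cdots\vbl{bt}_{l-1}$, and appending $\vbl{bt}_l$ yields the claim.
\end{itemize}

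\textbf{Expected obstacles.} The first is that \Cref{lem:removing-bits-successfully} is stated only for nonempty $q'_\aa, q'_\bb$. I would check the case where one of them is empty directly; it is the same bisimulation, with $w_\xx$ now the last state and satisfying $\logicize{tail}$, and no $\logicize{failed}$ state arises. The second is the third bullet of \Cref{lem:removing-bt-when-hash}, that failedness persists. To verify it, the witnessing path $(r_i, e^{\vbl{bt}})$ survives, since intermediate states are non-tail $\aa/\bb$ or symbol states, and its endpoint $r_k$ still satisfies $\logicize{failed}$ and so pairs with $e^{\vbl{bt}}_{\text{fail}}$, preserving $K\neg\vbl{ntF}$ because $K$-formulas persist.
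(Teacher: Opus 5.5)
Your proof is correct and follows the paper's argument. Like the paper, you induct on $l$ and peel off the first applied removal. The empty and mismatch cases are discarded via \Cref{lem:removing-bt-when-hash} and \Cref{lem:failed-states-fail}, and otherwise \Cref{lem:removing-bits-successfully} plus the induction hypothesis finishes the step.

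Your additional checks make explicit what the paper leaves implicit:
\begin{itemize}
\item \Cref{lem:removing-bits-successfully} also holds when the remaining prefix is empty, which the paper's proof of \Cref{thm:reduction-K} silently needs.
\item Failedness persists through the later removals.
\item Your indexing, with $\vbl{bt}_l$ as the first removed bit, is consistent, whereas the paper's is not.
\end{itemize}
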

\begin{proof}
  We proceed by induction on $l$.
    If $rm = \varepsilon$, then $s\block{q_\aa}{q_\bb} = s\block{q_\aa}{q_\bb} \times rm \models \varPhi_G$, and therefore, $q_\aa = q_\bb = \varepsilon$.
  If $rm = \remv_{\vbl{bt}_1} rm' $, then, by \Cref{lem:removing-bits-successfully,lem:removing-bt-when-hash}, either $s\block{q_\aa}{q_\bb} \times \remv_{\vbl{bt}_1}$ is a failed state, or it is $s\block{q_\aa'}{q_\bb'}$, where $q_\aa = q_\aa' \vbl{bt}_l$ and $q_\bb = q_\bb' \vbl{bt}_l$.
    The first case results in a contradiction, due to \Cref{lem:removing-bt-when-hash,lem:failed-states-fail}.
  We conclude the proof by the inductive hypothesis on $l-1$.
      \end{proof}

We can conclude with our main theorem.

\begin{theorem}\label[theorem]{thm:reduction-K}
  $B$ has a match if and only if $EP$ has a plan.
\end{theorem}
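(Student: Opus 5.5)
We prove the two directions separately, using the lemmas already established for each stage of a plan.

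\emph{($\Rightarrow$)} Suppose $i_1,\dots,i_k$ is a match, so $q := a_{i_1}\cdots a_{i_k} = b_{i_1}\cdots b_{i_k}$. Write $q = \vbl{bt}_1\cdots\vbl{bt}_l$. We take the plan
\[
\ad_{i_1}\cdots\ad_{i_k}\;\nxtstg\;\remv_{\vbl{bt}_l}\remv_{\vbl{bt}_{l-1}}\cdots\remv_{\vbl{bt}_1}.
\]
By \Cref{cor:adding-blocks}, after the $\ad$ actions we reach $s\block{q}{q}\{0,1\}$. The designated state $w_{\vbl{root}}$ still satisfies $\vbl{root}\land\bar K\vbl{stg1}$, so each $\ad_i$ indeed applies along the way (using \Cref{lem:adding-first-block} for the first step).

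Next, $\nxtstg$ applies, since $w_{\vbl{root}} \models K\neg\vbl{empty}\land\bar K\vbl{stg1}$. By \Cref{lem:next-stage_short} it yields $s\block{q}{q}$. Repeated use of \Cref{lem:removing-bits-successfully} then strips one matching bit at a time until we reach $s\block{\varepsilon}{\varepsilon}$.

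The hypothesis $q_\aa,q_\bb\neq\varepsilon$ in \Cref{lem:removing-bits-successfully} is too strong for the last removal, where the prefixes become empty. So we must check directly that the last removal from $s\block{\vbl{bt}}{\vbl{bt}}$ gives $s\block{\varepsilon}{\varepsilon}$. In that state $w_\xx$ has only the successors $w_{\xx,1}$ and $w_{\vbl{ntF}}$, so $w_\xx\models\neg\logicize{tail}$ and the event $e^{\vbl{bt}}$ applies to it. The fail event does not fire. The resulting model is isomorphic to $s\block{\varepsilon}{\varepsilon}$, since $w_\xx$ now only sees $w_{\vbl{ntF}}$.

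We also need each $\remv$ step to apply at the root, i.e.\ $w_{\vbl{root}}\models K\neg\vbl{stg1}$. This holds by \Cref{lem:next-stage_no_addK} and \Cref{lem:K-is-preservedK}.

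Finally, check the goal in $s\block{\varepsilon}{\varepsilon}$. The root sees only $w_\aa$ and $w_\bb$, so $K\neg\vbl{empty}$ holds. Each $w_\xx$ satisfies $\xx\land K\neg\logicize{symb}$, i.e.\ $\logicize{tail}$, and also $\bar K\vbl{ntF}$, so $\neg\logicize{failed}$. Hence $\varPhi_G$ holds.

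\emph{($\Leftarrow$)} Let $\alpha$ be a plan with $\initial\times\alpha\models\varPhi_G$. By \Cref{cor:plan-sequence}, $\alpha$ has one of two shapes.

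First, $\alpha$ may consist only of $\ad$ actions. If $\alpha$ is empty, then $\initial\not\models K\neg\vbl{empty}$. Otherwise $\initial\times\alpha=s\block{\cdot}{\cdot}\{0,1\}$, whose root sees $w_{\vbl{stg1}}$, which is not a $\logicize{tail}$ state. Either way the goal fails, so this shape is excluded.

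So $\alpha=\ad_{i_1}\cdots\ad_{i_k}\,\nxtstg\,rm$. We must have $k\ge 1$: otherwise $\nxtstg$ would not apply, because $\initial\models\bar K\vbl{empty}$. By \Cref{cor:adding-blocks,lem:next-stage_short}, after $\nxtstg$ we are in $s\block{q_\aa}{q_\bb}$ with $q_\aa=a_{i_1}\cdots a_{i_k}$ and $q_\bb=b_{i_1}\cdots b_{i_k}$. Then \Cref{lem:successfully-removing-yields-match} gives $q_\aa=q_\bb$, so $i_1,\dots,i_k$ is a nonempty match.

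The main obstacle is the bookkeeping around empty strings. Neither \Cref{lem:removing-bits-successfully} nor the case analysis in \Cref{lem:successfully-removing-yields-match} literally covers removals that reach $\varepsilon$. For those we would redo the isomorphism argument directly. We would also confirm that a removal applied when one side is already empty yields a failed state, via the first case of \Cref{lem:removing-bt-when-hash}.
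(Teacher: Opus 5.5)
Your proposal is correct and follows the paper's proof essentially step for step. The forward direction builds the same plan and chains \Cref{cor:adding-blocks}, \Cref{lem:next-stage_short} and \Cref{lem:removing-bits-successfully}; the converse uses \Cref{cor:plan-sequence} to fix the shape of the plan and \Cref{lem:successfully-removing-yields-match} to force $q_\aa = q_\bb$. Your extra checks are sound and fill in details the paper's proof passes over silently: the direct verification that the last removal takes $s\block{\vbl{bt}}{\vbl{bt}}$ to $s\block{\varepsilon}{\varepsilon}$ (a case \Cref{lem:removing-bits-successfully} does not literally cover), the empty-plan case, and $k \geq 1$ so that the match is non-empty.
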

\begin{proof}
  To prove the first direction, let $i_1, i_2, \ldots, i_k$ be a match for $B$. Let 
  $m_1 m_2 \cdots m_\ell = a_1 a_2 \cdots a_k = b_1 b_2 \cdots b_k$, where for every $1 \leq i \leq \ell$, $m_i \in \{0,1\}$. 
  Let 
  \begin{align*}
  \ad_m &= \ad_{i_1} \ad_{i_2} \cdots \ad_{i_k};\\ 
  \remv_m &=  \remv_{m_\ell}  
    \cdots  \remv_{m_1}; \text{ and} \\ 
  \alpha &= \ad_m ~\nxtstg~ \remv_{m}.
  \end{align*}
  We now demonstrate that $\alpha$ is a plan that applies to $s_I$ and that $s_{I} \times \alpha \models \varPhi_G$.

  By \Cref{cor:adding-blocks}, $s_{I} \times \ad_m = s\block{m_1 m_2 \cdots m_\ell}{m_1 m_2 \cdots m_\ell} \{ 0,1 \}$; then, \Cref{lem:next-stage_short} yields that $s_{I} \times \ad_m ~\nxtstg =  s\block{m_1 m_2 \cdots m_\ell}{m_1 m_2 \cdots m_\ell} \{ 0,1 \} \times \nxtstg = s\block{m_1 m_2 \cdots m_\ell}{m_1 m_2 \cdots m_\ell}$; and finally, \Cref{lem:removing-bits-successfully} yields that $s_I \times \alpha = s\block{m_1 m_2 \cdots m_\ell}{m_1 m_2 \cdots m_\ell} \times \remv_{m} = s\block{\varepsilon}{\varepsilon}\models \varphi_G$.

  For the converse direction, let $\alpha$ be a plan that applies to $s_I$, such that $s_I \times \alpha \models \varPhi_G$.
  By \Cref{cor:plan-sequence}, $\alpha$ is of the form 
  $\ad_{i_1}\ad_{i_2}\cdots \ad_{i_k}$ or 
  \[\ad_{i_1}\ad_{i_2}\cdots \ad_{i_k} ~\nxtstg~ \remv_{\vbl{bt}_l}\remv_{\vbl{bt}_{l-1}}\cdots \remv_{\vbl{bt}_1}.\]
  If $\alpha$ is of the form 
  $\ad_{i_1}\ad_{i_2}\cdots \ad_{i_k}$, then $s_I \times \alpha = s\block{a_{i_1}\cdots a_{i_k}}{b_{i_1}\cdots b_{i_k}}\{0,1\} \not \models \varPhi_G$, contradicting our assumptions.
  Therefore, $\alpha$ is of the form 
   \[\ad_{i_1}\ad_{i_2}\cdots \ad_{i_k} ~\nxtstg~ \remv_{\vbl{bt}_l}\remv_{\vbl{bt}_{l-1}}\cdots \remv_{\vbl{bt}_1}.\]
   We prove that $i_1,i_2,\ldots,i_k$ is a match for $B$. 
     Let 
    $m_1 m_2 \cdots m_\ell = a_1 a_2 \cdots a_k$ and 
    $m'_1 m'_2 \cdots m'_{\ell'} = b_1 b_2 \cdots b_k$.
    By \Cref{cor:adding-blocks}, $s_{I} \times \ad_{i_1}\ad_{i_2}\cdots \ad_{i_k} = s\block{m_1 m_2 \cdots m_\ell}{m'_1 m_2 \cdots m_{\ell'}} \{ 0,1\}$; then, \Cref{lem:next-stage_short} yields that 
    \[
      s_{I} \times \ad_{i_1}\ad_{i_2}\cdots \ad_{i_k} ~\nxtstg =  s\block{m_1 m_2 \cdots m_\ell}{m'_1 m'_2 \cdots m'_{\ell'}} \{ 0,1,\# \} \times \nxtstg = s\block{m_1 m_2 \cdots m_\ell}{m'_1 m'_2 \cdots m'_{\ell'}}.
      \]
     Finally, by \Cref{lem:successfully-removing-yields-match}, we conclude that $a_1 a_2 \cdots a_k = m_1 m_2 \cdots m_\ell = m'_1 m'_2 \cdots m'_{\ell'} = b_1 b_2 \cdots b_k$, and therefore $i_1,i_2,\ldots,i_k$ is a match for $B$. 
\end{proof}

\begin{corollary}\label[corollary]{cor:planning-undec-K}
  The plan existence problem for arbitrary frames is undecidable, even for the case of a single agent.
\end{corollary}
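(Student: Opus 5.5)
The plan is to derive \Cref{cor:planning-undec-K} from \Cref{thm:reduction-K} by a standard many-one reduction from PCP, which is undecidable~\cite{Post1946AVO}. Given an instance $B = \{ \block{a_i}{b_i} \mid 1 \leq i \leq n\}$, we construct the instance $EP = (\initial,\act,\goal)$ of \Cref{sec:K}. By \Cref{thm:reduction-K}, $B$ has a match exactly when $EP$ has a plan. So a decision procedure for plan existence over arbitrary frames with a single agent would decide PCP, which is impossible.

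The substantive step is to check that the map $B \mapsto EP$ is computable and lands in the restricted class of \Cref{def:plan-existence-problem}.

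\textbf{Computability.} $\initial$ is a fixed finite model, independent of $B$. $\nxtstg$, $\remv_0$, $\remv_1$ and $\goal$ are also fixed. Each $\ad_i$ is a finite action whose events $e_{\xx,j}$ and preconditions $\xx_i[j] \land \logicize{loop}_\xx$ are read directly off $a_i, b_i$. Hence $EP$ has size linear in $|B|$ and is produced by a simple algorithm.

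\textbf{Restrictions.} There is a single agent and no frame conditions are imposed, so the frames are arbitrary. No postconditions occur. Every precondition has modal depth at most $1$. For example, $\logicize{last}$, $\logicize{tail}$, $\logicize{failed}$ and $\logicize{loop}_\xx$ are Boolean combinations of atoms and of $K$ or $\bar K$ applied to propositional formulas, and so are $\pre(e_s)$, $\pre(e_{nx})$ and $\pre(e^{\vbl{bt}})$. The goal $\goal$ is allowed to be an arbitrary formula anyway.

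The only mild obstacle is bookkeeping. We must make sure that the bisimulation-based identifications used in the lemmas, e.g. writing $s \times \ad_i = s\block{q_\aa a_i}{q_\bb b_i}\{0,1\}$, do not affect the answer. This holds because plans and goal satisfaction are invariant under bisimulation, as observed in \Cref{sec:back}. Given that invariance, the corollary follows immediately.
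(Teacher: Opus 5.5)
Your proposal is correct and follows the paper's approach: the paper gives no separate proof of this corollary and obtains it directly from \Cref{thm:reduction-K} together with the undecidability of PCP. Your checks fill in the routine details the paper leaves implicit: the map $B \mapsto EP$ is computable, $EP$ uses a single agent with unconstrained frames, there are no postconditions, every precondition has modal depth at most $1$ (the goal is unrestricted), and bisimulation invariance makes the identifications harmless.
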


\section{The Case of Multiple Agents}
\label{sec:multi}

For the case of multiple agents, regardless of the constraints on the accessibility relations, we can similarly prove that the plan existence problem is undecidable. To adjust our reduction, in our construction we alternate between the accessibility relations of two agents, $1$ and $2$, to avoid any side-effects due to the constraints on the accessibility relations, and to give direction to our models when symmetric accessibility relations do not.
For the sake of simplicity, we assume that $1$ and $2$ are the only agents and that accessibility relations are equivalence relations, but the arguments are the same for any number of agents, and for more relaxed accessibility relation conditions.

We follow the same approach as in \Cref{sec:K}, though we note that 
due to the requirements that the accessibility relations are symmetric and transitive, simply closing the epistemic states and actions with respect to symmetry and transitivity would have unintended consequences, as some states  --- for example, states $w_{\vbl{ntF}}$ and $w_{\vbl{end}}$ --- are accessible from multiple states that should not be connected. 
Therefore, we will use multiple copies of these states to avoid unintended connections. We note that in the case of accessibility relations with no closure conditions, we can ensure that states $w_{\vbl{ntF}}$ and $w_{\vbl{end}}$ have no accessible states, and therefore keeping one or multiple copies of these results in bisimilar models. However, when the accessibility relation is an equivalence relation, that is no longer the case.

We modify the definition of the epistemic states and actions in the following way. We use an additional propositional variable $\#$ that acts as a separator symbol between two bits in a block, ensuring an even number of transitions between two bit states.
The initial state is relatively simple --- the epistemic actions will ensure to unfold the epistemic states and alternate the accessibility relations from agent 1 to agent 2 and back.

As illustrated in \Cref{fig:init-state-multi}, the initial epistemic state $\initial = ((W_0,(R^1_0,R_0^2),V_0),w_{\vbl{root}})$, where
\begin{align*}
    W_0 = \{ &
    w_{p} \mid p \in {\{ \vbl{root}, \vbl{empty}, \vbl{stg1}, \aa, \bb\}}\} \cup
    \{ w_{p}(\xx) \mid p \in {\{ \vbl{0}, \vbl{1}, \vbl{\#}, \vbl{ntF}, \vbl{end}
        \}}, ~ \xx \in {\{\aa,\bb\}} \};\\
    R_0^1 = \{ &
      w_{\vbl{root}},w_{\vbl{empty}}, w_{\vbl{stg1}}, w_{\aa}, w_{\bb}
      \}^2 
      \cup
      \{
        w_{\vbl{0}}(\aa), w_{\vbl{1}}(\aa), w_{\vbl{\#}}(\aa), w_{\vbl{ntF}}(\aa),w_{\vbl{end}}(\aa)       \}^2 \\ &
      \cup
      \{
        w_{\vbl{0}}(\bb), w_{\vbl{1}}(\bb), w_{\vbl{\#}}(\bb), w_{\vbl{ntF}}(\bb),w_{\vbl{end}}(\bb)
              \}^2
                                                                                                                                                          ;     \\ 
    R_0^2 = 
    \{  &
    w_{\aa}, 
        w_{\vbl{0}}(\aa), w_{\vbl{1}}(\aa), w_{\vbl{\#}}(\aa), w_{\vbl{ntF}}(\aa),w_{\vbl{end}}(\aa)
    \}^2 \\ &
    \cup
    \{
    w_{\bb}, 
        w_{\vbl{0}}(\bb), w_{\vbl{1}}(\bb), w_{\vbl{\#}}(\bb), w_{\vbl{ntF}}(\bb),w_{\vbl{end}}(\bb)
      \}^2 \cup \{ w_\vbl{root} \}^2 \cup \{ w_\vbl{empty} \}^2 \cup \{ w_\vbl{stg1} \}^2
                                                                                                                                                                                  \end{align*}
    and 
    for every $p \in \{ \vbl{root}, \vbl{empty}, \vbl{stg1}, \aa, \bb\}\}$, $V_0(w_p) = \{p\}$; and for every $p \in \{ \vbl{0}, \vbl{1}, \vbl{\#}, \vbl{end}, \vbl{ntF}
        \}$ and $\xx \in \{\aa,\bb\}$, $V_0(w_p(\xx)) = \{p,\xx\}$.

\begin{figure}
  \centering 
\begin{tikzpicture}[->,>=stealth',shorten >=1pt,auto,node distance=2.5cm, thick]

  \node[mynode,initial] (r0)   at (0, 0)    {$w_{\vbl{root}}$};
  \node[mynode] (stg)   [above = 1 of r0]    {$w_{\vbl{stg1}}$};
  \node[mynode] (empty)  [below= 1 of r0]    {$w_{\vbl{empty}}$};
  \node[mynodesh] (wa)   [above right = 0.5 and 1 of r0]  {$w_{\aa}$};
  \node[mynodesh] (wb)   [below right = 0.5 and 1 of r0] {$w_{\bb}$};
  \node[mynode] (w0a)   [above right = 0.5 and 1 of wa]  {$w_0(\aa)$};
  \node[mynode] (w1a)   [right = 1 of wa] {$w_1(\aa)$};
  \node[mynode] (whasha) [right = 1 of w0a]   {$w_{\#}(\aa)$};
  \node[mynode] (wenda) [right = 1 of w1a]   {$w_{\vbl{end}}(\aa)$};
  \node[mynode] (wnfa) [right = 1 of wenda]   {$w_{\vbl{ntF}}(\aa)$};
  
  \node[mynode] (w0b)   [right = 1 of wb]  {$w_0(\bb)$};
  \node[mynode] (w1b)   [below right = 0.5 and 1 of wb] {$w_1(\bb)$};
  \node[mynode] (whashb) [right = 1 of w0b]   {$w_{\#}(\bb)$};
  \node[mynode] (wendb) [right = 1 of w1b]   {$w_{\vbl{end}}(\bb)$};
  \node[mynode] (wnfb) [right = 1 of wendb]   {$w_{\vbl{ntF}}(\bb)$};

    \node[inner sep=10pt,draw,dashed,rounded corners,fit=(r0) (stg) (empty) (wa) (wb) ] {};
  \node (12a) [rounded corners,fit=(w0a) (w1a) (whasha) (wenda) (wnfa)] {} ;
  \node [draw,dashed,rounded corners,fit=(w0a) (w1a) (whasha) (wenda) (wnfa) 
    ] {} ;
  \node[draw,rounded corners,fit= (wa) (w0a) (w1a) (whasha) (wenda) (wnfa) (12a) ] {};
  \node (12b) [rounded corners,fit=(w0b) (w1b) (whashb) (wendb) (wnfb)] {} ;
  \node [draw,dashed,rounded corners,fit=(w0b) (w1b) (whashb) (wendb) (wnfb) 
    ] {} ;
  \node[draw,rounded corners,fit= (wb) (w0b) (w1b) (whashb) (wendb) (wnfb) (12b)] {};

\end{tikzpicture}
\caption{Initial state for the two-agent case. The dashed rectangles represent the equivalence classes of the accessibility relation for agent 1, and the solid ones represents the equivalence classes for agent 2. If a state is not contained in any rectangle, it means that it forms its own equivalence class.
}
\label[figure]{fig:init-state-multi}
\end{figure}
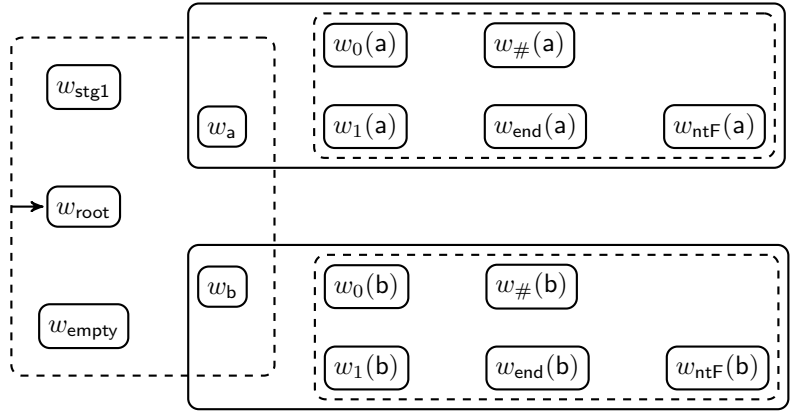

In this multi-agent case, as we apply actions to the epistemic states, in a state, the truth of
\begin{itemize}
  \item $\vbl{root} \land \bar{K}_1 \vbl{empty}$ encodes that we have applied no actions (and the state is the root state);
  \item $\vbl{root} \land \bar{K}_1 \vbl{stg1}$ encodes that we are on the first stage of the plan;
  \item $\# \land \bar{K}_2 \vbl{end}$ encodes that we are at a state at the end of the encoding of the blocks (or at $w_{\#}$); and 
  \item $\okstate = ((0 \lor 1) \land  \bar{K}_1 \vbl{ntF}) \lor ((\aa \lor \bb) \land \neg 0 \land \neg 1 \land  \bar{K}_2 \vbl{ntF})$ encodes that the plan has not attempted to remove the wrong symbol from the state during the second stage.
\end{itemize}
As the reader notices, we are more explicit in the multiagent case, to avoid side-effects due to the constraints on the accessibility relations.

\begin{definition}
Let $q_\aa, q_\bb \in \{0,1\}^*$.
We define the epistemic state $s\block{q_\aa}{q_\bb}$ to be 
$s\block{q_\aa}{q_\bb} = ((W\block{q_\aa}{q_\bb},(R_1\block{q_\aa}{q_\bb},R_2\block{q_\aa}{q_\bb}),V\block{q_\aa}{q_\bb}),w_{\vbl{root}})$, where
    \begin{align*}
      W\block{q_\aa}{q_\bb} = 
      \{ &
      w_{\vbl{root}}, w_{\aa}, w_{\bb}, w_{\vbl{ntF}}(\aa), w_{\vbl{ntF}}(\bb)
            \} 
      \cup \\ 
      \{ & w_{\xx,j}, 
                  w_{\xx,j,\#},  w_{\vbl{ntF}}(\xx,j),
      w_{\vbl{ntF}}(\xx,j,\#),
            \mid 
            x \in \{\aa,\bb\}  \text{ and } 1 \leq j \leq |q_\xx|
            \};
        \\ 
    R_1\block{q_\aa}{q_\bb} = \{
        &w_{\vbl{root}},w_{\xx}
        \mid 
    x \in \{\aa,\bb\}
    \}^2 ~\cup 
     \bigcup_{
     \substack{ 
     \xx \in \{\aa,\bb\}\\ 
     1 \leq j \leq |q_{\xx}|
     }
     }
     \{
    w_{\xx,j},w_{\xx,j,\#}, 
                    w_{\vbl{ntF}}(\xx,j) 
                \}^2 
                                                \\ 
    R_2\block{q_\aa}{q_\bb} = 
    &\bigcup_{
     \xx \in \{\aa,\bb\}
     }
     (
    \{
         w_\xx,w_{\vbl{ntF}}(\xx)
    \} \cup 
    \{ 
        w_{\xx,1}
    \mid q_\xx \neq \varepsilon
    \})^2 \cup 
    \\ & 
    \bigcup_{
     \substack{ 
     \xx \in \{\aa,\bb\}\\ 
     1 \leq j < |q_{\xx}|
     }
     }
    \{
    w_{\xx,j,\#},w_{\xx,j+1}, 
    w_{\vbl{ntF}}(\xx,j,\#) 
            \}^2 \cup 
    &\\ 
    &\bigcup_{
      \substack{
     \xx \in \{\aa,\bb\} \\
     q_{\xx} \neq \varepsilon
     }
     }
    \{ 
    w_{\xx,|q_{\xx}|,\#},w_{\vbl{ntF}}(\xx,|q_{\xx}|,\#)
    \}^2 
    ; \text{ and }
    \end{align*}
    for every $\xx \in \{ \aa,\bb \}$, $V(w_\vbl{root}) = \{\vbl{root}\}$, $V(w_\xx) = \{\xx\}$ and $V(w_{\vbl{ntF}}(\xx)) = \{\vbl{ntF},\xx\}$, and for every $1 \leq j \leq |q_{\xx}|$, 
        $V(w_{\xx,j}) = \{q_\xx[j],\xx\}$, $V(w_{\xx,j,\#}) = \{\#,\xx\}$, 
             and
     $V(w_{\vbl{ntF}}(\xx,j)) = V(w_{\vbl{ntF}}(\xx,j,\#)) = 
         \{\vbl{ntF},\xx\}$.
\end{definition}

We then define $s\block{q_\aa}{q_\bb} - \# $ to be the submodel of $s\block{q_\aa}{q_\bb}$ induced by $W\block{q_\aa}{q_\bb} \setminus \{w_{\aa,|q_{\aa}|,\#},w_{\bb,|q_{\bb}|,\#}\}$, as in \Cref{sec:K}.
The definition of the epistemic state $s\block{q_\aa}{q_\bb}\{0,1,\#\}$ can similarly be adjusted from \Cref{sec:K} 
(see \Cref{fig:epistemicstatemulti}).

\begin{definition}
  Let $q_\aa, q_\bb \in \{0,1\}^*$ and let 
  \[Tl_{q_\xx} = \begin{cases*}\{ w_{\vbl{0}}(\xx), w_{\vbl{1}}(\xx), w_{\vbl{\#}}(\xx), w_{\vbl{ntF}}(\xx,|q_\xx|,\#), w_{\vbl{end}}(\xx)\}, & if $q_\xx \neq \varepsilon$;\\
    \{ w_{\vbl{0}}(\xx), w_{\vbl{1}}(\xx), w_{\vbl{\#}}(\xx), w_{\vbl{ntF}}(\xx), w_{\vbl{end}}(\xx)\}, & if $q_\xx = \varepsilon$;
  \end{cases*} \quad \text{ for }~\xx \in \{\aa,\bb\}.\]  
  We define the epistemic state 
    $s\block{q_\aa}{q_\bb}\{0,1,\#\} = ((W'\block{q_\aa}{q_\bb},(R_1'\block{q_\aa}{q_\bb},R_2'\block{q_\aa}{q_\bb}),V'\block{q_\aa}{q_\bb}),w_{\vbl{root}})$, where
  \begin{align*}
    W'\block{q_\aa}{q_\bb} =& W\block{q_\aa}{q_\bb} &\cup~ &\{  w_{\vbl{stg1}}\} \cup Tl_{q_\aa} \cup Tl_{q_\bb} 
        ;\\
    R_1'\block{q_\aa}{q_\bb} =& 
      R_1\block{q_\aa}{q_\bb} &\cup~  &\{ w_{\vbl{root}},w_{\vbl{stg1}}, w_\aa, w_\bb \}^2  
                  \cup 
       Tl_{q_\aa} ^2 \cup Tl_{q_\bb} ^2 \\
      R_2'\block{q_\aa}{q_\bb} =& 
      R_2\block{q_\aa}{q_\bb} &\cup~  &\{ w_{\vbl{stg1}}\}^2  \cup 
            \bigcup_{\xx \in \{\aa,\bb\}} (Tl_{q_\xx} \cup \{ w_{\xx,|q_\xx|,\#} \mid  q_\xx \neq \varepsilon \} \cup \{ w_{\xx} \mid  q_\xx = \varepsilon \}  )^2;
      \end{align*}
      for every $w$ in $W\block{q_\aa}{q_\bb}$, $V'\block{q_\aa}{q_\bb}(w) = V\block{q_\aa}{q_\bb}(w)$, $V'\block{q_\aa}{q_\bb}(w_{\vbl{stg1}}) = \{\vbl{stg1}\}$, and for every $p \in \{0,1,\#,\vbl{end}
            \}$ and $\xx \in \{\aa,\bb\}$, $V'\block{q_\aa}{q_\bb}(w_{p}(\xx)) = \{p,\xx\}$.
    \end{definition}

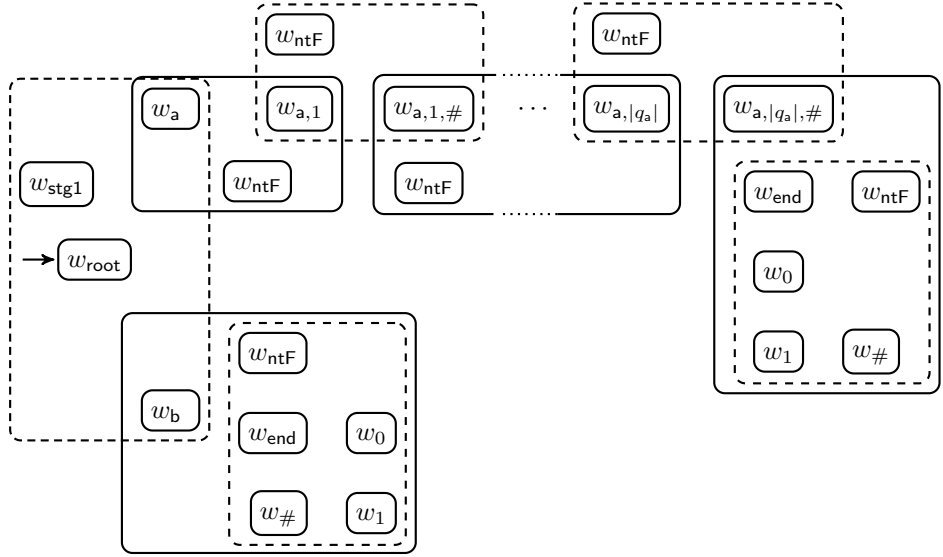
\begin{figure}
  \centering 
\begin{tikzpicture}[-,>=stealth',shorten >=1pt,auto,node distance=1cm, thick]

    \newcommand{\nameofwroot}{$w_\vbl{root}$}
  \node[mynode,initial] (wroot)   at (0, 0)      {\nameofwroot};
  \newcommand{\nameofwst}{$w_{\vbl{stg1}}$};
  \node[mynode] (wst)  at (-0.5, 1)      {\nameofwst};
  \newcommand{\nameofwa}{$w_\aa$ };
  \node[mynode] (wa)   at (1, 2)    {\nameofwa};
  \newcommand{\nameofwb}{$w_\bb$ };
  \node[mynode] (wb)   at (1, -2)   {\nameofwb};
  \newcommand{\nameofwjZa}{$w_{\aa,1}$};
  \node[mynode] (wjZa) at (2.7, 2)    {\nameofwjZa};
      \newcommand{\nameofwjOa}{$w_{\aa,1,\#}$};
  \node[mynode] (wjOa) at (4.4, 2)   {\nameofwjOa};
    
  \node (elips) at (5.8, 2)   {$\cdots$};
  
\newcommand{\nameofwjka}{$w_{\aa,|q_\aa|}$};  
\node[mynode] (wjka) at (7,2)   {\nameofwjka};
    
  \newcommand{\nameofwjkash}{$w_{\aa,|q_\aa|,\#}$};
  \node[mynode] (wjkash) at (9, 2)   {\nameofwjkash};
    
  \newcommand{\nameofwenda}{$w_{\vbl{end}}$};
  \node[mynode] (wenda)  [below = 0.5 of wjkash]  {\nameofwenda};
  \newcommand{\nameofwZa}{$w_0$};
  \node[mynode] (wZa)  [below=0.5 of wenda]  {\nameofwZa};
  \newcommand{\nameofwOa}{$w_1$};
  \node[mynode] (wOa) [below=0.5 of wZa] {\nameofwOa};
  \newcommand{\nameofwhasha}{$w_{\#}$};
  \node[mynode] (whasha) [right=0.5 of wOa]  {\nameofwhasha};

  \node[mynode] (ntfwb) [above right= 0.2 and 0.5 of wb]  {$w_{\vbl{ntF}}$};
  \newcommand{\nameofwendab}{$w_{\vbl{end}}$};
  \node[mynode] (wendb)  [below  = 0.5 of ntfwb]  {\nameofwendab};
  \newcommand{\nameofwZb}{$w_0$};
  \node[mynode] (wZb)  [right=0.5 of wendb]  {\nameofwZb};
  \newcommand{\nameofwOb}{$w_1$};
  \node[mynode] (wOb) [below=0.5 of wZb] {\nameofwOb};
  \newcommand{\nameofwhashb}{$w_{\#}$};
  \node[mynode] (whashb) [left=0.5 of wOb]  {\nameofwhashb};

  \node[mynode] (ntfwjkash) [right=0.5 of wenda]  {$w_{\vbl{ntF}}$};
  \node[mynode] (ntfwa) [below right=0.4 and 0.3 of wa]  {$w_{\vbl{ntF}}$};

  \foreach \y in {wjOa} {
    \node[mynode] (ntf\y) [below=0.4 of \y]  {$w_{\vbl{ntF}}$};
  }

  \foreach \y in {wjZa,wjka} {
    \node[mynode] (ntf\y) [above =0.4  of \y]  {$w_{\vbl{ntF}}$};
  }

    \node[draw,rounded corners,densely dashed,fit= (wroot) (wst) (wa) (wb)] {};

  \node[draw,rounded corners,fit= (wa) (ntfwa) (wjZa)] {};

          \node[draw,rounded corners,dashed,fit= (wjZa) (ntfwjZa) (wjOa)] {};
  \node[draw,rounded corners,dashed,fit= (wjka) (ntfwjka) (wjkash)] {};
  
   \node [fit= (wjOa)(ntfwjOa)(wjka)]  (long_up) {};

  \draw[-,thick, dotted] 
        ($(long_up.north east)!0.4!(long_up.north west)$) -- 
        ($(long_up.north east)!0.6!(long_up.north west)$);

  \draw[-,thick, dotted] 
        ($(long_up.south east)!0.4!(long_up.south west)$) -- 
        ($(long_up.south east)!0.6!(long_up.south west)$);

  \draw[-,thick,rounded corners] 
        ($(long_up.north east)!0.6!(long_up.north west)$) -- 
        (long_up.north west)  -- (long_up.south west) -- 
        ($(long_up.south east)!0.6!(long_up.south west)$);

  \draw[-,thick,rounded corners] 
        ($(long_up.south east)!0.4!(long_up.south west)$) --
        (long_up.south east) -- (long_up.north east) -- 
        ($(long_up.north east)!0.4!(long_up.north west)$);

  \node (boxa) [dashed,fit= (wZa) (wOa) (ntfwjkash) (whasha) (wenda)] {};
  \node [draw,rounded corners,dashed,fit= (wZa) (wOa) (ntfwjkash) (whasha) (wenda) 
    ] {};
  \node [draw,rounded corners,fit= (wjkash) (boxa)] {};
  
  \node (boxb) [dashed,fit= (wZb) (wOb) (ntfwb) (wb) (wendb)] {};
  \node [draw,rounded corners,dashed,fit= (wZb) (wOb) (ntfwb) (whashb) (wendb) 
    ] {};
  \node [draw,rounded corners,fit= (wb) (boxb)] {};

\end{tikzpicture}
\caption{The epistemic state $s\block{q_\aa}{q_\bb} \{0,1,\#\}$ for the case where $q_\aa \neq \varepsilon$ and $q_\bb = \varepsilon$. Dashed lines represent the equivalence classes for agent 1, while solid lines represent the equivalence classes for agent 2. We omit marking the singleton equivalence classes. We write $w_{p}$ instead of $w_{p}(\cdots)$.} 
\label[figure]{fig:epistemicstatemulti}
\end{figure}

\subsection{The Epistemic Actions}

We redefine our shorthand formulas:
\begin{align*}
  \logicize{ag1} & = 
  \vbl{root} \lor 
  \vbl{0} \lor \vbl{1},
  &
  \logicize{ag2} & = (\aa \lor \bb) 
  \land \neg (0 \lor 1 \lor \vbl{ntF} \lor \vbl{end}),
    \\
                      \logicize{symb} & = 0 \lor 1 \lor \# 
    ,
    &
  \logicize{last} & = 
    \logicize{ag2} \land \bar{K}_2 \vbl{end}
    ,
    \\ 
  \logicize{tail}(i) & = 
   \vbl{ag}\vbl{i} \land K_i \neg \vbl{ag}(3-\vbl{i})
   ,
   & 
                                \end{align*}
for every $i \in \{1,2\}$. 

We now define the epistemic actions that we use.

\subsubsection{Adding Blocks}

For every $1 \leq i \leq n$, we define 
\(\ad_i = (E_i,(R_1,R_2),e_s), 
\)
  where
        \begin{align*}
            E_i = \{& e_s,  
                                    e_\aa, e_\bb, 
                  e_{st}, e_{\aa\{\}}, e_{\bb,\{\}}, 
                  e_{\aa,lst}, e_{\bb,lst},                   e_{\aa}^\varepsilon, e_{\bb}^\varepsilon,
                  e_{\aa smb}, e_{\bb smb}, \\&
                  e_{\vbl{ntF}}^{\xx,lst},
                  e_{\vbl{ntF}}^{\aa,1} , 
                  e_{\vbl{ntF}}^{\bb,1} ,
                  e_{\vbl{ntF}}^{\aa,2} , 
                  e_{\vbl{ntF}}^{\bb,2} 
                                                      \} 
                  \\ & \cup 
                  \{
                  e_{\xx,j}, e_{\xx,j,\#}, 
                  e_{\vbl{ntF}}(\xx,j)
                                                                                                            \mid x \in \{\aa,\bb\}, ~~ 0 < j \leq  |x_i| 
                  \} \\ & \cup 
                  \{
                  e_{\vbl{ntF}}(\xx,j,\#)
                                                                                                            \mid x \in \{\aa,\bb\}, ~~ 0 < j <  |x_i| 
                  \}; 
                  \\ 
            R_1 = &\{                 e_s, e_{st}, e_{\aa}, e_{\bb}, 
                e_{\aa}^\varepsilon, e_{\bb}^\varepsilon\}^2 
                                 \cup
                  \{ 
                  e_{\aa smb}, e_{\aa,lst}
                  \}^2
                                    \cup
                  \{ 
                  e_{\bb smb}, e_{\bb,lst}
                  \}^2
                  \\ & \cup
                  \bigcup_{
                    \substack{\xx \in \{\aa,\bb\}
                  \\
                  0 < j \leq |\xx_i|
                  }}
                  \{ 
                  e_{\xx, j}, e_{\xx,j,\#}, e_{\vbl{ntF}}(\xx,j) 
                  \}^2
                  \\ & \cup
                  \bigcup_{
                    \substack{\xx \in \{\aa,\bb\}
                  \\
                  0 < j < |\xx_i|
                  }}
                  \{ 
                  e_{\vbl{ntF}}(\xx,j,\#)
                  \}^2
                  \\ & \cup
                  \{e_{\vbl{ntF}}^{\xx,lst}\}^2 \cup
                  \{e_{\vbl{ntF}}^{\aa,1}\}^2 \cup
                  \{e_{\vbl{ntF}}^{\aa,2}\}^2 \cup
                  \{e_{\vbl{ntF}}^{\bb,1}\}^2 \cup
                  \{e_{\vbl{ntF}}^{\bb,2}\}^2 \cup
                  \{e_{\aa,\{\}}\}^2 \cup
                  \{e_{\bb,\{\}}\}^2; \\ 
            R_2 = & \{ 
                  e_\aa, e_{\vbl{ntF}}(\aa), e_{\aa,smb}\}^2 
                   \cup
                  \{ 
                  e_\bb, e_{\vbl{ntF}}(\bb), e_{\bb,smb}\}^2 
                  \\ & \cup
                  (\{ 
                    e_\aa^\varepsilon, e_{\aa,lst}, e_{\vbl{ntF}}^{\aa,2}, e_{\vbl{ntF}}^{\aa,lst} 
                     \}
                     \cup \{ e_{\aa,1} \mid \aa_i \neq \varepsilon \}
                     \cup \{ e_{\aa, \{\}} \mid \aa_i = \varepsilon  \}
                     )^2
                  \\ & \cup
                  (\{ 
                    e_\bb^\varepsilon, e_{\bb,lst}, e_{\vbl{ntF}}^{\bb,2}, e_{\vbl{ntF}}^{\bb,lst} 
                     \}
                     \cup \{ e_{\bb,1} \mid \bb_i \neq \varepsilon \}
                     \cup \{ e_{\bb, \{\}} \mid \bb_i = \varepsilon  \}
                     )^2 
                  \\ & \cup
                     \bigcup_{
                    \substack{\xx \in \{\aa,\bb\}
                  \\
                  0 < j < |\xx_i|  
                  }}(
                  \{ 
                  e_{\xx, j+1}, e_{\xx,j,\#},  e_{\vbl{ntF}}(\xx,j,\#)
                  \}^2 \cup 
                   \{ e_{\xx,|\xx_i|,\#}, e_{\xx, \{\}} \}^2)
                   \\ & \cup
                   \{ e_s \}^2
                   \cup \{ e_{st} \}^2
                   \cup \bigcup_{\substack{
                    \xx \in \{\aa,\bb\} \\ 0< j \leq |\xx_i|
                   }} \{ e_{\vbl{ntF}}(\xx,j)  \}^2
                   ;
            \end{align*}
            and the precondition for each event in $E_i$ is given below:
            \begin{align*}
                        \pre(e_s) =~& \vbl{root} \land \bar{K}_1 \vbl{stg1} 
            \\ 
                                    \pre(e_{st}) =~& \vbl{stg1} 
            \\ 
                                    \pre(e_\xx) =~& 
            \xx \land \vbl{ag2} \land 
            \neg \#
                                                \land \neg \vbl{last}
                                    \\ 
            \pre(e_{\xx,lst}) =~& 
            \xx \land 
             \# \land \vbl{last} \land K_1 \neg \vbl{end} 
            \\ 
            \pre(e_{\xx}^\varepsilon) =~& 
             \xx \land \vbl{ag2} \land 
            \neg \# \land \vbl{last} \land K_1 \neg \vbl{end}
            \\
                                    \pre(e_{\vbl{ntF}}^{\xx,1}) = \pre(e_{\vbl{ntF}}^{\xx,2}) =~& \xx \land \vbl{ntF} \land K_1 \neg \vbl{end} 
            \\ 
            \pre(e_{\vbl{ntF}}^{\xx,lst}) =~& \xx \land \vbl{ntF} \land \bar{K}_1 \vbl{end}
            \\
            \pre(e_{\xx,smb}) =~& \xx \land \vbl{symb} \land \neg \vbl{last} 
            \\
            \pre(e_{\xx,\{\}}) =~&  \xx \land \bar{K}_1 
                                    \vbl{end} \land (\vbl{symb}\lor \vbl{ntF} \lor \vbl{end})
            \\
            \pre(e_{\vbl{ntF}}(\xx,j)) = \pre(e_{\vbl{ntF}}(\xx,j,\#)) =~& 
            \xx \land 
            \vbl{ntF} \land \bar{K}_1 \vbl{end}
            \\
            \pre(e_{\xx,j}) =~& \xx \land x_i[j] \land \bar{K}_1  \vbl{end} \\
            \pre(e_{\xx,j,\#}) =~& \xx \land \# \land \bar{K}_1  \vbl{end} 
                                                                                                                                                                                                                                                                                                                                                                                    ,  
                                \end{align*}
        for each $ x \in \{\aa,\bb\}$ and $ j \leq |x_i|$. The action $\ad_i$ is illustrated in \Cref{fig:add-multi}.

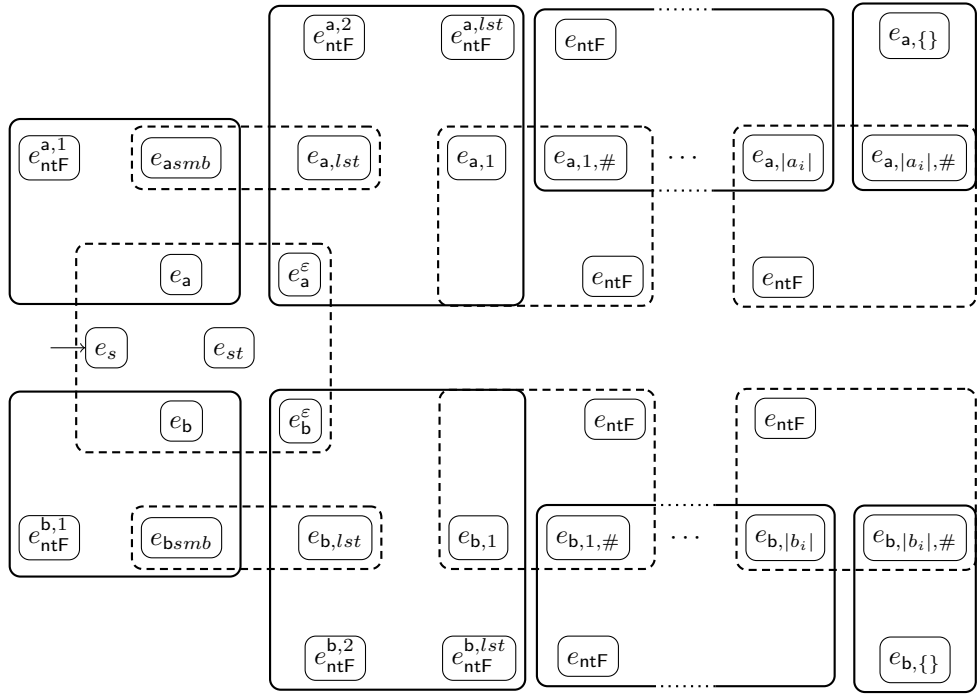
\begin{figure}
  \centering 
\begin{tikzpicture}
  \node [mynodesh,initial] (es)  {$e_s$};
  \node[mynodesh]  (elpa) [above right=0.6 of es] {$e_{\aa}$};
  \node[mynodesh]  (est) [right=of es] {$e_{st}$};
  \node [mynode] (easb) [above=of elpa] {$e_{\aa smb}$};
    \node[mynodesh]  (elpb) [below right=0.6 of es] {$e_{\bb}$};
  \node [mynode] (ebsb) [below=of elpb] {$e_{\bb smb}$};
  
  \node[mynodesh]  (elsta) [right=of elpa] {$e_{\aa}^\varepsilon$};
  \node[mynodesh]  (elstb) [right=of elpb] {$e_{\bb}^\varepsilon$};

  \node[mynode]  (elstahsh) [right=of easb] {$e_{\aa,lst}$};
  \node[mynode]  (elstbhsh) [right=of ebsb] {$e_{\bb ,lst}$};

  \node[mynodesh]  (ea1) [right=of elstahsh] {$e_{\aa,1}$};
  \node[mynodesh]  (eb1) [right=of elstbhsh] {$e_{\bb,1}$};
  \node[mynode]  (ea1sh) [right=0.5 of ea1] {$e_{\aa,1,\#}$};
  \node[mynode]  (eb1sh) [right=0.5 of eb1] {$e_{\bb,1,\#}$};

  \node  (adots) [right= 0.4 of ea1sh] {$\cdots$};
  \node  (bdots) [right= 0.4 of eb1sh] {$\cdots$};

  \node[mynode]  (eak) [right=0.4 of adots] {$e_{\aa,|a_i|}$};
  \node[mynode]  (ebk) [right=0.4 of bdots] {$e_{\bb,|b_i|}$};
  \node[mynode]  (eaksh) [right= 0.5 of eak] {$e_{\aa,|a_i|,\#}$};
  \node[mynode]  (ebksh) [right=0.5  of ebk] {$e_{\bb,|b_i|,\#}$};

    \node[mynodesh]  (elstahshntf) [above=of elstahsh] {$e_{\vbl{ntF}}^{\aa,2}$};
  \node[mynodesh]  (elstbhshntf) [below=of elstbhsh] {$e_{\vbl{ntF}}^{\bb,2}$};

  \node[mynode]  (elstntfa) [right=of elstahshntf] {$e_{\vbl{ntF}}^{\aa,lst}$};
  \node[mynode]  (elstntfb) [right=of elstbhshntf] {$e_{\vbl{ntF}}^{\bb,lst}$};

  \node[mynodesh]  (ea1ntf) [below right=of ea1] {$e_{\vbl{ntF}}$};
  \node[mynodesh]  (eb1ntf) [above right=of eb1] {$e_{\vbl{ntF}}$};
  \node[mynodesh]  (ea1shntf) [above=of ea1sh] {$e_{\vbl{ntF}}$};
  \node[mynodesh]  (eb1shntf) [below=of eb1sh] {$e_{\vbl{ntF}}$};

  \node[mynodesh]  (eakntf) [below=of eak] {$e_{\vbl{ntF}}$};
  \node[mynodesh]  (ebkntf) [above=of ebk] {$e_{\vbl{ntF}}$};
  \node[mynodesh]  (eatail) [above=of eaksh] {$e_{\aa,\{\}}$};
  \node[mynodesh]  (ebtail) [below=of ebksh] {$e_{\bb,\{\}}$};

  \node [mynodesh]  (eantf) [left = 0.8 of easb] {$e_{\vbl{ntF}}^{\aa,1}$};
  \node [mynodesh]  (ebntf) [left=0.8 of ebsb] {$e_{\vbl{ntF}}^{\bb,1}$};

  \node (loop1) [fit= (es)(elpa)(elpb)(elsta)(elstb),draw=black,thick,rounded corners,densely dashed] {};
  \node (loop2) [fit= (easb)(elstahsh),draw=black,thick,rounded corners,densely dashed] {};
  \node (loop2phant) [fit= (easb),draw=none] {};
  \node (loop3) [fit= (elpa)(loop2phant)(eantf),draw=black,thick,rounded corners] {};

  \node (loop2b) [fit= (ebsb)(elstbhsh),draw=black,thick,rounded corners,densely dashed] {};
  \node (loop2bphant) [fit= (ebsb),draw=none] {};
  \node (loop3b) [fit= (elpb)(loop2bphant)(ebntf),draw=black,thick,rounded corners] {};

  \node  [fit= (ea1)(ea1ntf)(ea1sh),draw=black,thick,rounded corners,densely dashed] {};
      \node [fit= (ea1sh)(ea1shntf)(eak)]  (long_up) {};

  \draw[thick, dotted] 
        ($(long_up.north east)!0.4!(long_up.north west)$) -- 
        ($(long_up.north east)!0.6!(long_up.north west)$);

  \draw[thick, dotted] 
        ($(long_up.south east)!0.4!(long_up.south west)$) -- 
        ($(long_up.south east)!0.6!(long_up.south west)$);

  \draw[thick,rounded corners] 
        ($(long_up.north east)!0.6!(long_up.north west)$) -- 
        (long_up.north west)  -- (long_up.south west) -- 
        ($(long_up.south east)!0.6!(long_up.south west)$);

  \draw[thick,rounded corners] 
        ($(long_up.south east)!0.4!(long_up.south west)$) --
        (long_up.south east) -- (long_up.north east) -- 
        ($(long_up.north east)!0.4!(long_up.north west)$);

  \node  [fit= (eak)(eakntf)(eaksh),draw=black,thick,rounded corners,densely dashed] {};
  \node [fit= (elstahsh)(elsta)(ea1)(elstahshntf)(elstntfa),draw=black,thick,rounded corners] {};

  \node [fit= (eb1)(eb1ntf)(eb1sh),draw=black,thick,rounded corners,densely dashed] {};
    \node [fit= (eb1sh)(eb1shntf)(ebk)] (long_down) {};

  \draw[thick, dotted] 
        ($(long_down.north east)!0.4!(long_down.north west)$) -- 
        ($(long_down.north east)!0.6!(long_down.north west)$);

  \draw[thick, dotted] 
        ($(long_down.south east)!0.4!(long_down.south west)$) -- 
        ($(long_down.south east)!0.6!(long_down.south west)$);

  \draw[thick,rounded corners] 
        ($(long_down.north east)!0.6!(long_down.north west)$) -- 
        (long_down.north west)  -- (long_down.south west) -- 
        ($(long_down.south east)!0.6!(long_down.south west)$);

  \draw[thick,rounded corners] 
        ($(long_down.south east)!0.4!(long_down.south west)$) --
        (long_down.south east) -- (long_down.north east) -- 
        ($(long_down.north east)!0.4!(long_down.north west)$);

  \node [fit= (ebk)(ebkntf)(ebksh),draw=black,thick,rounded corners,densely dashed] {};
  \node [fit= (elstbhsh)(elstb)(eb1)(elstbhshntf)(elstntfb),draw=black,thick,rounded corners] {};

    \node [fit= (eaksh)(eatail),draw=black,thick,rounded corners] {};
    \node [fit= (ebksh)(ebtail),draw=black,thick,rounded corners] {};

\end{tikzpicture}
\caption{The epistemic action $\ad_i$ for the case when $\aa_i,\bb_i\neq \varepsilon$. The designated event, $e_s$, is marked with an arrow. 
}
\label[figure]{fig:add-multi}
\end{figure}

\begin{lemma}\label[lemma]{lem:adding-blocks-multi}
  Let $s = s\block{q_\aa }{q_\bb }\{ 0,1,\# \}$ and $1 \leq i \leq n$.  
  Then, $s \times \ad_i = s\block{q_\aa a_i }{q_\bb b_i }\{ 0,1,\# \}$.
\end{lemma}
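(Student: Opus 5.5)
The plan mirrors the proof of \Cref{lem:adding-blocks}. I will compute $s \times \ad_i$ explicitly and then exhibit a relation $\R$ from its states to those of $s\block{q_\aa a_i}{q_\bb b_i}\{0,1,\#\}$. I will check that $\R$ is a functional bisimulation for both agents, preserving $\vbl{root}$ at the designated pair $(w_{\vbl{root}},e_s)$. The argument uses only the product-update definition and the fact that all preconditions have modal depth at most $1$. Hence each precondition can be evaluated locally in $s$, using only the $R_1$- and $R_2$-classes of a world.

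\textbf{Step 1: partition worlds and events.} On the world side, let $W_1$ be the ``body'': $w_{\vbl{root}}, w_{\vbl{stg1}}, w_\xx, w_{\vbl{ntF}}(\xx)$ together with all $w_{\xx,j}, w_{\xx,j,\#}$ and their $\vbl{ntF}$-companions, for $j \le |q_\xx|$. For each $\xx$, let $W_2(\xx)$ be the tail class $Tl_{q_\xx}$. On the event side, split $E_i$ into three parts:
\begin{itemize}
\item the ``copy'' events $e_s, e_{st}, e_\xx, e_{\xx smb}, e_{\xx,lst}, e^\varepsilon_\xx$ and the events $e^{\xx,1}_{\vbl{ntF}}, e^{\xx,2}_{\vbl{ntF}}$;
\item the ``new segment'' events $e_{\xx,j}, e_{\xx,j,\#}, e_{\vbl{ntF}}(\xx,j), e_{\vbl{ntF}}(\xx,j,\#)$, together with $e^{\xx,lst}_{\vbl{ntF}}$;
\item the ``tail'' events $e_{\xx,\{\}}$.
\end{itemize}

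For every world $w$ I will tabulate which events have preconditions true at $w$. Here $\bar K_1\vbl{end}$ holds exactly on $Tl_{q_\xx}$, and $\bar K_2\vbl{end}$ (hence $\logicize{last}$) holds exactly at $w_{\xx,|q_\xx|,\#}$ when $q_\xx\ne\varepsilon$, and at $w_\xx$ otherwise. The expected outcome has four parts:
\begin{itemize}
\item each body world satisfies exactly one copy event (e.g.\ $w_{\xx,|q_\xx|,\#}$ gets $e_{\xx,lst}$, $w_\xx$ with $q_\xx=\varepsilon$ gets $e^\varepsilon_\xx$, and other $\#$ and bit worlds get $e_{\xx smb}$);
\item each new-segment event is satisfied by exactly one world in $Tl_{q_\xx}$, namely $w_{x_i[j]}(\xx)$, $w_\#(\xx)$, or the tail $\vbl{ntF}$ world;
\item each tail world satisfies $e_{\xx,\{\}}$;
\item no copy event fires in a tail world, because of the conjuncts $K_1\neg\vbl{end}$ and $\neg\vbl{last}$.
\end{itemize}

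\textbf{Step 2: define $\R$.} Map $(w,e(w))\mapsto w$ on the body. Map the unique pair for $e_{\xx,j}$ to $w_{\xx,|q_\xx|+j}$ and the pair for $e_{\xx,j,\#}$ to $w_{\xx,|q_\xx|+j,\#}$, with the $\vbl{ntF}$ companions handled likewise. Map $(u,e_{\xx,\{\}})\mapsto u$ into the new tail class $Tl_{q_\xx a_i}$, identifying $w_{\vbl{ntF}}(\xx,|q_\xx|,\#)$ with the new $w_{\vbl{ntF}}(\xx,|q_\xx a_i|,\#)$.

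Valuations are preserved trivially. Inside each block the product restricted to the block is isomorphic to the corresponding block of the target, so, as in \Cref{lem:adding-blocks}, it remains to check forth and back only for edges crossing between blocks. I will do this separately for $R_1$ and $R_2$, using that both relations in the product are equivalence relations obtained as intersections of equivalence classes. There are two crossing points:
\begin{itemize}
\item the old last $\#$-world, whose new $R_2$ class is $\{(w_{\xx,|q_\xx|,\#},e_{\xx,lst}),(w_{\vbl{ntF}}(\ldots),e^{\xx,2}_{\vbl{ntF}}),\ldots,(w_{x_i[1]}(\xx),e_{\xx,1})\}$, matching $\{w_{\xx,|q_\xx|,\#},w_{\vbl{ntF}}(\xx,|q_\xx|,\#),w_{\xx,|q_\xx|+1}\}$;
\item the junction $\{e_{\xx,|x_i|,\#},e_{\xx,\{\}}\}$, which reproduces the $R_2$ class linking the new last $\#$-world to $Tl$.
\end{itemize}

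\textbf{Expected obstacle.} The main obstacle is the bookkeeping of $\vbl{ntF}$ copies and the empty cases $q_\xx=\varepsilon$ and/or $x_i=\varepsilon$. Here the $R_2$ classes built from $e^\varepsilon_\xx$, $e^{\xx,2}_{\vbl{ntF}}$ and $e^{\xx,lst}_{\vbl{ntF}}$ must produce exactly one $\vbl{ntF}$ witness per class, with no spurious merging of classes; otherwise the bisimulation fails on the $\okstate$-relevant $\bar K_i\vbl{ntF}$ facts. I would handle this by treating the four combinations of emptiness of $q_\xx$ and $x_i$ explicitly. In each case I would list the product's $R_2$ class at the junction and compare it with the target's. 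If a duplicate $\vbl{ntF}$ pair appears, it is harmless up to bisimulation, since such pairs agree on all propositions and their classes coincide; so $\R$ may be relaxed to a non-functional relation there.
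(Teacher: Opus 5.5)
Your proposal is essentially the paper's proof. It uses the same three-way split of $E_i$: events that copy the body, events that build the new segment from the unique matching worlds of $Tl_{q_\xx}$, and $e_{\xx,\{\}}$ reproducing the tail, with the old tail $\vbl{ntF}$ world re-identified as in the paper's $w^+$. The only difference is how you glue the pieces: you check the crossing edges of a single relation $\R$ between disjoint blocks, as in \Cref{lem:adding-blocks}, whereas the paper lets the pieces $W_1\times E_1$, $W_2(\xx)\times E_2$, $W_2(\xx)\times E_3$ overlap at the junctions and takes the union of the piecewise bisimulations.

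When you carry out the tabulation, fix three slips:
\begin{enumerate}
\item $e_{\xx smb}$ also fires at $w_0(\xx)$ and $w_1(\xx)$, so "no copy event fires in a tail world" is false as stated.
\item $e^{\xx,1}_{\vbl{ntF}}$ and $e^{\xx,2}_{\vbl{ntF}}$ have the same precondition but lie in different $R_2$-classes of the action. So the $e^{\xx,2}_{\vbl{ntF}}$-copies on body $\vbl{ntF}$ worlds, like the $e_{\xx smb}$-pairs in item 1, are harmless only because they are unreachable from $(w_{\vbl{root}},e_s)$, not because ``their classes coincide''.
\item The $\vbl{ntF}$ member of the old junction's $R_2$-class is $(w_{\vbl{ntF}}(\xx,|q_\xx|,\#),e^{\xx,lst}_{\vbl{ntF}})$, not $e^{\xx,2}_{\vbl{ntF}}$, because $K_1\neg\vbl{end}$ fails on $Tl_{q_\xx}$.
\end{enumerate}
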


\begin{proof}
    Let $((W,(R_1,R_2),V),(w_{\vbl{root}},e_s)) = s \times \ad_i$. We show that $s \times \ad_i \sim s\block{q_\aa a_i }{q_\bb b_i }\{ 0,1,\# \}$.
  For simplicity, we refer to all accessibility relations as $(R_1,R_2)$.
  The set of states in $s \times \ad_i$ is:  
  \begin{align*}
       \{& 
      (w_{\vbl{root}},e_s) ,  
       (w_{\vbl{stg1}},e_{st}) \} \\
      \cup \{& (w_{\vbl{end}}(\xx),e_{\xx,\{\}}), 
       (w_{\vbl{0}}(\xx),e_{\xx,\{\}}) , \\
      & (w_{\vbl{1}}(\xx),e_{\xx,\{\}}) ,  
       (w_{\#}(\xx),e_{\xx,\{\}}) ,  
                  && \mid ~ \xx \in \{\aa,\bb\}
      \} &\\
            \cup  \{ &
      (w_{\vbl{ntF}}(\xx),e_{\xx,\{\}})
      && \mid ~ \xx \in \{\aa,\bb\}, ~ q_\xx = \xx_i = \varepsilon 
      \} \\
      \cup  \{ 
    & (w_{\vbl{ntF}}(\xx,|q_\xx|,\#),e_{\xx,\{\}})
           && \mid 
      \xx\in \{\aa,\bb\},  
      ~ q_\xx \neq \varepsilon
      \} 
       \\
            \cup  \{ 
      & (w_\xx,e_{\xx}^\varepsilon)
      && \mid ~ \xx \in \{\aa,\bb\}, ~ q_\xx = \varepsilon
      \} \\ 
      \cup  \{ &
      (w_{\vbl{ntF}}(\xx),e_{\vbl{ntF}}^{\xx,lst})
      && \mid ~ \xx \in \{\aa,\bb\}, ~ q_\xx = \varepsilon 
      \} \\ 
            \cup \{ &
      (w_\xx,e_\xx), 
      (w_{\vbl{ntF}}(\xx),e_{\vbl{ntF}}^{\xx,1}), \\
      & (w_{\xx,j},e_{\xx ,smb}),
      \\
      & (w_{\vbl{ntF}}(\xx,j),e_{\vbl{ntF}}^{\xx,1}), \\
      & (w_{\xx,|q_\xx|,\#},e_{\xx ,lst}), \\
      & (w_{\vbl{ntF}}(\xx,|q_\xx|,\#),e_{\vbl{ntF}}^{\xx,lst}) 
      && \mid ~ \xx \in \{\aa,\bb\}, ~  0 < j \leq |q_\xx|
      \} \\ \cup  \{ &
      (w_{\xx,j,\#},e_{\xx ,smb}),
      \\ 
      & (w_{\vbl{ntF}}(\xx,j,\#),e_{\vbl{ntF}}^{\xx,2})
      && \mid ~ \xx \in \{\aa,\bb\}, ~  0 < j < |q_\xx|
      \} \\ 
             \cup  \{
      & (w_{\xx_i[h]},e_{\xx,h}) ,  \\
      & (w_{\#}(\xx),e_{\xx,h,\#})
      && \mid 
      \xx\in \{\aa,\bb\},  ~ 0 < h \leq |x_i|
      \} 
      \\ \cup  \{ 
    & (w_{\vbl{ntF}}(\xx,|q_\xx|,\#),e_{\vbl{ntF}}(\xx,h,\#)) 
           && \mid 
      \xx\in \{\aa,\bb\},  ~ 0 < h < |x_i|, ~ q_\xx \neq \varepsilon
      \} 
      \\  \cup  \{ 
    & (w_{\vbl{ntF}}(\xx),e_{\vbl{ntF}}(\xx,h,\#)) 
           && \mid 
      \xx\in \{\aa,\bb\},  ~ 0 < h < |x_i|, ~ q_\xx = \varepsilon
      \} 
      .
  \end{align*}
  To see this, observe that for every event $e$ and state $w \in W\block{q_\aa}{q_\bb}$, if $w \models \pre(e)$, then $(w,e)$ appears above.
    
    We split our epistemic actions and epistemic states into 
    parts.
  Let  
  \begin{align*}
  E_1 = \{&       e_s,  
                                    e_\aa, e_\bb, 
                  e_{st}, 
                  e_{\aa,lst}, e_{\bb,lst},                   e_{\aa}^\varepsilon, e_{\bb}^\varepsilon,
                  e_{\aa smb}, e_{\bb smb},                   e_{\vbl{ntF}}^{\xx,lst},
                  e_{\vbl{ntF}}^{\aa,1} , 
                  e_{\vbl{ntF}}^{\bb,1} ,
                  e_{\vbl{ntF}}^{\aa,2} , 
                  e_{\vbl{ntF}}^{\bb,2} 
                                                      \} \\ 
                  &\cup \{ e_\xx^1 \mid \xx \in \{\aa,\bb\},~ \xx_i \neq \varepsilon \}
                  \cup 
                  \{ e_{\xx,\{\}} \mid \xx \in \{\aa,\bb\},~ \xx_i = \varepsilon \}
                  ;
                  \\
  E_2 = \{&       e_{\xx,j}, e_{\xx,j,\#}, 
                  e_{\vbl{ntF}}(\xx,j),
                                                                                                                              \mid x \in \{\aa,\bb\}, ~~ 0 < j \leq  |x_i| 
                  \} 
                  \\ & \cup 
                  \{
                  e_{\vbl{ntF}}(\xx,j,\#)
                                                                                                            \mid x \in \{\aa,\bb\}, ~~ 0 < j <  |x_i| 
                  \}
                  ; \text{ and} \\
  E_3 = \{&      e_{\xx\{\}} \mid \xx \in \{ \aa, \bb \} \} 
  \cup \{ e_{\xx,|\xx_i|,\#} \mid \xx \in \{\aa,\bb\},~ \xx_i \neq \varepsilon \}
                                      \end{align*}

  We observe that the states of $s \times \ad_i$ and $s \block{q_\aa \aa_i}{q_\bb \bb_i} \{ 0, 1, \# \}$ can similarly be arranged into corresponding parts.
  Let 
  \begin{align*}
    W_1 = & \{ 
      w_{\vbl{root}}, w_{\aa}, w_{\bb}, w_{\vbl{ntF}}(\aa), w_{\vbl{ntF}}(\bb)
            \} 
      \cup \\ 
      &\{ w_{\xx,j}, 
                  w_{\xx,j,\#},  w_{\vbl{ntF}}(\xx,j),
                  \mid 
            x \in \{\aa,\bb\}  \text{ and } 1 \leq j \leq |q_\xx|
            \}
      \cup \\ 
      &\{  
      w_{\vbl{ntF}}(\xx,j,\#),
            \mid 
            x \in \{\aa,\bb\}  \text{ and } 1 \leq j < |q_\xx|
            \}
      \\ 
                  &\cup \{ w_{\xx_i[1]}(\xx) \mid \xx \in \{\aa,\bb\},~ \xx_i \neq \varepsilon \}
                  \cup 
                  \bigcup_{\xx_i = \varepsilon} Tl_{q_\xx}
                        ; \\ 
    W_2(\xx)  = & Tl_{q_\xx} 
        ; \\ 
    W'_2(\xx)  = & Tl_{q_\xx \xx_i}  
    \cup \{ w_{\xx,|\xx_i|,\#} \mid \xx \in \{\aa,\bb\},~ \xx_i \neq \varepsilon \}
        ;~\text{ and}  \\ 
    W_3(\xx) = & \{  w_{\xx,j}, 
                  w_{\xx,j,\#},  w_{\vbl{ntF}}(\xx,j),
                  \mid 
            x \in \{\aa,\bb\}  \text{ and } |q_\xx| < j \leq |q_\xx| + |\xx_i|
            \}
      \cup \\ 
      & \{  
      w_{\vbl{ntF}}(\xx,j,\#),
            \mid 
            x \in \{\aa,\bb\}  \text{ and } |q_\xx|< j < |q_\xx| + |\xx_i|
            \}.
  \end{align*}

  We consider the pairs of generated submodels on the respective restrictions of $s \times \ad_i$ and $s \block{q_\aa \aa_i}{q_\bb \bb_i} \{ 0, 1, \# \}$ on:
  \begin{enumerate}
    \item $W_1 \times E_1$ and $W_1 $;
    \item $W_2(\xx) \times E_2$ and $W_3(\xx)$; 
    \item $W_2(\xx) \times E_3$ and $W'_2(\xx)$.
  \end{enumerate}
  These sets cover all states --- and more importantly, all accessibility relation pairs --- of the respective models. We will prove that the above pairs are bisimilar. Then, it is not hard to see that, since these sets cover all states and accessibility relation pairs, the complete epistemic states are also bisimilar.
    
  In the first case, observe that each state $w \in W_1$ satisfies exactly one precondition from the ones of $E_1$. Let $e(w) \in E_1$, such that $s, w \models \pre(e(w))$.
  Furthermore, we see that for every $w_1, w_2 \in W_1$, if $w_1 R_j w_2$, we also have that $e(w_1) R_j e(w_2)$; therefore, it is immediate that 
  \[
    (s \times \ad_i |_{ W_1 \times E_1}, (w,e(w))) \sim (s |_{W_1},w) = (s \block{q_\aa \aa_i}{q_\bb \bb_i}\{0,1,\#\} |_{W_1},w),
  \]
      for every $w \in W_1$.

 In the second case, we observe that $W_2(\xx)$ is a clique for both accessibility relations, and for each $e \in E_2$, there is a unique $w(e) \in W_2(\xx)$ such that $s, w(e) \models \pre(e)$ (and $w(-)$ is surjective on $W_2(\xx)$).
 For each $e_{\xx,j}, e_{\xx,j,\#}, 
                  e_{\vbl{ntF}}(\xx,j) \in E_2$, let 
                  $w_\times (e_{\xx,j}) = w_{\xx,|q_\xx|+j}$; 
                  $w_\times (e_{\xx,j,\#}) = w_{\xx,|q_\xx|+j,\#}$; and
                  $w_\times (e_{\vbl{ntF}}(\xx,j)) = w_{\vbl{ntF}}(\xx,j)$.
                  And for each $e_{\vbl{ntF}}(\xx,j) \in E_2$, let
                  $w_\times (e_{\vbl{ntF}}(\xx,j,\#)) = w_{\vbl{ntF}}(\xx,j,\#)$.
  It is then, not hard to see that 
    \[
  (s \times \ad_i
    |_{
  W_2(\xx) \times E_2
  }, (w(e),e))
      \sim 
  (s \block{q_\aa \aa_i}{q_\bb \bb_i}\{0,1,\#\} |_{W_3(\xx)},w_\times(e)),
  \]  
  for every $e \in E_2$.
    
  Finally, for the third case, 
  let $E_3' = E_3 \setminus \{e_{\xx,|\xx_i|,\#}\}$.
  As $E_3'$  includes a single event whose precondition is satisfied in all of $W_2(\xx)$, 
    it is not hard to see that 
  \[ 
    (s \times \ad_i |_{W_2(\xx) \times E_3'}, (w,e_{\xx,\{\}})) \sim (s |_{W_2(\xx)},w) \sim (s \block{q_\aa \aa_i}{q_\bb \bb_i}\{0,1,\#\} |_{Tl_{q_\xx \xx_i}},w^+),
  \]
  where if $q_\xx = \varepsilon$,  $\xx_i \neq \varepsilon$, and $w = w_\vbl{ntF} (\xx)$, then $w^+ = w_\vbl{ntF} (\xx,|\xx_i|,\#)$; if $q_\xx \neq \varepsilon$,  $\xx_i \neq \varepsilon$, and $w = w_\vbl{ntF} (\xx,|q_\xx|,\#)$, then $w^+ = w_\vbl{ntF} (\xx,|q_\xx|+|\xx_i|,\#)$; and otherwise, $w^+ = w$.
        The only interesting case is when $q_\xx = \varepsilon$ and $\xx_i \neq \varepsilon$, but then it suffices to observe that $w_\vbl{ntF} (\xx)$ and $w_\vbl{ntF} (\xx,|\xx_i|,\#)$ are bisimilar.
    Then, after reintroducing $(w_{\#}(\xx),e_{\xx,|\xx_i|,\#})$ and $w_{\xx,|\xx_i|,\#}$, we observe that 
  \[ 
    (s \times \ad_i |_{W_2(\xx) \times E_3}, (w,e_{\xx,\{\}})) \sim (s \block{q_\aa \aa_i}{q_\bb \bb_i}\{0,1,\#\} |_{Tl_{q_\xx \xx_i} W'_2(\xx)},w^+),
  \]
  and 
  \[ 
    (s \times \ad_i |_{W_2(\xx) \times E_3}, (w_{\#}(\xx),e_{\xx,|\xx_i|,\#}) \sim (s \block{q_\aa \aa_i}{q_\bb \bb_i}\{0,1,\#\} |_{Tl_{q_\xx \xx_i} W'_2(\xx)},w_{\xx,|\xx_i|,\#}).
  \]

  This completes the proof, as we can easily construct a bisimulation between 
  $s \times \ad_i $ and $ s\block{q_\aa a_i }{q_\bb b_i }\{ 0,1,\# \}$ by taking the union of the three bisimilarity relations between the respective submodels.
  As the bisimilarity relation is a bisimulation and every accessibility relation appears in at least one submodel, all bisimulation conditions are satisfied.
      \end{proof}

\begin{corollary}\label[corollary]{cor:adding-blocks-multi}
  Let $\alpha = \ad_{i_1} \ad_{i_2} \cdots \ad_{i_k} \in \{ \ad_i \mid 1 \leq i \leq n \}^*$. 
  Then, $s_0 \times \alpha = s\block{a_{i_1} a_{i_2} \cdots a_{i_k}}{b_{i_1} b_{i_2} \cdots b_{i_k}}\{ 0,1,\# \}$.
\end{corollary}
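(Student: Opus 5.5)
The plan is to argue by induction on $k$, the length of $\alpha$, in the same way that \Cref{cor:adding-blocks} follows from \Cref{lem:adding-first-block,lem:adding-blocks} in the single-agent case. Here $s_0$ denotes the initial epistemic state $\initial$ of the two-agent construction. The inductive step is immediate from \Cref{lem:adding-blocks-multi}. Suppose $s_0 \times \ad_{i_1}\cdots\ad_{i_{k-1}} = s\block{q_\aa}{q_\bb}\{0,1,\#\}$ with $q_\aa = a_{i_1}\cdots a_{i_{k-1}}$ and $q_\bb = b_{i_1}\cdots b_{i_{k-1}}$. Applying $\ad_{i_k}$ then yields $s\block{q_\aa a_{i_k}}{q_\bb b_{i_k}}\{0,1,\#\}$. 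Here I use the standing convention that epistemic states are identified up to bisimulation, and the fact that product update preserves bisimilarity, so it does not matter which representative we apply the action to.

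The only real work is the base case. The empty sequence gives $s_0$ itself, which is not literally $s\block{\varepsilon}{\varepsilon}\{0,1,\#\}$: it contains $w_{\vbl{empty}}$, and its $\vbl{ntF}$-states are arranged slightly differently. So I would do the base step at $k=1$, proving the multi-agent analogue of \Cref{lem:adding-first-block}: $s_0 \times \ad_i = s\block{\varepsilon}{\varepsilon}\{0,1,\#\} \times \ad_i$.

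First, I would check that $s\block{\varepsilon}{\varepsilon}\{0,1,\#\}$ is, up to isomorphism, $s_0$ with $w_{\vbl{empty}}$ removed. With $q_\aa=q_\bb=\varepsilon$, the tails $Tl_{\varepsilon}$ are $\{w_0(\xx),w_1(\xx),w_\#(\xx),w_{\vbl{ntF}}(\xx),w_{\vbl{end}}(\xx)\}$. The agent-1 classes are then $\{w_{\vbl{root}},w_{\vbl{stg1}},w_\aa,w_\bb\}$ and $Tl_\varepsilon(\xx)$, and the agent-2 classes are $Tl_\varepsilon(\xx)\cup\{w_\xx\}$ plus singletons, matching $R^1_0,R^2_0$ exactly.

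Second, I would check that no event of $\ad_i$ has its precondition satisfied at $w_{\vbl{empty}}$. Its valuation is $\{\vbl{empty}\}$, and every precondition in $\ad_i$ requires one of $\vbl{root},\vbl{stg1},\aa,\bb$ to hold. Hence $w_{\vbl{empty}}$ contributes no pairs to the product.

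Third, I would check that the remaining pairs and their accessibility relations coincide in the two products. Every precondition has modal depth at most $1$, and the modal subformulas evaluated at the surviving states are $\bar K_1\vbl{stg1}$, $\bar K_1\vbl{end}$, $K_1\neg\vbl{end}$ and $\bar K_2\vbl{end}$. These do not mention $\vbl{empty}$, so their truth values are unchanged by deleting $w_{\vbl{empty}}$. With both facts in hand, the two products are literally isomorphic.

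Combining the base case with \Cref{lem:adding-blocks-multi} applied $k-1$ more times gives the statement for all $k\ge1$. For $k=0$ the claim is read modulo the extra $w_{\vbl{empty}}$ state, exactly as in the single-agent case, where \Cref{lem:adding-first-block} plays this role. The main obstacle I expect is the careful verification that the preconditions evaluate identically on $s_0$ and on $s\block{\varepsilon}{\varepsilon}\{0,1,\#\}$, in particular those mentioning $\bar K_1\vbl{end}$ and $\vbl{last}$ at the tail states. That comparison is routine once the two models are seen to agree outside $w_{\vbl{empty}}$.
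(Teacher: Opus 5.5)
Your proposal is correct and follows the paper's implicit argument: the corollary comes from iterating \Cref{lem:adding-blocks-multi}, just as \Cref{cor:adding-blocks} follows from \Cref{lem:adding-first-block,lem:adding-blocks}. The paper never states the multi-agent analogue of \Cref{lem:adding-first-block} that you verify for the base case, and you are right that for $k=0$ the identity only holds modulo $w_{\vbl{empty}}$, since $s_0 \models \bar{K}_1 \vbl{empty}$.
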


        \subsubsection{Moving to the Second Stage}
        In the multi-agent case, we define the define epistemic action 
        \[\nxtstg = 
                (\{e_{nx}\}, (\{(e_{nx},e_{nx})\},\{(e_{nx},e_{nx})\}), e_{nx})
        ,
        \text{ where }
        \]
            \[
      \pre(e_{nx}) =  (\neg\vbl{root} \lor ( K_1 \neg \vbl{empty} \land \bar{K}_1 \vbl{stg1})) 
      \land
      \vbl{stg1} \land
      \neg (\bar{K}_1 0 \land \bar{K}_1 1) \lor \vbl{ntF}
    \]
                                                                                    
    \begin{figure}
      \centering 
    \begin{tikzpicture}[->,>=stealth',shorten >=1pt,auto,node distance=2.8cm, thick]
    
            \node[mynode,initial] (es)       {$e_{nx}$};

      \node (pre-es)   [right= 0.5 of es]     {$(\neg\vbl{root} \lor ( K_1 \neg \vbl{empty} \land \bar{K}_1 \vbl{stg1})) 
      \land
      \neg \vbl{stg1} \land
      \neg (\bar{K}_1 0 \land  \bar{K}_1 1) \lor \vbl{ntF}$};

    \end{tikzpicture}
    \caption{next stage}
    \end{figure}

        Action $\nxtstg$ removes $w_{\vbl{stg1}}, w_{\vbl{end}}, w_{0}(\aa), w_{1}(\aa), w_{\#}(\aa), w_{0}(\bb), w_{1}(\bb), w_{\#}(\bb)$, initiating the second stage. 

        \begin{lemma}\label[lemma]{lem:next-stage_short_multi}
          Let $s = s\block{q_\aa }{q_\bb }\{ 0,1,\# \}$.  
          Then, $s \times next\_ stage = s\block{q_\aa }{q_\bb }$.
        \end{lemma}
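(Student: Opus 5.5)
Since $\nxtstg$ has a single event $e_{nx}$ that is reflexive for both agents, the product $s \times \nxtstg$ is isomorphic, via $(w,e_{nx}) \mapsto w$, to the submodel of $s$ induced by the states satisfying $\pre(e_{nx})$. The relations $R'_1, R'_2$ are simply restricted to that set. So the plan is to compute this set of surviving states and compare it with $s\block{q_\aa}{q_\bb}$; no genuine bisimulation argument is needed beyond this isomorphism.

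I read the precondition as in the figure: it is the disjunction of $\vbl{ntF}$ with the conjunction of three conjuncts, namely $\neg\vbl{root} \lor (K_1\neg\vbl{empty}\land \bar K_1\vbl{stg1})$, then $\neg\vbl{stg1}$, then $\neg(\bar K_1 0 \land \bar K_1 1)$. The key steps go through the states of $s\block{q_\aa}{q_\bb}\{0,1,\#\}$ one agent-$1$ class at a time.
\begin{enumerate}
\item The class $\{w_{\vbl{root}},w_{\vbl{stg1}},w_\aa,w_\bb\}$ contains no $0$- or $1$-state and no $\vbl{empty}$-state.
  \begin{itemize}
  \item $w_{\vbl{root}}$ satisfies $K_1\neg\vbl{empty}\land\bar K_1\vbl{stg1}$, so it survives.
  \item $w_\aa, w_\bb$ satisfy $\neg\vbl{root}$, so they survive.
  \item $w_{\vbl{stg1}}$ fails $\neg\vbl{stg1}$ and is not an $\vbl{ntF}$-state, so it is removed.
  \end{itemize}
\item Each class $\{w_{\xx,j},w_{\xx,j,\#},w_{\vbl{ntF}}(\xx,j)\}$ contains exactly one bit state. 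Hence $\neg(\bar K_1 0\land\bar K_1 1)$ holds there, and all three states survive.
\item The singleton classes $\{w_{\vbl{ntF}}(\xx,j,\#)\}$ and the states $w_{\vbl{ntF}}(\xx)$ survive through the disjunct $\vbl{ntF}$.
\item Each tail class $Tl_{q_\xx}$ contains both $w_0(\xx)$ and $w_1(\xx)$. So $w_0(\xx), w_1(\xx), w_\#(\xx), w_{\vbl{end}}(\xx)$ fail the third conjunct and are not $\vbl{ntF}$-states, and they are removed. The only member of $Tl_{q_\xx}$ that survives is its $\vbl{ntF}$-state: $w_{\vbl{ntF}}(\xx,|q_\xx|,\#)$ if $q_\xx\neq\varepsilon$, and $w_{\vbl{ntF}}(\xx)$ otherwise.
\end{enumerate}
Thus the surviving set is exactly $W\block{q_\aa}{q_\bb}$, and the valuations agree.

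It remains to check that the restricted relations coincide with $R_1\block{q_\aa}{q_\bb}$ and $R_2\block{q_\aa}{q_\bb}$. For the extra blocks of $R'_1$, restricting $\{w_{\vbl{root}},w_{\vbl{stg1}},w_\aa,w_\bb\}^2$ gives $\{w_{\vbl{root}},w_\aa,w_\bb\}^2$, and restricting $Tl_{q_\xx}^2$ leaves only a reflexive loop on the surviving $\vbl{ntF}$-state. For $R'_2$, restricting $\bigl(Tl_{q_\xx}\cup\{w_{\xx,|q_\xx|,\#}\}\bigr)^2$ gives $\{w_{\xx,|q_\xx|,\#},w_{\vbl{ntF}}(\xx,|q_\xx|,\#)\}^2$ when $q_\xx\neq\varepsilon$. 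When $q_\xx=\varepsilon$ it gives $\{w_\xx,w_{\vbl{ntF}}(\xx)\}^2$. Both are already blocks of $R_2\block{q_\aa}{q_\bb}$, and the remaining blocks are untouched.

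The step I expect to be the main obstacle is not the computation but fixing the intended reading of $\pre(e_{nx})$. The displayed formula has $\vbl{stg1}$ where the figure has $\neg\vbl{stg1}$, and operator precedence around the final $\lor\,\vbl{ntF}$ is ambiguous. I would adopt the figure's version with $\vbl{ntF}$ as the outermost disjunct and say so explicitly. This is the only reading under which $w_{\vbl{stg1}}$ is removed while the $\vbl{ntF}$-state inside each tail class is retained, and that retention is exactly what makes the $R_2$ blocks match.
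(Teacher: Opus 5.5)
Your argument is correct and is essentially the paper's proof, which simply observes that the states of $s\block{q_\aa}{q_\bb}\{0,1,\#\}$ failing $\pre(e_{nx})$ are exactly $w_{\vbl{stg1}}$, the two $w_{\vbl{end}}$ states and $w_0(\xx),w_1(\xx),w_\#(\xx)$. You go further by checking the restricted relations and by correctly noting that the displayed $\vbl{stg1}$ must be read as $\neg\vbl{stg1}$. One minor overstatement: once $\neg\vbl{stg1}$ is fixed, the precedence of the final ${\lor}\,\vbl{ntF}$ does not matter here, since for example the grouping $(\neg\vbl{root}\lor(K_1\neg\vbl{empty}\land\bar K_1\vbl{stg1}))\land\neg\vbl{stg1}\land(\neg(\bar K_1 0\land\bar K_1 1)\lor\vbl{ntF})$ keeps and removes exactly the same states.
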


        \begin{proof}
          Simply observe that the states that do not satisfy $pre(e_{nx})$ in $s$ are exactly $w_{\vbl{stg1}}$, $w_{\vbl{end}}$, $w_{0}(\aa)$, $w_{1}(\aa)$, $w_{\#}(\aa)$, $w_{0}(\bb)$, $w_{1}(\bb)$, $w_{\#}(\bb)$.
        \end{proof}
                                                                        
        Similarly, it is straightforward to see that the following lemma holds.

        \begin{lemma}\label[lemma]{lem:next-stage_no_add-multi}
          Let $s$ be an epistemic state where $next\_ stage$ applies.  
          Then, $s \times next\_ stage \models K_1 \neg \vbl{stg1}$.
        \end{lemma}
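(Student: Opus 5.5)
The plan is to argue directly from the definition of product update, in the same way as in the single-agent case (\Cref{lem:next-stage_no_addK}). Let $s = ((W,(R_1,R_2),V),w)$ be an epistemic state to which $\nxtstg$ applies, and write $s \times \nxtstg = ((W',(R'_1,R'_2),V'),(w,e_{nx}))$. Since $\nxtstg$ has the single event $e_{nx}$ with a reflexive loop for both agents, we have $W' = \{(u,e_{nx}) \mid (s,u) \models \pre(e_{nx})\}$. Moreover, the $R'_1$-successors of $(w,e_{nx})$ are exactly the pairs $(u,e_{nx})$ such that $w R_1 u$ and $(s,u) \models \pre(e_{nx})$. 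Finally, $V'(u,e_{nx}) = V(u)$. So it suffices to show that every world $u$ satisfying $\pre(e_{nx})$ also satisfies $\neg\vbl{stg1}$, since $\vbl{stg1}$ is propositional and its truth value is copied from $u$.

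For this I will read the precondition as in the figure accompanying $\nxtstg$. Its main disjunct is
\[(\neg\vbl{root} \lor (K_1 \neg \vbl{empty} \land \bar{K}_1 \vbl{stg1})) \land \neg\vbl{stg1} \land \neg(\bar{K}_1 0 \land \bar{K}_1 1),\]
and it carries the conjunct $\neg \vbl{stg1}$ explicitly. The displayed formula in the text writes $\vbl{stg1}$ without the negation, which I take to be a typo, since \Cref{lem:next-stage_short_multi} requires $w_{\vbl{stg1}}$ to be removed. Any $u$ satisfying this disjunct therefore satisfies $\neg\vbl{stg1}$ immediately.

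The main obstacle is the remaining disjunct $\vbl{ntF}$, which on its own says nothing about $\vbl{stg1}$. I will handle it through a valuation invariant. In $\initial$, no world satisfies both $\vbl{ntF}$ and $\vbl{stg1}$: $V_0(w_{\vbl{stg1}}) = \{\vbl{stg1}\}$, and no world whose label contains $\vbl{ntF}$ also contains $\vbl{stg1}$. Product update has no postconditions, so each world of $s \times f$ has the label of some world of $s$. By induction on the length of the plan, no world of any state reachable from $\initial$ satisfies $\vbl{ntF} \land \vbl{stg1}$. Hence, for such $s$, any $u$ satisfying the $\vbl{ntF}$ disjunct also satisfies $\neg\vbl{stg1}$.

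Combining the two cases, every $R'_1$-successor $(u,e_{nx})$ of $(w,e_{nx})$ satisfies $\neg\vbl{stg1}$, so $s \times \nxtstg \models K_1\neg\vbl{stg1}$. The caveat is that this argument needs $s$ to be reachable from $\initial$. That is exactly how the lemma is used, namely for epistemic states arising during a plan (as in \Cref{cor:plan-sequence}). If a statement for truly arbitrary $s$ is intended, I would instead read the precondition with $\neg\vbl{stg1}$ conjoined to the whole disjunction, after which the argument is the one-line observation above.
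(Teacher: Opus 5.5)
Your proposal is correct and is essentially the argument the paper leaves implicit when it calls the lemma straightforward. Every world satisfying $\pre(e_{nx})$, read with $\neg\vbl{stg1}$ as in the figure, is a $\neg\vbl{stg1}$-world, and product update copies valuations. Your extra treatment of the $\vbl{ntF}$ disjunct via the label invariant is genuinely needed: for an arbitrary $s$ (e.g.\ $R_1=\{w,u\}^2$, $V(w)=\emptyset$, $V(u)=\{\vbl{ntF},\vbl{stg1}\}$) the world $u$ survives the update and falsifies $K_1\neg\vbl{stg1}$, so the lemma holds only in the form you prove it, for states $\initial\times\alpha$, which is the only way the paper uses it.
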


        We call the epistemic states of the form $s_0 \times \alpha$, where $\alpha \in \{\ad_i \mid 1 \leq i \leq n\}^*$, first-stage states. 

\begin{lemma}\label[lemma]{lem:K-is-preserved-multi}
  Let $s$ be an epistemic state, $f$ an epistemic action, and $K_i \varphi$ an epistemic formula of modal depth $1$.
  If $s \models K_i \varphi$, then $s \times f \models K_i \varphi$.
\end{lemma}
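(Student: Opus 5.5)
The plan is to adapt the argument of \Cref{lem:K-is-preservedK} to the relation $R_i$ of the chosen agent $i$. Nothing about the frame conditions (equivalence relations, etc.) is needed, since the argument only uses the definition of the product update.

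Write $s = ((W,(R_j)_{j\in\ag},V),w)$ and $f = ((E,(Q_j)_{j\in\ag},\pre),e)$, and assume $f$ applies to $s$, so that $(w,e)$ is a state of $s \times f$. Since $K_i\varphi$ has modal depth $1$, the formula $\varphi$ has no modal operators, i.e.\ it is propositional.

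First I would take an arbitrary $R'_i$-successor $(w',e')$ of $(w,e)$ in $s \times f$. By the definition of $R'_i$ in the product update, $(w,w') \in R_i$ and $(e,e') \in Q_i$. Only the first fact matters: from $s \models K_i\varphi$ and $w R_i w'$ we get $((W,R,V),w') \models \varphi$.

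Next, because $\varphi$ is propositional, its truth at a state depends only on the valuation there. The update defines $V'(w',e') = V(w')$, so a routine induction on the propositional structure of $\varphi$ gives $(s \times f,(w',e')) \models \varphi$ iff $((W,R,V),w') \models \varphi$. Hence $\varphi$ holds at every $R'_i$-successor of $(w,e)$, which is exactly $s \times f \models K_i\varphi$.

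I do not expect a real obstacle; the lemma is the single-agent \Cref{lem:K-is-preservedK} with the agent index added. The one point to state explicitly is that the update only deletes successors and never creates new ones with new valuations: every $R'_i$-edge projects to an $R_i$-edge of $s$, and valuations are copied.
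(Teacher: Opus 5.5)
Your proposal is correct and follows the paper's own route: the paper proves this lemma by reusing the argument of \Cref{lem:K-is-preservedK}, namely that every $R'_i$-successor $(w',e')$ of $(w,e)$ projects to an $R_i$-successor $w'$ of $w$, and that the propositional $\varphi$ keeps its truth value because valuations are copied. Your version only states these steps more explicitly (agent index, restriction to states that are actually in $s \times f$, and why modal depth $1$ makes $\varphi$ propositional), which is fine.
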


\begin{proof}
  The reasoning is the same as in the proof of \Cref{lem:K-is-preservedK}. 
\end{proof}

\begin{lemma}
  For every epistemic state $s 
    $ and $i \leq n$, neither $\ad_i$ nor $\nxtstg$ apply to $s \times \nxtstg$. Furthermore, if $\ad_i$ and $\nxtstg$ do not apply to $s$, then for every epistemic action $f$ that applies to $s$, neither $\ad_i$ nor $\nxtstg$ apply to $s \times f$.
\end{lemma}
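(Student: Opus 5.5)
We follow the proof of the single-agent analogue, reducing both claims to the single formula $K_1 \neg \vbl{stg1}$.

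\textbf{Key observation.} Whenever $s' \models K_1 \neg \vbl{stg1}$, neither $\ad_i$ nor $\nxtstg$ applies to $s'$.
\begin{itemize}
\item For $\ad_i$: the designated event $e_s$ has precondition $\vbl{root} \land \bar{K}_1 \vbl{stg1}$, and this is false at any state satisfying $K_1\neg\vbl{stg1}$.
\item For $\nxtstg$: every disjunct of $\pre(e_{nx})$ that can hold at the root requires either $\bar{K}_1 \vbl{stg1}$ or $\vbl{stg1}$. By reflexivity of $R_1$ (we assume equivalence relations), $\vbl{stg1}$ at the designated state implies $\bar K_1\vbl{stg1}$. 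Hence $K_1\neg\vbl{stg1}$ refutes these disjuncts.
\item The remaining disjunct, $\vbl{ntF}$, is false at the root of any epistemic state reachable in our construction. To make this airtight, I would restrict attention to epistemic states whose designated world satisfies $\vbl{root}$ and not $\vbl{ntF}$. This property is preserved by all our actions, since designated events never change valuations.
\end{itemize}

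\textbf{First statement.} \Cref{lem:next-stage_no_add-multi} gives $s\times\nxtstg\models K_1\neg\vbl{stg1}$. The key observation then shows that neither $\ad_i$ nor $\nxtstg$ applies to $s \times \nxtstg$.

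\textbf{Second statement.} Suppose $\ad_i$ and $\nxtstg$ do not apply to $s$. Since $\ad_i$ fails at a root world, $s\models \neg\bar K_1\vbl{stg1}$, that is, $s\models K_1\neg\vbl{stg1}$. By \Cref{lem:K-is-preserved-multi}, this formula has modal depth $1$ and so persists to $s\times f$ for every applicable $f$. The key observation then yields the claim.

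\textbf{Main obstacle.} The delicate step is the analysis of $\pre(e_{nx})$. As written, its disjunction is ambiguous, and it contains the unconditional $\vbl{ntF}$ disjunct. I would parse it as in the figure, with $\neg\vbl{stg1}$ in place of $\vbl{stg1}$.
\begin{itemize}
\item At the designated world, $\vbl{root}$ holds and $\vbl{ntF}$ fails.
\item The first conjunct then forces $K_1\neg\vbl{empty}\land\bar K_1\vbl{stg1}$, which contradicts $K_1\neg\vbl{stg1}$.
\end{itemize}
So the argument relies only on the designated-world valuation and the $K_1$ formula. If the parse differs, the same reasoning applies provided every disjunct usable at the root mentions $\bar K_1\vbl{stg1}$ or $\vbl{stg1}$.
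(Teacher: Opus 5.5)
Your proposal is correct and follows the paper's own argument: both proofs reduce everything to $K_1 \neg \vbl{stg1}$, obtain it for $s \times \nxtstg$ from \Cref{lem:next-stage_no_add-multi}, and propagate it through any applicable action $f$ via \Cref{lem:K-is-preserved-multi}. Your version is more careful than the paper's in two respects. You make explicit the restriction to designated worlds satisfying $\vbl{root} \land \neg\vbl{ntF}$, which is preserved because designated valuations never change; the paper uses this tacitly in its claim that $\ad_i$ fails iff $K_1\neg\vbl{stg1}$, and the lemma genuinely needs it, e.g.\ for a single reflexive $\vbl{ntF}$-world. You also spell out why $K_1 \neg \vbl{stg1}$ blocks $\nxtstg$, which the paper leaves implicit.
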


\begin{proof}
  Observe that 
  $\ad_i$ does not apply to an epistemic state $s'$ if and only if 
  $s' \models K_1 \neg \vbl{stg1}$.
  Furthermore, by \Cref{lem:next-stage_no_add-multi}, 
  $s \times \nxtstg  \models K_1 \neg \vbl{stg1}$, which implies the first statement.
  The second statement is a consequence of \Cref{lem:K-is-preserved-multi}.  
\end{proof}

\begin{corollary}\label[corollary]{cor:verif-at-end-multi}
  Let 
  $s = \initial \times \alpha$, where $\alpha$ is a sequence of epistemic actions 
    that includes $\nxtstg$. Then, neither $\nxtstg$ nor $\ad_i$ 
    apply to $s$. 
    \end{corollary}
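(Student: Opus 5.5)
We prove \Cref{cor:verif-at-end-multi} exactly as its single-agent analogue, by combining the preceding unnamed lemma (the multi-agent version) with an induction along the suffix of $\alpha$. Write $\alpha = \beta\,\nxtstg\,\gamma$, where the displayed occurrence of $\nxtstg$ is the first one in $\alpha$. Since $s = \initial \times \alpha$ is defined, every prefix of $\alpha$ applies. Let $s' = \initial \times \beta\,\nxtstg$.

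By the first statement of the preceding lemma, applied to the epistemic state $\initial \times \beta$, neither $\ad_i$ (for any $i \leq n$) nor $\nxtstg$ applies to $s'$. Now write $\gamma = f_1 f_2 \cdots f_m$. We argue by induction on $j \leq m$ that neither $\ad_i$ nor $\nxtstg$ applies to $s' \times f_1 \cdots f_j$. The base case $j=0$ is what we just established. For the inductive step, $f_{j+1}$ applies to $s' \times f_1\cdots f_j$ because $s$ is defined. By the inductive hypothesis, neither $\ad_i$ nor $\nxtstg$ applies to that state. The second statement of the preceding lemma therefore gives the claim for $j+1$. Taking $j = m$ yields the corollary for $s = s' \times \gamma$.

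The only delicate point is making sure the second statement of the lemma is used correctly. Its hypothesis requires that both $\ad_i$ and $\nxtstg$ fail to apply. Underlying this is the observation, in the lemma's proof, that failure of $\ad_i$ is equivalent to $K_1\neg\vbl{stg1}$, and that this formula is preserved by every action by \Cref{lem:K-is-preserved-multi}. If one worries that the failure of $\nxtstg$ is not directly tied to $K_1 \neg \vbl{stg1}$, the inductive hypothesis should be strengthened to state that $K_1 \neg \vbl{stg1}$ holds at the current state. This formula holds at $s'$ by \Cref{lem:next-stage_no_add-multi} and is preserved along $\gamma$ by \Cref{lem:K-is-preserved-multi}. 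From it, the failure of both actions follows at every step, as argued in the preceding lemma. In particular, the failure of $\nxtstg$ is obtained in the same way the lemma obtains it, since its precondition at the root needs $\bar{K}_1\vbl{stg1}$.

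Because of the disjunction with $\vbl{ntF}$, this last point should be double-checked against the parsing of $\pre(e_{nx})$. Beyond that check, no further obstacle is expected.
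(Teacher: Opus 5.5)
Your proposal is correct and follows the paper's intended argument: the corollary is an immediate consequence of the preceding lemma, using its first statement for the step that applies $\nxtstg$ and its second statement iterated along the remaining suffix. Your induction makes this explicit, and strengthening the invariant to $K_1 \neg \vbl{stg1}$ via \Cref{lem:next-stage_no_add-multi} and \Cref{lem:K-is-preserved-multi} is exactly the reasoning inside the lemma's proof. Your parsing worry also resolves. There are no postconditions, so the designated world always keeps the valuation $\{\vbl{root}\}$, and $\vbl{ntF}$ is therefore false there. Hence, under any bracketing, $\pre(e_{nx})$ at that world entails $\bar{K}_1 \vbl{stg1}$, just as $\pre(e_s)$ does, and $K_1 \neg \vbl{stg1}$ rules this out.
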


Therefore, epistemic action $\nxtstg$ can only be applied once in a plan, and it marks the end of the applications of the actions $\ad_i$.

        \subsubsection{Verifying one Symbol at Each Step}
        Finally, for each $\vbl{bt} \in \{0,1,\#\}$, we define the epistemic action
        \[ \remv_{\vbl{bt}} = (\{e^{\vbl{bt}}\},(\{(e^{\vbl{bt}},e^{\vbl{bt}})\},\{(e^{\vbl{bt}},e^{\vbl{bt}})\}),e^{\vbl{bt}}), ~~\text{ where}  \]
        \begin{align*}
        \pre(e^{\vbl{bt}}) =& (\vbl{root} \land K_1 \neg \vbl{stg1})
        \lor \left(\vbl{ntF} \land \bigvee_{i = 1,2} (\bar{K}_i \vbl{ag1} \land \bar{K}_i \vbl{ag2})\right) \\ &
        \lor \bigvee_{i=1,2} (\vbl{ag}i \land \neg (\logicize{tail}(i) \land \vbl{bt} \land \bar{K}_i \vbl{ntF}) ).
             \end{align*}

        For each $\vbl{bt} \in \{0,1,\#\}$, $\remv_{\vbl{bt}}$ only activates during the second stage and if the epistemic state is of the 
    right form, \emph{i.e.}
        $s\block{q_\aa\vbl{bt}}{q_\bb\vbl{bt}}$, the resulting state is $s\block{q_\aa}{q_\bb}$. If the epistemic state is not of the right form, then $\pre(e^{\vbl{bt}})$  ensures that a last state can no longer access $w_{\vbl{ntF}}$, thus marking this attempt as a failed one.

\begin{lemma}\label[lemma]{lem:remove-only-at-the-end-multi}
  Let $s$ be an epistemic state. It cannot be the case that $\remv_{\vbl{bt}}$ applies to $s$ and so does $\ad_i$ or $\nxtstg$.
\end{lemma}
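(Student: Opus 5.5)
The natural approach is to mirror \Cref{lem:remove-only-at-the-end}. Each of the actions $\remv_{\vbl{bt}}$, $\ad_i$ and $\nxtstg$ is applicable to $s=(\M,w)$ exactly when the designated world $w$ satisfies the precondition of the designated event. So it suffices to show that $\pre(e^{\vbl{bt}})\land\pre(e_s)$ and $\pre(e^{\vbl{bt}})\land\pre(e_{nx})$ are unsatisfiable at a single world. No bisimulation or product-update argument is needed: this is a direct check on formulas of modal depth at most $1$, by case analysis on the three top-level disjuncts of $\pre(e^{\vbl{bt}})$.

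\emph{The case of $\ad_i$}, where $\pre(e_s)=\vbl{root}\land\bar K_1\vbl{stg1}$.
\begin{itemize}
\item The first disjunct, $\vbl{root}\land K_1\neg\vbl{stg1}$, contradicts $\bar K_1\vbl{stg1}$ outright.
\item The second disjunct requires $\vbl{ntF}$ at $w$. Ruling it out needs the invariant that no world satisfies both $\vbl{root}$ and $\vbl{ntF}$. This holds in every state reachable from $\initial$: by construction no world carries both labels, and product update copies valuations.
\item The third disjunct is the delicate one. Since $\logicize{ag1}=\vbl{root}\lor 0\lor 1$, a root world is an $\logicize{ag1}$-world. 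At such a world $\vbl{bt}$ is false whenever the valuation of the root carries no bit, so $\neg(\logicize{tail}(1)\land\vbl{bt}\land\bar K_1\vbl{ntF})$ holds trivially. Read literally, the third disjunct is therefore consistent with $\pre(e_s)$; in particular $\remv_{\vbl{bt}}$ would already apply to $\initial$.
\end{itemize}
This is the main obstacle. I would first verify whether the intended reading of the last disjunct excludes the root, i.e.\ whether it is meant as $(\logicize{ag}i\land\neg\vbl{root})\land\neg(\cdots)$. The root is supposed to be covered only by the first disjunct $\vbl{root}\land K_1\neg\vbl{stg1}$, exactly as in the single-agent $\pre(e^{\vbl{bt}})$, where the analogous disjunct is restricted to $\aa\lor\bb$. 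Under that reading the third disjunct forces $\neg\vbl{root}$, which contradicts $\pre(e_s)$. If the precondition must stay as written, the statement only holds relative to a weaker invariant, which I do not see. I would then propose adding the conjunct $\neg\vbl{root}$ to the last disjunct of $\pre(e^{\vbl{bt}})$. This change is harmless for \Cref{lem:removing-bits-successfully}-style arguments, because the root world of $s\block{q_\aa}{q_\bb}$ satisfies the first disjunct anyway.

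\emph{The case of $\nxtstg$.} I would split $\pre(e_{nx})$ into its $\vbl{ntF}$ disjunct and its main conjunctive disjunct, reading it with the $\neg\vbl{stg1}$ shown in the figure.
\begin{itemize}
\item Against the main disjunct: if $w\models\vbl{root}$, the main disjunct gives $\bar K_1\vbl{stg1}$, which clashes with the first disjunct of $\pre(e^{\vbl{bt}})$. The other two disjuncts of $\pre(e^{\vbl{bt}})$ are excluded as in the $\ad_i$ case, via the invariant and the root-exclusion reading.
\item Against the $\vbl{ntF}$ disjunct of $\pre(e_{nx})$ and the non-root branch of the main disjunct: here $w$ is not a root world. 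I would argue that this case cannot arise for the designated world, because every state reachable from $\initial$ has its designated world labelled $\vbl{root}$: the designated events' preconditions $\pre(e_s)$ and $\pre(e^{\vbl{bt}})$ keep the root, and product update preserves the label. With that invariant stated, the non-root alternatives are vacuous.
\end{itemize}
The proof thus reduces to these two invariants on reachable states plus the corrected reading of $\pre(e^{\vbl{bt}})$.
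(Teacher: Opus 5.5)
Your route is the paper's: its entire proof is the remark that $\pre(e^{\vbl{bt}})$ is inconsistent with both $\pre(e_s)$ and $\pre(e_{nx})$. Your diagnosis that this remark is false as written is correct.

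For $i=1$, the last disjunct of $\pre(e^{\vbl{bt}})$ holds at every world with valuation $\{\vbl{root}\}$, because such a world satisfies $\logicize{ag1}$ and falsifies $\vbl{bt}$. Hence $\remv_{\vbl{bt}}$ and $\ad_i$ both apply to $\initial$. Under the $\neg\vbl{stg1}$ reading, $\remv_{\vbl{bt}}$ and $\nxtstg$ also both apply to every $s\block{q_\aa}{q_\bb}\{0,1,\#\}$.

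So you should not hedge about ``a weaker invariant''. $\initial$ is trivially reachable, so the literal statement is simply false, and a repair such as your extra conjunct $\neg\vbl{root}$ is required. As you say, the repair leaves \Cref{lem:removing-bits-successfully-multi} and \Cref{lem:removing-bt-when-hash-multi} intact, since the root worlds there are kept by the first disjunct.

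There are two refinements.

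First, the reachability hypothesis must go into the statement itself, not only into the proof. Even with your repair, the claim fails for an arbitrary $s$. Take $s\block{q_\aa}{q_\bb}$ with $q_\aa\neq\varepsilon$, pointed at $w_{\vbl{ntF}}(\aa,1)$. Its $R_1$-class contains $w_{\aa,1}\models\logicize{ag1}$ and $w_{\aa,1,\#}\models\logicize{ag2}$. So it satisfies the second disjunct of $\pre(e^{\vbl{bt}})$, and also the $\vbl{ntF}$ disjunct of $\pre(e_{nx})$, under either reading of the latter. Restricting the lemma to states reachable from $\initial$ costs nothing, since that is all \Cref{cor:plan-sequence-multi} uses.

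Second, your two invariants are really one. The designated world of $s\times(\A,e)$ is $(w,e)$, and $V'(w,e)=V(w)$. So every reachable designated world has valuation exactly $\{\vbl{root}\}$. This single fact disposes of both $\vbl{ntF}$ disjuncts and of the non-root branch of $\pre(e_{nx})$. It is the actual justification, rather than which designated preconditions ``keep the root''.
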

\begin{proof}
  The lemma results from the observation that the precondition of $e^{\vbl{bt}}
    $ is consistent with neither of the preconditions for $e_s$ and $e_{nx}$.
\end{proof}

\begin{lemma}\label[lemma]{lem:removing-bits-successfully-multi}
  Let $q_\aa, q_\bb \neq \varepsilon$.
  \begin{enumerate}
    \item 
  Let $s = s\block{q_\aa}{q_\bb}$. 
    Then, $s \times \remv_{\vbl{\#}} = s\block{q_\aa }{q_\bb } - \#$.
        \item 
  Let $s = s\block{q_\aa \vbl{bt}}{q_\bb \vbl{bt}} - \#$. 
    Then, $s \times \remv_{\vbl{bt}} = s\block{q_\aa }{q_\bb }$.
  \end{enumerate}
\end{lemma}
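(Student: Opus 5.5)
The plan is to mirror the proof of \Cref{lem:removing-bits-successfully}. Since $\remv_{\vbl{bt}}$ has a single reflexive event $e^{\vbl{bt}}$, the product $s \times \remv_{\vbl{bt}}$ is isomorphic, via the projection $(w,e^{\vbl{bt}}) \mapsto w$, to the submodel of $s$ induced by the states satisfying $\pre(e^{\vbl{bt}})$. So for both items it suffices to do two things. First, determine exactly which states of $s$ fail $\pre(e^{\vbl{bt}})$. Second, check that the induced submodel equals, or is bisimilar to, the claimed target.

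For item 1 ($\vbl{bt} = \#$, $s = s\block{q_\aa}{q_\bb}$), I will go through the state types of $s\block{q_\aa}{q_\bb}$ one by one.
\begin{itemize}
\item $w_{\vbl{root}}$ satisfies $\vbl{root} \land K_1 \neg \vbl{stg1}$, since $w_{\vbl{stg1}}$ is absent.
\item The states $w_\xx$ and $w_{\xx,j,\#}$ satisfy $\vbl{ag2}$. Each $w_{\xx,j}$ satisfies $\vbl{ag1}$.
\item For each state $w$ with $w \models \vbl{ag}i$, I check whether $w \models \logicize{tail}(i) \land \# \land \bar{K}_i \vbl{ntF}$. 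This requires $w$ to carry $\#$, so it can only hold at some $w_{\xx,j,\#}$ with $i=2$.
\item $K_2 \neg \vbl{ag1}$ fails at $w_{\xx,j,\#}$ for $j<|q_\xx|$, because it sees $w_{\xx,j+1}$. It holds at $w_{\xx,|q_\xx|,\#}$, whose $R_2$-class is $\{w_{\xx,|q_\xx|,\#}, w_{\vbl{ntF}}(\xx,|q_\xx|,\#)\}$, and that class also gives $\bar{K}_2\vbl{ntF}$.
\item Hence exactly the two last $\#$-states fail the third disjunct. They also fail the first two disjuncts, since they are neither $\vbl{root}$ nor $\vbl{ntF}$.
\item Every $\vbl{ntF}$-copy sees, in its nontrivial class, both an $\vbl{ag1}$ state and an $\vbl{ag2}$ state under the same agent, so it satisfies the second disjunct. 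Two cases need checking: $w_{\vbl{ntF}}(\xx,j)$ under $R_1$ sees $w_{\xx,j}$ ($\vbl{ag1}$) and $w_{\xx,j,\#}$ ($\vbl{ag2}$); $w_{\vbl{ntF}}(\xx,j,\#)$ under $R_2$ sees $w_{\xx,j+1}$ and $w_{\xx,j,\#}$.
\end{itemize}
The one exception is $w_{\vbl{ntF}}(\xx,|q_\xx|,\#)$. Its $R_2$-class contains no $\vbl{ag1}$ state and its $R_1$-class is a singleton, so it is removed as well. I then have to confirm that this agrees with the definition of $s\block{q_\aa}{q_\bb}-\#$, which removes only the two $\#$-states. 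If it does not agree literally, I will argue bisimilarity instead. In $s\block{q_\aa}{q_\bb}-\#$ the leftover $w_{\vbl{ntF}}(\xx,|q_\xx|,\#)$ is an isolated reflexive point unreachable from the root, so deleting it yields a bisimilar model from $w_{\vbl{root}}$. I must also check that $w_\xx$, $w_{\vbl{ntF}}(\xx)$ and $w_{\xx,1}$ survive; this uses that $\pre$ only fails at $\#$-tails.

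For item 2 ($s = s\block{q_\aa\vbl{bt}}{q_\bb\vbl{bt}}-\#$, $\vbl{bt}\in\{0,1\}$), I will run the same analysis. Now the last states $w_{\xx,|q_\xx|+1}$ carry $\vbl{bt}$ and satisfy $\vbl{ag1}$. Their $R_1$-class is $\{w_{\xx,|q_\xx|+1}, w_{\vbl{ntF}}(\xx,|q_\xx|+1)\}$, since the $\#$-state has been removed. So $K_1\neg\vbl{ag2}$, $\bar{K}_1\vbl{ntF}$ and $\vbl{bt}$ all hold there, and these states are deleted. Their $\vbl{ntF}$ companions lose their $\vbl{ag2}$ witness, so they are deleted too. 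The previous $\#$-state $w_{\xx,|q_\xx|,\#}$ and its $\vbl{ntF}$ companion then form the $R_2$-class of the new tail. Here I must check carefully that $w_{\vbl{ntF}}(\xx,|q_\xx|,\#)$ is retained: its $R_2$-class still contains $w_{\xx,|q_\xx|+1}$ ($\vbl{ag1}$) in $s$ itself, since preconditions are evaluated in $s$. The result is exactly $s\block{q_\aa}{q_\bb}$.

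The main obstacle is this bookkeeping of the $\vbl{ntF}$-copies, that is, which copies survive in each item, together with the fact that preconditions are evaluated before deletion. I expect to resolve any mismatch in the exact set of surviving $\vbl{ntF}$-copies by invoking bisimulation invariance: unreachable or bisimilar duplicate copies do not matter.
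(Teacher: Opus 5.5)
Your proposal is correct and matches the paper's proof, which also identifies exactly the states failing $\pre(e^{\vbl{bt}})$ (the last $\#$-states and their $\vbl{ntF}$-companions in item 1, the last bit-states and their $\vbl{ntF}$-companions in item 2) and then reads off the result as the induced submodel. Your extra observation handles a point the paper passes over silently: the removed copy $w_{\vbl{ntF}}(\xx,|q_\xx|,\#)$ still survives in $s\block{q_\aa}{q_\bb}-\#$ as literally defined, but it is unreachable from the root, so the equality holds up to bisimulation.
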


\begin{proof}
  For both statements, it suffices to observe that the states that do not satisfy $\pre(e^{\vbl{bt}})$ in $s$ are exactly 
  $w_{\xx,|q_\xx|,\#}$ and $w_{\vbl{ntF}}(\xx,|q_\xx|,\#)$ for statement 1, and 
  $w_{\xx,|q_\xx|}$ and $w_{\vbl{ntF}}(\xx,|q_\xx|)$ for statement 2, where $\xx \in \{\aa,\bb\}$.
\end{proof}

Let $s = ((W,R,V),s_0)$ be an epistemic state. We say that $s$ is a \emph{failed state} when there is a sequence of states $s_0 R_1 s_1 R_2 \cdots R_{j} s_k$, where $j=1$ if $k$ is odd and $j=2$ if $k$ is even, such that 
$k \geq 1$; 
$((W,R,V),s_0) \models \vbl{root} \land K_1 \neg \vbl{stg1}$;
$((W,R,V),s_1) \models \xx $ for some $x \in \{\aa,\bb\}$;
for each $0<i<k$ and 
$j \in \{1,2\}$, 
if 
$((W,R,V),s_i) \models \vbl{ag}j $, then $((W,R,V),s_{i+1}) \models \vbl{ag}(3-j) $;
and 
$((W,R,V),s_k) \models \vbl{bt} \land K \neg \vbl{ntF}$ for some $\vbl{bt} \in \{\aa,\bb,0,1,\#\}$.

\begin{lemma}\label[lemma]{lem:removing-bt-when-hash-multi}
  If $\remv_{\vbl{bt}}$ applies to epistemic state $s$, and
  \begin{itemize}
    \item 
      $s = s\block{q_\aa}{q_\bb}$ and $\vbl{bt} \neq \#$ or $q_\aa = \varepsilon$ or $q_\bb = \varepsilon$; 
            \item 
      $s = s\block{q_\aa \vbl{bt}'}{q_\bb} - \#$ or $s = s\block{q_\aa }{q_\bb \vbl{bt}'} - \#$ and 
    $\vbl{bt}' \neq \vbl{bt}$; or  
    \item 
      $s$ is a failed state;
  \end{itemize}
  then 
  $s \times \remv_{\vbl{bt}}$ is a failed state.
\end{lemma}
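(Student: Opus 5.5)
We follow the proof of \Cref{lem:removing-bt-when-hash}: in each case we exhibit a witnessing path in $s$ and show that its image in $s \times \remv_{\vbl{bt}}$ satisfies the definition of a failed state (multi-agent version). Since $\remv_{\vbl{bt}}$ consists of a single event $e^{\vbl{bt}}$ that is reflexive for both agents, the product $s \times \remv_{\vbl{bt}}$ is simply the submodel of $s$ induced by the states satisfying $\pre(e^{\vbl{bt}})$. Therefore it suffices to find a path $s_0 R_1 s_1 R_2 \cdots s_{k'}$ in $s$ with the following three properties:
\begin{itemize}
\item its agents alternate correctly;
\item all its states satisfy $\pre(e^{\vbl{bt}})$;
\item in the restricted model, its final state satisfies $K\neg\vbl{ntF}$, i.e.\ every $\vbl{ntF}$-state it sees (for the relevant agent) has been removed.
\end{itemize}
The root always survives, since $s \models \vbl{root} \land K_1\neg\vbl{stg1}$ in the second stage (by \Cref{lem:next-stage_no_add-multi,lem:K-is-preserved-multi}).

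\textbf{Structured cases.} Take $s = s\block{q_\aa}{q_\bb}$ or $s\block{q_\aa}{q_\bb}-\#$, and choose the branch $\xx$ on which removal fails. Follow the alternating chain $w_{\vbl{root}} R_1 w_\xx R_2 w_{\xx,1} R_1 w_{\xx,1,\#} R_2 \cdots$ up to the last symbol state $r_k$ on that branch. Each intermediate state $r_i$ is $\vbl{ag}j$ with $\neg\logicize{tail}(j)$, because it sees an $\vbl{ag}(3-j)$ successor along the chain. Hence each $r_i$ satisfies the disjunct $\vbl{ag}j \land \neg(\logicize{tail}(j)\land\cdots)$ and survives.

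At $r_k$ we have $\logicize{tail}(j) \land \vbl{bt}' \land \bar K_j\vbl{ntF}$ with $\vbl{bt}'\neq\vbl{bt}$, so $r_k$ also survives. Its $\vbl{ntF}$-neighbour, however, sees only $r_k$ and itself in its $R_j$-class. That class contains no $\vbl{ag}(3-j)$ state, so the neighbour fails the $\vbl{ntF}$ disjunct and is removed. Consequently $r_k \models K_j\neg\vbl{ntF}$ in the product, and the path witnesses a failed state.

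The individual sub-cases are handled as follows:
\begin{itemize}
\item \emph{Mismatched bit} ($\vbl{bt}'\neq\vbl{bt}$, or $\vbl{bt}\neq\#$ on $s\block{q_\aa}{q_\bb}$ whose tail is $\#$): as described above.
\item \emph{Empty branch}: take $r_k = w_\xx$ with $k=1$. Here $w_\xx$ is $\logicize{tail}(2)$ and is not a $\vbl{bt}$-state, because it carries neither bit nor $\#$.
\item \emph{Parity}: the indices $j$ alternate, so the required final relation $R_1$ or $R_2$ matches the parity of $k$.
\end{itemize}

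\textbf{Failed-state case.} Let $s_0 \cdots s_k$ witness that $s$ is failed. The root survives as before. Each intermediate $s_i$ is $\vbl{ag}j$ and has the $\vbl{ag}(3-j)$-successor $s_{i+1}$, so $\neg\logicize{tail}(j)$ holds and it survives. For $s_k$: if it is $\vbl{ag}j$, then $\bar K_j\vbl{ntF}$ fails at it, so the negated conjunction holds and it survives. Finally, $K\neg\vbl{ntF}$ is a box formula of modal depth 1 and is preserved by \Cref{lem:K-is-preserved-multi}. The path therefore survives intact.

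\textbf{Main obstacle.} The delicate step is checking that the removed $\vbl{ntF}$-state really fails the $\vbl{ntF}$ disjunct of $\pre(e^{\vbl{bt}})$. This requires showing that no $\vbl{ntF}$-state adjacent to the failing tail sees both an $\vbl{ag}1$ and an $\vbl{ag}2$ state through a single agent, while the $\vbl{ntF}$-states attached to surviving interior states do. We would verify this against the equivalence classes in the definition of $s\block{q_\aa}{q_\bb}$, case by case:
\begin{itemize}
\item at $w_{\vbl{ntF}}(\xx)$;
\item at $w_{\vbl{ntF}}(\xx,j)$;
\item at $w_{\vbl{ntF}}(\xx,j,\#)$.
\end{itemize}
We expect the boundary cases $q_\xx=\varepsilon$ and $j=|q_\xx|$ to need separate treatment.
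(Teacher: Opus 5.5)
Your proposal is correct and follows essentially the same route as the paper. The paper likewise exhibits the alternating branch path $w_{\vbl{root}} R_1 w_\xx R_2 w_{\xx,1} R_1 \cdots$ to the mismatched tail (treating only the case $s\block{q_\aa \vbl{bt}'}{q_\bb}-\#$ and calling the others similar), checks that every state on it satisfies $\pre(e^{\vbl{bt}})$, and observes that the tail's $\vbl{ntF}$-neighbour does not.

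One caveat concerns your explicit failed-state case, which goes beyond the paper: it assumes each intermediate $s_i$ is $\vbl{ag}j$ and reaches $s_{i+1}$ via $R_j$, and the literal definition of a failed state does not guarantee this. It does hold for failed states that are induced submodels of some $s\block{q_\aa}{q_\bb}$, where the witness is forced to be a branch path, and these are the only failed states the reduction produces.
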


\begin{proof}
  We prove the lemma for the case where $s = s\block{q_\aa \vbl{bt}'}{q_\bb} - \#$, as the other cases are similar.
  There exist some states $w_{\vbl{root}} = r_0 ~ R_1 ~ w_{\aa} = r_1 ~R_2~\cdots ~R_i~ r_k = w_{\aa,|q_\aa|+1}$ in $s$, such that $w_{\aa,|q_\aa|+1} \models \vbl{tail} \land \vbl{bt}'$ in $s$, and
  for each $0<i<k$ and 
  $j \in \{1,2\}$,
    if
$r_i \models \vbl{ag}j $, then $r_{i+1} \models \vbl{ag}(3-j) $.
  Since $\remv_{\vbl{bt}}$ applies to $s$, $r_0 \models \pre(e^{\vbl{bt}}
    )$; furthermore, it is not hard to verify that for every $ 0 < i \leq k$, $r_i \models \pre(e^{\vbl{bt}})$; 
  and that $w_{\vbl{ntF}}(\aa,|q_\aa|,\#) \not \models \pre(e^{\vbl{bt}})$.
    Therefore, there exists a sequence of states 
  \[  
    (r_0,e^{\vbl{bt}}
        ) ~ R_1 ~ (r_1,e^{\vbl{bt}})  ~R_2~\cdots ~R_i~ (r_k,e^{\vbl{bt}}) 
  \]
  in $s \times \remv_{\vbl{bt}}$
  that satisfies the conditions for a failed state.
\end{proof}

\begin{corollary}\label[corollary]{cor:plan-sequence-multi}
  Let $\alpha$ be a plan (a sequence of epistemic actions) that applies to epistemic state $s_I$.
  Then, $\alpha$ is of the form $\ad_{i_1}\ad_{i_2}\cdots \ad_{i_k}$ or 
  \[\ad_{i_1}\ad_{i_2}\cdots \ad_{i_k} ~\nxtstg~ \remv_{\vbl{bt}_1}\remv_{\vbl{bt}_2}\cdots \remv_{\vbl{bt}_l}.\]
\end{corollary}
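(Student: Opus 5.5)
The plan is to mirror the proof of \Cref{cor:plan-sequence}, using the multi-agent analogues of its ingredients: \Cref{lem:next-stage_no_add-multi}, \Cref{lem:K-is-preserved-multi}, \Cref{cor:verif-at-end-multi}, \Cref{lem:remove-only-at-the-end-multi}, together with \Cref{cor:adding-blocks-multi}. Let $\alpha = f_1 f_2 \cdots f_m$ be a plan applying to $\initial$, and set $s_j = \initial \times f_1\cdots f_j$.

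First, I show that the actions preceding the first occurrence of $\nxtstg$ are all of the form $\ad_i$. Let $f_{j+1}$ be the first action that is not some $\ad_i$. By \Cref{cor:adding-blocks-multi}, $s_j$ is a first-stage state $s\block{q_\aa}{q_\bb}\{0,1,\#\}$, so $s_j \models \vbl{root} \land \bar{K}_1 \vbl{stg1}$. I then check that $\remv_{\vbl{bt}}$ does not apply there. The disjuncts of $\pre(e^{\vbl{bt}})$ are $\vbl{root} \land K_1\neg\vbl{stg1}$, which is false, and disjuncts requiring $\vbl{ntF}$ or $\vbl{ag}i$ at the root, where $\vbl{ag}2$ is false. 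The only delicate one is $\vbl{ag}1 \land \neg(\logicize{tail}(1)\land \vbl{bt} \land \bar{K}_1\vbl{ntF})$. Since $\vbl{root}$ satisfies $\vbl{ag}1$, this disjunct would hold unless the precondition is read as in the intended design. Here I would either invoke \Cref{lem:remove-only-at-the-end-multi} directly, since it states that $\remv_{\vbl{bt}}$ and $\ad_i$ never both apply and $\ad_i$ does apply to $s_j$, or verify the root case by hand. I will use \Cref{lem:remove-only-at-the-end-multi}, so $f_{j+1} = \nxtstg$ (or $\alpha$ ends at $j$).

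Second, after $\nxtstg$ has been applied, \Cref{cor:verif-at-end-multi} gives that neither $\nxtstg$ nor any $\ad_i$ applies to any later state. The reason is that $K_1\neg\vbl{stg1}$ holds after $\nxtstg$ by \Cref{lem:next-stage_no_add-multi} and is preserved by \Cref{lem:K-is-preserved-multi}, and both $\pre(e_s)$ and the relevant part of $\pre(e_{nx})$ at the root need $\bar{K}_1\vbl{stg1}$. Hence every action after $\nxtstg$ is a $\remv_{\vbl{bt}}$.

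The main obstacle is the precondition bookkeeping in the first step. This covers both the disjunctive structure of $\pre(e_{nx})$, which is parenthesised loosely in the statement, and of $\pre(e^{\vbl{bt}})$ at the root. I would handle it by relying on \Cref{lem:remove-only-at-the-end-multi} and \Cref{cor:verif-at-end-multi} as stated, rather than recomputing the preconditions. Finally, combining the two steps yields exactly the two displayed forms. Note that $\vbl{bt}_j$ now ranges over $\{0,1,\#\}$.
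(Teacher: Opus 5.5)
Your decomposition is the paper's: its proof only chains the stage-separation lemmas. Yours is in fact more complete, because the paper's citation list omits \Cref{lem:remove-only-at-the-end-multi}, which is the only ingredient that excludes a $\remv_{\vbl{bt}}$ before $\nxtstg$.

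However, the worry you raise about the root is not about how the precondition is read, and deferring to \Cref{lem:remove-only-at-the-end-multi} does not resolve it, because that lemma is false for $\pre(e^{\vbl{bt}})$ as printed. Since $\vbl{root}$ implies $\vbl{ag1}$ and no $\vbl{bt} \in \{0,1,\#\}$ is true at $w_{\vbl{root}}$, the disjunct $\vbl{ag1} \land \neg(\logicize{tail}(1) \land \vbl{bt} \land \bar{K}_1 \vbl{ntF})$ holds at $w_{\vbl{root}}$ purely propositionally, in every state. So $\remv_0$ applies to $\initial$ and to every first-stage state, and $\alpha = \remv_0$ is a plan of neither displayed form. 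The corollary itself therefore fails verbatim, and no argument from the printed definitions can prove it. (One can check that such plans never reach $\varPhi_G$. They delete $w_{\vbl{stg1}}$, blocking further $\ad_i$. Meanwhile the loop worlds $w_0(\xx), w_1(\xx), w_\#(\xx)$ always see a world of the other type, so neither they nor the chain leading to them is ever removed. The soundness half of \Cref{thm:reduction} could thus be rescued separately, but this corollary cannot.)

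What is missing is a repair of the precondition. For example, conjoin $\neg\vbl{root}$ to the third disjunct of $\pre(e^{\vbl{bt}})$, so that the root can satisfy it only via $\vbl{root} \land K_1 \neg \vbl{stg1}$; in reachable states the root is never labelled $\vbl{ntF}$. This changes nothing else. The root $w_{\vbl{root}}$ is the only world satisfying $\vbl{root}$, and in second-stage states it already passes through the first disjunct, so \Cref{lem:removing-bits-successfully-multi,lem:removing-bt-when-hash-multi} are unaffected.

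With this repair, your first step becomes a direct check rather than an appeal to a lemma. At the root of a first-stage state, $K_1 \neg \vbl{stg1}$ fails, $\vbl{ntF}$ fails, and the third disjunct is blocked. So $\remv_{\vbl{bt}}$ does not apply, and the first non-$\ad_i$ action must be $\nxtstg$.

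Your second step is fine as it stands. The loose bracketing of $\pre(e_{nx})$ is harmless for this corollary: under every reading, $\nxtstg$ either never applies at the root, or requires $\bar{K}_1 \vbl{stg1}$ there and deletes $w_{\vbl{stg1}}$. In the latter case \Cref{lem:K-is-preserved-multi} then keeps $K_1 \neg \vbl{stg1}$ true.
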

\begin{proof}
  The corollary is a consequence of \Cref{lem:next-stage_no_add-multi,cor:verif-at-end-multi,lem:adding-blocks,lem:removing-bits-successfully,lem:removing-bt-when-hash-multi}.
\end{proof}

To complete the definition of $EP$, let 
\[
  \act = \{ \ad_i, \nxtstg, remove^{\vbl{bt}} \mid 1 \leq i \leq n \text{ and } \vbl{bt} \in \{ 0,1,\# \}  \},
\]
and $\varPhi_G = K_1 \neg \vbl{empty} \land K_1 (\neg(a \lor b) \lor 
(
    \vbl{tail}(2) \land \bar{K}_2 \vbl{ntF}) )$.

\begin{lemma}\label[lemma]{lem:failed-states-fail-multi}
  Let $s$ be a failed epistemic state. Then, $s \not \models \varPhi_G$.
\end{lemma}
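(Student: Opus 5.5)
We argue directly from the definition of a failed state (multi-agent version) and the shape of $\varPhi_G = K_1 \neg \vbl{empty} \land K_1 (\neg(\aa \lor \bb) \lor (\vbl{tail}(2) \land \bar{K}_2 \vbl{ntF}))$. Let $s_0 R_1 s_1 R_2 \cdots s_k$ be the witnessing sequence. Since $s_0 R_1 s_1$ and $s_1 \models \xx$ for some $\xx \in \{\aa,\bb\}$, the second conjunct of $\varPhi_G$ forces $s_1 \models \vbl{tail}(2) \land \bar{K}_2 \vbl{ntF}$. Our goal is to derive a contradiction from this together with the remaining conditions on the sequence.

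\textbf{Case $k = 1$.} Here $s_1 = s_k$, so $s_1 \models K \neg \vbl{ntF}$. We read this $K$ as $K_2$, the relation used at odd positions, matching the parity convention in the definition. Then $s_1 \models K_2 \neg\vbl{ntF}$, which contradicts $\bar{K}_2 \vbl{ntF}$.

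\textbf{Case $k \geq 2$.} Now $s_1 \models \vbl{tail}(2)$, so $s_1 \models \vbl{ag2} \land K_2 \neg \vbl{ag1}$. The alternation condition of the failed-state definition says that, since $s_1 \models \vbl{ag2}$, we have $s_2 \models \vbl{ag1}$. The step $s_1 \to s_2$ is along $R_2$, because the relations alternate starting with $R_1$. Hence $s_1 R_2 s_2$ with $s_2 \models \vbl{ag1}$, contradicting $K_2 \neg \vbl{ag1}$ at $s_1$.

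In both cases $s \not\models \varPhi_G$.

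The main obstacle is bookkeeping. We must fix which agent's relation each step of the sequence uses and which $K_i$ appears in the final condition $K \neg\vbl{ntF}$; the definition's statement of $j$ is ambiguous here. We also need to check that $s_1 \models \vbl{ag2}$ follows from $\vbl{tail}(2)$ itself, which it does by the definition of $\vbl{tail}(2)$, so it need not be assumed separately. If the intended reading of the final condition uses $K_1$ when $k=1$, we would instead argue that $s_1$, an $\vbl{ag2}$ state, is handled via its $\bar{K}_2\vbl{ntF}$ obligation; this is the point to verify against the conventions used in \Cref{lem:removing-bt-when-hash-multi}.
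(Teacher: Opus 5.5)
Your argument is correct and is the same approach as the paper, whose proof just says the lemma follows from the definitions of a failed state and of $\varPhi_G$; you unpack this correctly. For $k \geq 2$ the step $s_1 R_2 s_2$ together with the alternation condition gives $s_1 \models \bar{K}_2 \vbl{ag1}$, refuting $\vbl{tail}(2)$. For $k=1$ the final condition has to be read as $K_2 \neg \vbl{ntF}$, which is the convention of $\okstate$ for $\aa/\bb/\#$ states.

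Your closing fallback should be dropped, because it would not work. Under a $K_1$ reading, $s\block{\varepsilon}{\varepsilon}$ would itself be a failed state (take $s_0 = w_{\vbl{root}}$, $s_1 = w_\aa$), yet it satisfies $\varPhi_G$. So the lemma would be false under that reading, and the $K_2$ reading is forced rather than merely convenient.
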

\begin{proof}
  A consequence of the definition of a failed state and of $\varPhi_G$.
\end{proof}

\begin{lemma}\label[lemma]{lem:successfully-removing-yields-match-multi}
  Let $q_\aa, q_\bb \in \{0,1\}^*$ and let $rm = \remv_{\vbl{bt}_1}\remv_{\vbl{bt}_2}\cdots \remv_{\vbl{bt}_l}$.
  If $s\block{q_\aa}{q_\bb} \times rm \models \varPhi_G$, then $q_\aa = q_\bb$ and 
  \[
    rm = \remv_{\#}\remv_{q_\aa[|q_\aa|]}\remv_{\#}\remv_{q_\aa[|q_\aa|-1]}\cdots \remv_{\#}\remv_{q_\aa[1]}.
  \]
\end{lemma}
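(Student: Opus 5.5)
We argue by induction on $l$, mirroring the proof of \Cref{lem:successfully-removing-yields-match}, but over a larger family of intermediate states: we prove the claim simultaneously for $s\block{q_\aa}{q_\bb}$ and for $s\block{q_\aa}{q_\bb} - \#$. For the latter, the expected removal sequence is the same as for the former with the leading $\remv_{\#}$ dropped.

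For the base case $l=0$, the formula $\varPhi_G$ must hold at $s\block{q_\aa}{q_\bb}$ itself. Its second conjunct forces every $K_1$-successor of the root labelled $\aa$ or $\bb$, namely $w_\aa$ and $w_\bb$, to satisfy $\vbl{tail}(2)\land \bar{K}_2\vbl{ntF}$. Now $\vbl{tail}(2)$ requires $K_2\neg\vbl{ag1}$. If $q_\xx \neq \varepsilon$, then $w_\xx R_2 w_{\xx,1}$ and $w_{\xx,1}\models \vbl{ag1}$, so $\vbl{tail}(2)$ fails. Hence $q_\aa=q_\bb=\varepsilon$ and $rm=\varepsilon$, which is the claimed form. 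In the $-\#$ variant the base case fails outright whenever some branch is non-empty, and this is consistent with what we need.

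For the inductive step, write $rm = \remv_{\vbl{bt}_1} rm'$ and split on the shape of $s$ and on $\vbl{bt}_1$.
\begin{itemize}
\item If $s = s\block{q_\aa}{q_\bb}$ with $q_\aa,q_\bb\neq\varepsilon$ and $\vbl{bt}_1=\#$, then item 1 of \Cref{lem:removing-bits-successfully-multi} gives $s\times\remv_\# = s\block{q_\aa}{q_\bb}-\#$.
\item If $s = s\block{q_\aa'\vbl{bt}}{q_\bb'\vbl{bt}}-\#$ and $\vbl{bt}_1 = \vbl{bt}$, then item 2 of \Cref{lem:removing-bits-successfully-multi} gives $s\block{q_\aa'}{q_\bb'}$.
\item In every other case, namely a wrong symbol, $\#$ removed from a $-\#$ state, a bit removed from a full state, an empty branch, or mismatching last bits, \Cref{lem:removing-bt-when-hash-multi} shows that the result is a failed state. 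By that same lemma every further $\remv$ preserves failure, and by \Cref{lem:failed-states-fail-multi} the final state does not satisfy $\varPhi_G$, a contradiction.
\end{itemize}
The inductive hypothesis then applies to $rm'$. In the first case it gives $rm' = \remv_{q_\aa[|q_\aa|]}\remv_\#\cdots$ together with the equality of the remaining strings. In the second case it gives $q_\aa'=q_\bb'$, hence $q_\aa = q_\bb$, and prepending $\remv_{\#}\remv_{\vbl{bt}}$ yields the stated sequence.

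The main obstacle is making the case split exhaustive. \Cref{lem:removing-bt-when-hash-multi} lists only some shapes explicitly, and the cases it covers only as ``similar'' must be checked, in particular applying $\remv_\#$ to a $-\#$ state and applying $\remv_0$ or $\remv_1$ to a full state $s\block{q_\aa}{q_\bb}$ with non-empty branches. For these, the plan is to exhibit the alternating $R_1/R_2$ path from $w_{\vbl{root}}$ to the current last state $w_{\xx,|q_\xx|,\#}$, respectively $w_{\xx,|q_\xx|}$. We then check that each state on this path survives the update via the last disjunct of $\pre(e^{\vbl{bt}})$, while the $\vbl{ntF}$-copy attached to the last state does not survive, because it fails the middle disjunct. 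This makes the last state satisfy $K\neg\vbl{ntF}$ and produces a failed state. We also need that $\remv_{\vbl{bt}}$ applies at all, which is given, since $\alpha$ is a plan.
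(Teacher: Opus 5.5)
Your proposal is correct and follows essentially the paper's proof: induction on the removal sequence, with \Cref{lem:removing-bits-successfully-multi} handling the matching steps and \Cref{lem:removing-bt-when-hash-multi,lem:failed-states-fail-multi} ruling out every other case. The only difference is bookkeeping. You strengthen the induction to also cover $s\block{q_\aa}{q_\bb}-\#$ with both strings non-empty, whereas the paper handles $\remv_{\#}$ and the subsequent bit removal together in a single inductive step.
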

\begin{proof}
  We proceed by induction on $rm$. If $rm = \varepsilon$, then $s\block{q_\aa}{q_\bb} = s\block{q_\aa}{q_\bb} \times rm \models \varPhi_G$, and therefore, $q_\aa = q_\bb = \varepsilon$.
  If $rm = \remv_{\vbl{bt}_1} rm' $, then, by \Cref{lem:removing-bits-successfully,lem:removing-bt-when-hash-multi}, either $s\block{q_\aa}{q_\bb} \times \remv_{\vbl{bt}_1}$ is a failed state, or it is $s\block{q_\aa}{q_\bb}-\#$, where $q_\aa,q_\bb\neq \varepsilon$ and $\vbl{bt}_1 = \#$.
  The first case results in a contradiction, due to \Cref{lem:removing-bt-when-hash-multi,lem:failed-states-fail-multi}.

  If $rm' = \varepsilon$, then $s\block{q_\aa}{q_\bb} \times rm = s\block{q_\aa}{q_\bb}-\# \not \models \varPhi_G$, which contradicts the lemma's assumptions. Therefore, $rm' = \remv_{\vbl{bt}_2} rm''$. Using similar reasoning as above, 
    $q_\aa = q_\aa' \vbl{bt}_2$ and $q_\bb = q_\bb' \vbl{bt}_2$. We conclude the proof by the inductive hypothesis on $rm''$.
\end{proof}

We can conclude with our main theorem.

\begin{theorem}\label[theorem]{thm:reduction}
  $B$ has a match if and only if $EP$ has a plan.
\end{theorem}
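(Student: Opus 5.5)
The proof follows the single-agent argument of \Cref{thm:reduction-K}, with the multi-agent lemmas in place of the single-agent ones and the separator symbol $\#$ interleaved into the verification stage.

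\emph{Match to plan.} Let $i_1,\dots,i_k$ be a match and write $m_1\cdots m_\ell = a_{i_1}\cdots a_{i_k} = b_{i_1}\cdots b_{i_k}$. I would take the plan
\[
\alpha = \ad_{i_1}\cdots\ad_{i_k}~\nxtstg~\remv_{\#}\remv_{m_\ell}\remv_{\#}\remv_{m_{\ell-1}}\cdots\remv_{\#}\remv_{m_1}.
\]
The states reached are as follows.
\begin{itemize}
\item By \Cref{cor:adding-blocks-multi}, $\initial\times\ad_{i_1}\cdots\ad_{i_k} = s\block{m_1\cdots m_\ell}{m_1\cdots m_\ell}\{0,1,\#\}$. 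This needs $\initial$ to play the role of $s\block{\varepsilon}{\varepsilon}\{0,1,\#\}$ for the first $\ad_i$. I would check this as in \Cref{lem:adding-first-block}: no event of $\ad_i$ has its precondition satisfied at $w_{\vbl{empty}}$, and the $\vbl{ntF}/\vbl{end}$ copies in $\initial$ match $Tl_\varepsilon$ up to bisimulation.
\item \Cref{lem:next-stage_short_multi} then gives $s\block{m_1\cdots m_\ell}{m_1\cdots m_\ell}$.
\item Alternating the two items of \Cref{lem:removing-bits-successfully-multi} peels one symbol per pair of removals, ending at $s\block{\varepsilon}{\varepsilon}$.
\end{itemize}
Finally, I check directly from the definition that $s\block{\varepsilon}{\varepsilon}\models\varPhi_G$. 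The root sees via $R_1$ only $w_{\vbl{root}}, w_\aa, w_\bb$. Each $w_\xx$ satisfies $\logicize{tail}(2)$, since its $R_2$-class is $\{w_\xx, w_{\vbl{ntF}}(\xx)\}$ and contains no $\logicize{ag1}$ state, and it also satisfies $\bar K_2\vbl{ntF}$. No state in the root's $R_1$-class satisfies $\vbl{empty}$.

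\emph{Plan to match.} Let $\alpha$ be a plan with $\initial\times\alpha\models\varPhi_G$. By \Cref{cor:plan-sequence-multi}, $\alpha$ has one of the two shapes. In the first shape, \Cref{cor:adding-blocks-multi} gives a state $s\block{q_\aa}{q_\bb}\{0,1,\#\}$, or $\initial$ itself if $k=0$. This state violates $\varPhi_G$ for two reasons:
\begin{itemize}
\item $w_{\vbl{stg1}}$ is $R_1$-accessible from the root and satisfies $\neg(\aa\lor\bb)$, which is harmless on its own;
\item however, $w_\xx$, or in general the $\logicize{ag2}$ states reachable via $R_1$, fail $\logicize{tail}(2)$ because their $R_2$-class contains a $\logicize{ag1}$ state ($w_{\xx,1}$, or $w_0(\xx)$ when $q_\xx=\varepsilon$).
\end{itemize}
For $\initial$, additionally $\bar K_1\vbl{empty}$ holds. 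So the first shape is ruled out. In the second shape, we reach $s\block{q_\aa}{q_\bb}$ with $q_\aa = a_{i_1}\cdots a_{i_k}$ and $q_\bb = b_{i_1}\cdots b_{i_k}$, using \Cref{cor:adding-blocks-multi} and \Cref{lem:next-stage_short_multi}. \Cref{lem:successfully-removing-yields-match-multi} then yields $q_\aa=q_\bb$, so $i_1,\dots,i_k$ is a match.

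\emph{Main obstacle.} PCP requires a \emph{non-empty} match, so the case $k=0$ must be excluded. This is the role of the conjunct $K_1\neg\vbl{empty}$ in $\varPhi_G$. I must check that when $\nxtstg$ is applied directly to $\initial$, it fails to apply. Its precondition at the root demands $K_1\neg\vbl{empty}$, which fails because $w_{\vbl{empty}}$ is $R_1$-accessible. Hence the empty block sequence cannot reach stage two. By \Cref{lem:K-is-preserved-multi}, $\bar K_1\vbl{empty}$ cannot be eliminated without some $\ad_i$, since $\ad_i$ drops $w_{\vbl{empty}}$. Verifying these formula evaluations against the somewhat intricate multi-agent definitions is the fiddly step. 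I would do it by inspecting the $R_1$-class of the root in each relevant model.
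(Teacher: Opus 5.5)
Your proposal is correct and follows the paper's proof essentially step for step: the same plan $\ad_{i_1}\cdots\ad_{i_k}\,\nxtstg\,\remv_{\#}\remv_{m_\ell}\cdots\remv_{\#}\remv_{m_1}$, and the same appeal to \Cref{cor:plan-sequence-multi} and \Cref{lem:successfully-removing-yields-match-multi} for the converse, while also making explicit several points the paper leaves implicit (that $\initial$ behaves like $s\block{\varepsilon}{\varepsilon}\{0,1,\#\}$ under $\ad_i$, that $s\block{\varepsilon}{\varepsilon}\models\varPhi_G$, why first-stage states fail $\varPhi_G$, and that $k=0$ is excluded because $e_{nx}$ requires $K_1\neg\vbl{empty}$ at the root). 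One small slip: \Cref{lem:K-is-preserved-multi} preserves $K_1$-formulas, not $\bar K_1\vbl{empty}$, so it does not justify your closing sentence about $\vbl{empty}$; that sentence is unnecessary anyway, since the $k=0$ case is already closed by \Cref{cor:plan-sequence-multi} together with the inapplicability of $\nxtstg$ to $\initial$.
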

\begin{proof}
  To prove the first direction, let $i_1, i_2, \ldots, i_k$ be a match for $B$. Let 
  $m_1 m_2 \cdots m_\ell = a_1 a_2 \cdots a_k = b_1 b_2 \cdots b_k$, where for every $1 \leq i \leq \ell$, $m_i \in \{0,1\}$. 
  Let 
  \begin{align*}
  \ad_m &= \ad_{i_1} \ad_{i_2} \cdots \ad_{i_k};\\ 
  \remv_m &= \remv_{\#} \remv_{m_\ell} \remv_{\#} 
    \cdots \remv_{\#} \remv_{m_1}; \text{ and} \\ 
  \alpha &= \ad_m ~\nxtstg~ \remv_{m}.
  \end{align*}
  We now demonstrate that $\alpha$ is a plan that applies to $s_I$ and that $s_{I} \times \alpha \models \varPhi_G$.

  By \Cref{cor:adding-blocks-multi}, $s_{I} \times \ad_m = s\block{m_1 m_2 \cdots m_\ell}{m_1 m_2 \cdots m_\ell} \{ 0,1,\# \}$; then, \Cref{lem:next-stage_short_multi} yields that \[s_{I} \times \ad_m ~\nxtstg =  s\block{m_1 m_2 \cdots m_\ell}{m_1 m_2 \cdots m_\ell} \{ 0,1,\# \} \times \nxtstg = s\block{m_1 m_2 \cdots m_\ell}{m_1 m_2 \cdots m_\ell};\] and finally, \Cref{lem:removing-bits-successfully} yields that \[s_I \times \alpha = s\block{m_1 m_2 \cdots m_\ell}{m_1 m_2 \cdots m_\ell} \times \remv_{m} = s\block{\varepsilon}{\varepsilon}\models \varphi_G.\]

  For the converse direction, let $\alpha$ be a plan that applies to $s_I$, such that $s_I \times \alpha \models \varPhi_G$.
  By \Cref{cor:plan-sequence-multi}, $\alpha$ is of the form 
  $\ad_{i_1}\ad_{i_2}\cdots \ad_{i_k}$ or 
  \[\ad_{i_1}\ad_{i_2}\cdots \ad_{i_k} ~\nxtstg~ \remv_{\vbl{bt}_1}\remv_{\vbl{bt}_2}\cdots \remv_{\vbl{bt}_l}.\]
  If $\alpha$ is of the form 
  $\ad_{i_1}\ad_{i_2}\cdots \ad_{i_k}$, then $s_I \times \alpha = s\block{a_{i_1}\cdots a_{i_k}}{b_{i_1}\cdots b_{i_k}} \not \models \varPhi_G$, contradicting our assumptions.
  Therefore, $\alpha$ is of the form 
   \[\ad_{i_1}\ad_{i_2}\cdots \ad_{i_k} ~\nxtstg~ \remv_{\vbl{bt}_1}\remv_{\vbl{bt}_2}\cdots \remv_{\vbl{bt}_l}.\]
   We prove that $i_1,i_2,\ldots,i_k$ is a match for $B$. 
     Let 
    $m_1 m_2 \cdots m_\ell = a_1 a_2 \cdots a_k$ and 
    $m'_1 m'_2 \cdots m'_{\ell'} = b_1 b_2 \cdots b_k$.
    By \Cref{cor:adding-blocks}, $s_{I} \times \ad_{i_1}\ad_{i_2}\cdots \ad_{i_k} = s\block{m_1 m_2 \cdots m_\ell}{m'_1 m_2 \cdots m_{\ell'}} \{ 0,1,\# \}$; then, \Cref{lem:next-stage_short} yields that 
    \[
      s_{I} \times \ad_{i_1}\ad_{i_2}\cdots \ad_{i_k} ~\nxtstg =  s\block{m_1 m_2 \cdots m_\ell}{m'_1 m'_2 \cdots m'_{\ell'}} \{ 0,1,\# \} \times \nxtstg = s\block{m_1 m_2 \cdots m_\ell}{m'_1 m'_2 \cdots m'_{\ell'}}.
      \]
     Finally, by \Cref{lem:successfully-removing-yields-match-multi}, we conclude that $a_1 a_2 \cdots a_k = m_1 m_2 \cdots m_\ell = m'_1 m'_2 \cdots m'_{\ell'} = b_1 b_2 \cdots b_k$, and therefore $i_1,i_2,\ldots,i_k$ is a match for $B$. 
\end{proof}

\section{A Single Agent with Reflexivity or Symmetry}
\label{sec:rs}

We now proceed to adjust our reduction to the case of a single agent with an accessibility relation that is reflexive or symmetric --- but not transitive.
In fact, we will give a construction using accessibility relations that are both reflexive and symmetric.

We use the following set of propositional variables: 
\[
\prop = \{ 0, 1, \#_1, \#_2, \vbl{a}, \vbl{b}, \vbl{root}, \vbl{stg1}, \vbl{empty}, \vbl{end}, \vbl{ntF} , \vbl{lp}  \}.
\] 

 Instead of one separator symbol as in the multi-agent case, we use two ($\#_1$ and $\#_2$) to help give the epistemic states an orientation, which we need for our encoding to work as intended.

\subsection{The Epistemic States}

The initial epistemic state $\initial = ((W_0,R_0,V_0),w_{\vbl{root}})$, where
\begin{align*}
        W_0 = \{ &
    w_{p} \mid p \in {\{ \vbl{root}, \vbl{empty}, \vbl{stg1}, \aa, \bb, \vbl{lp}\}}\} 
    \\ \cup &
    \{ w_{p}(\xx) \mid p \in {\{ \vbl{0}, \vbl{1}, \vbl{\#}_1, \vbl{\#}_2, \vbl{ntF}, \vbl{end}
        \}}, ~ \xx \in {\{\aa,\bb\}} \};\\
                R_0 &\text{ is the reflexive and symmetric closure of: }
    \\ 
    R_0^-
    = \{ 
        &(w_{\vbl{root}},w_{\vbl{empty}}), (w_{\vbl{root}},w_{\vbl{stg1}}), (w_{\vbl{root}},w_{\xx}), 
    \\ 
    &
        (w_\xx,w_{\vbl{bt}}(\xx)), (w_{\vbl{bt}}(\xx),w_{\#_1}(\xx)), 
    \\ 
    &
            (w_{\#_1}(\xx),w_{\#_2}(\xx)), 
                (w_{\#_2}(\xx),w_{\vbl{bt}}(\xx))
    ,\\ 
    &
      (w_{\#_2}(\xx),w_{\vbl{end}}(\xx)) 
     , (w_{\xx},w_{\vbl{end}}(\xx)),\\ 
     &
     (w_{\xx},w_{\vbl{ntF}}(\xx)),
     (w_{\vbl{bt}}(\xx),w_{\vbl{ntF}}(\xx)),
     \\ 
    &
      (w_{\#_1}(\xx),w_{\vbl{ntF}}(\xx)),
      (w_{\#_2}(\xx),w_{\vbl{ntF}}(\xx)),
      \\ 
    &
    (w_{\vbl{bt}}(\xx),w_{\vbl{lp}}),
              (w_{\#_1}(\xx),w_{\vbl{lp}}),
      (w_{\#_2}(\xx),w_{\vbl{lp}})
    &\mid \xx \in \{\aa,\bb\}, ~ \vbl{bt} \in \{0,1\} \};     \end{align*}
        and 
    for every $w_p, w_{p'}(\xx) \in W_0$, $V_0(w_p) = \{p\}$ and $V_0(w_{p'}(\xx)) = \{p',\xx\}$.
    The epistemic state $\initial$ is illustrated in \Cref{figure:initialRS}.

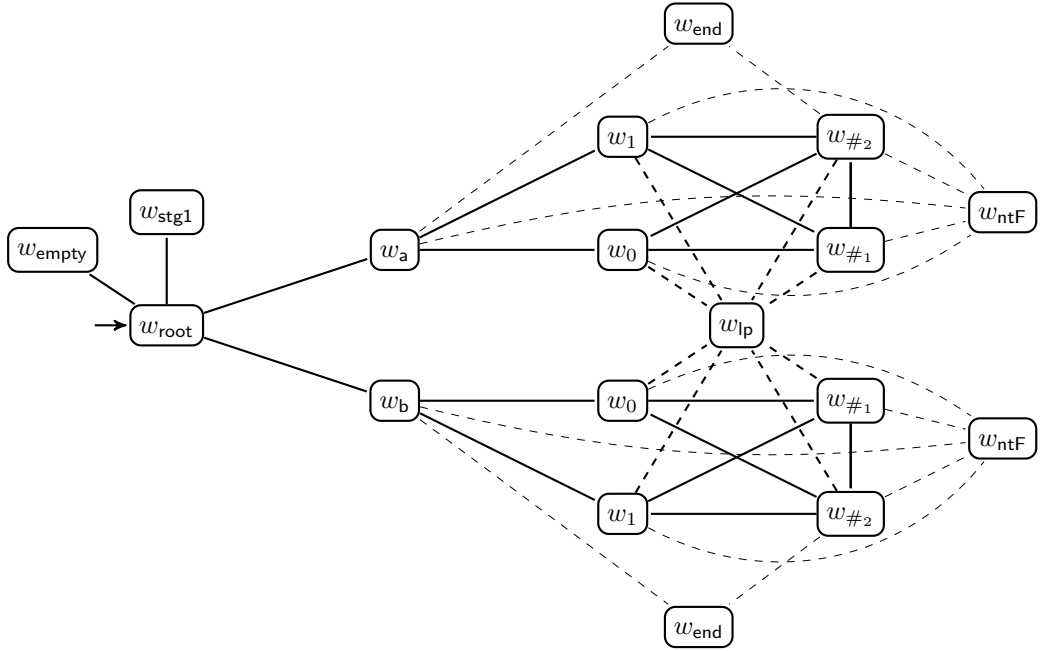
\begin{figure}
  \centering 
\begin{tikzpicture}[->,>=stealth',shorten >=1pt,auto,node distance=2.5cm, thick]

    \node[mynode,initial] (r0)   at (0, 0)    {$w_{\vbl{root}}$};
  \node[mynode] (stg)   at (0, 1.5)    {$w_{\vbl{stg1}}$};
  \node[mynode] (empty)   at (-1.5, 1)    {$w_{\vbl{empty}}$};
  \node[mynode] (wa)   at (3, 1)  {$w_{\vbl{a}}$};
  \node[mynode] (wb)   at (3, -1) {$w_{\vbl{b}}$};

  \node[mynode] (w0a)   at (6, 1)  {$w_0$};
  \node[mynode] (w1a)   at (6, 2.5) {$w_1$};
  \node[mynode] (whash1a) at (9, 1)   {$w_{\#_1}$};
  \node[mynode] (whash2a) at (9, 2.5)   {$w_{\#_2}$};
  \node[mynode] (wenda) at (7, 4)   {$w_{\vbl{end}}$};
  \node[mynode] (wnfa) at (11, 1.5)   {$w_{\vbl{ntF}}$};

  \node[mynode] (w0b)   at (6, -1)  {$w_0$};
  \node[mynode] (w1b)   at (6, -2.5) {$w_1$};
  \node[mynode] (whash1b) at (9, -1)   {$w_{\#_1}$};
  \node[mynode] (whash2b) at (9, -2.5)   {$w_{\#_2}$};
  \node[mynode] (wendb) at (7, -4)   {$w_{\vbl{end}}$};
  \node[mynode] (wnfb) at (11, -1.5)   {$w_{\vbl{ntF}}$};

    \draw[-] (r0) -- (stg);
  \draw[-] (r0) -- (empty);
  \draw[-] (r0) -- (wa);
  \draw[-] (r0) -- (wb);

  \draw[-] (wa) -- (w0a);
  \draw[-] (wa) -- (w1a);
  \draw[-] (wb) -- (w0b);
  \draw[-] (wb) -- (w1b);

  \draw[-] (w0a) -- (whash1a);
  \draw[-] (w1a) -- (whash1a);

  \draw[-] (w0b) -- (whash1b);
  \draw[-] (w1b) -- (whash1b);

  \draw[-] (whash1a) -- (whash2a);
  \draw[-] (whash1a) -- (whash2a);

  \draw[-] (whash2a) -- (w0a);
  \draw[-] (whash2a) -- (w1a);

  \draw[-,line width=0.1,dashed] (whash2a) -- (wenda);

  \draw[-] (whash1b) -- (whash2b);
  \draw[-] (whash1b) -- (whash2b);

  \draw[-] (whash2b) -- (w0b);
  \draw[-] (whash2b) -- (w1b);

  \draw[-,line width=0.1,dashed] (whash2b) -- (wendb);

  \draw[-,line width=0.1,dashed] (wa) -- (wenda);
  \draw[-,line width=0.1,dashed] (wb) -- (wendb);

  \draw[-,line width=0.1,dashed] (w0a) edge[bend right=30] (wnfa);
  \draw[-,line width=0.1,dashed] (w1a) edge[bend left=40] (wnfa);

  \draw[-,line width=0.1,dashed] (w0b) edge[bend left=30] (wnfb);
  \draw[-,line width=0.1,dashed] (w1b) edge[bend right=40] (wnfb);

  \draw[-,line width=0.1,dashed] (wa) edge[bend left=10] (wnfa);
  \draw[-,line width=0.1,dashed] (wb) edge[bend right=10] (wnfb);

  \draw[-,line width=0.1,dashed] (whash1a) -- (wnfa);
  \draw[-,line width=0.1,dashed] (whash2a) -- (wnfa);

  \draw[-,line width=0.1,dashed] (whash1b) -- (wnfb);
  \draw[-,line width=0.1,dashed] (whash2b) -- (wnfb);

  \node[mynodesh] (wlp) at (7.5, 0)   {$w_{\vbl{lp}}$};

    \draw[-,dashed] (w0a) edge   (wlp);
  \draw[-,dashed] (w1a) edge  (wlp);

  \draw[-,dashed] (w0b) edge  (wlp);
  \draw[-,dashed] (w1b) edge  (wlp);

  \draw[-,dashed] (whash1a) edge   (wlp);
  \draw[-,dashed] (whash2a) edge  (wlp);

  \draw[-,dashed] (whash1b) edge  (wlp);
  \draw[-,dashed] (whash2b) edge  (wlp);

\end{tikzpicture}
\caption{The initial epistemic state $\initial$ for the case of a single agent with a reflexive and symmetric accessibility relation. We do not draw self-loops, but every state is accessible from itself. We use arcs instead of arrows, as the accessibility relation is (reflexive and) symmetric.}
\label[figure]{figure:initialRS}
\end{figure}

\begin{definition}
Let $q_a, q_b \in \{0,1\}^*$.
We define the epistemic state $s\block{q_a}{q_b}$ to be 
the epistemic state 
$s\block{q_a}{q_b} = ((W\block{q_a}{q_b},R\block{q_a}{q_b},V\block{q_a}{q_b}),w_{\vbl{root}})$, where
    \begin{align*}
      W\block{q_a}{q_b} = \{ &w_p, w_\xx, w_\vbl{ntF}(\xx) &&\mid p \in \{ \vbl{root}, \vbl{a}, \vbl{b} \},~ \xx \in \{\aa,\bb\}  \} ~
      \cup \\ 
      \{ & w_{\xx,j}, 
            w_{\xx,j,\#_1}, 
      w_{\xx,j,\#_2}
            &&\mid 
            ~ \xx \in \{\aa,\bb\}  \text{ and } 1 \leq j \leq |q_\xx|
            \};
        \\ 
    R\block{q_a}{q_b} &\text{ is the reflexive, symmetric closure of } \\  \{
        &(w_{\vbl{root}},w_{\xx}), (w_\xx,w_{\vbl{ntF}}(\xx))
    && \mid 
    \xx \in \{\aa,\bb\}
    \} ~\cup 
     \\ 
     \{& 
        (w_\xx,w_{\xx,1}), 
                            &\\
        &(w_{\xx,j},w_{\xx,j,\#_1}), (w_{\xx,j,\#_1},w_{\xx,j,\#_2}), 
    &\\ 
                    &(w_{\xx,j},w_{\vbl{ntF}}), (w_{\xx,j,\#_1},w_{\vbl{ntF}}), 
    \\ &
    (w_{\xx,j,\#_2},w_{\vbl{ntF}})
                                                                                        && \mid 
    \xx \in \{\aa,\bb\}, 
    1 \leq j \leq |q_{\xx}|
        \} ~\cup 
     \\ 
     \{& 
    (w_{\xx,j,\#_2},w_{\xx,j+1})
                                && \mid 
    \xx \in \{\aa,\bb\}, 
    1 \leq j < |q_{\xx}|
        \}  
                        ; \text{ and }
    \end{align*}
        for every $\xx \in \{\aa,\bb\}$ and $w_p \in W\block{q_\aa}{q_\bb}$, 
        $V(w_p) = \{p\}$, $V(w_{\vbl{ntF}}(\xx)) = \{\vbl{ntF},\xx\}$, and for every $1 \leq j \leq |q_{\xx}|$, 
        $V(w_{\xx,j}) = \{q_\xx[j],\xx\}$, $V(w_{\xx,j,\#_1}) = \{\#_1,\xx\}$, and $V(w_{\xx,j,\#_2}) = \{\#_2,\xx\}$.
    \end{definition}

\begin{figure}
  \centering 
\begin{tikzpicture}[->,>=stealth',shorten >=1pt,auto,node distance=2.8cm, thick]

    \node[mynode,initial] (wroot)   at (0, 0)      {$w_\vbl{root}$};
    \node[mynode] (wa)   at (1, 1)    {$w_a$};
  \node[mynode] (wb)   at (1, -1)   {$w_b$};
  \node[mynode] (wj0a) at (2, 2)    {$w_{a,1}$};
  \node[mynode] (wj0b) at (2, -0.1)   {$w_{b,1}$};
  \node[mynode] (wj1a) at (3, 1)   {$w_{a,1,\#_1}$};
  \node[mynode] (wj1b) at (3, -1)  {$w_{b,1,\#_1}$};
  \node[mynode] (wj1ash) at (4, 0.1)   {$w_{a,1,\#_2}$};
  \node[mynode] (wj1bsh) at (4, -2)  {$w_{b,1,\#_2}$};

  \node[mynode] (ejka) at (7, 2)   {$w_{a,|q_a|}$};
  \node[mynode] (ejkb) at (7, -0.1)  {$w_{b,|q_b|}$};

  \node[mynode] (ejkash1) at (8, 1)   {$w_{a,|q_a|,\#_1}$};
  \node[mynode] (ejkbsh1) at (8, -1)  {$w_{b,|q_b|,\#_1}$};

  \node[mynode] (ejkash2) at (9, 0.1)   {$w_{a,|q_a|,\#_2}$};
  \node[mynode] (ejkbsh2) at (9, -2)  {$w_{b,|q_b|,\#_2}$};

  \node[mynode] (wnfa) at (7.5, 4)   {$w_{\vbl{ntF}}$};
  \node[mynode] (wnfb) at (4, -4)   {$w_{\vbl{ntF}}$};

      \draw[-] (wroot) -- (wa);
  \draw[-] (wroot) -- (wb);
  
  \draw[-] (wa) -- (wj0a);
  \draw[-] (wb) -- (wj0b);
  
  \draw[-] (wj0a) -- (wj1a);
  \draw[-] (wj0b) -- (wj1b);

  \draw[-] (wj1a) -- (wj1ash);
  \draw[-] (wj1b) -- (wj1bsh);

  \draw[-,dotted] (wj1ash) -- (ejka);
  \draw[-,dotted] (wj1bsh) -- (ejkb);

  \draw[-] (ejka) -- (ejkash1);
  \draw[-] (ejkb) -- (ejkbsh1);
  \draw[-] (ejkash1) -- (ejkash2);
  \draw[-] (ejkbsh1) -- (ejkbsh2);

  \draw[-,line width=0.1,dashed] (wj0a) -- (wnfa);
  \draw[-,line width=0.1,dashed] (wj0b) edge[bend right=18] (wnfb);

  \draw[-,line width=0.1,dashed] (wj1a) -- (wnfa);
  \draw[-,line width=0.1,dashed] (wj1b) edge[bend right=10] (wnfb);

  \draw[-,line width=0.1,dashed] (wj1ash) -- (wnfa);
  \draw[-,line width=0.1,dashed] (wj1bsh) -- (wnfb);

  \draw[-,line width=0.1,dashed] (ejka) -- (wnfa);
  \draw[-,line width=0.1,dashed] (ejkb) -- (wnfb);
  \draw[-,line width=0.1,dashed] (ejkash1) -- (wnfa);
  \draw[-,line width=0.1,dashed] (ejkbsh1) -- (wnfb);

    \draw[-,line width=0.1,dashed] (ejkash2) edge[bend right] (wnfa);
  \draw[-,line width=0.1,dashed] (ejkbsh2) -- (wnfb);
  \draw[-,line width=0.1,dashed] (ejkbsh2) -- (wnfb);

  \draw[-,line width=0.1,dashed] (wa) edge[bend left=45] (wnfa);
  \draw[-,line width=0.1,dashed] (wb) -- (wnfb);

\end{tikzpicture}
\caption{The epistemic state $s\block{q_a}{q_b}$. The state $w_\vbl{ntF}$ appears twice for clarity.}
\end{figure}

We also define two variations of $s\block{q_a}{q_b}$ that encodes the same block but without the last separator or pair of separators.

\begin{definition}
  Let $q_a, q_b \in \{0,1\}^*$. We define the epistemic states 
  $s\block{q_a}{q_b} - \#_2 $ and 
  $s\block{q_a}{q_b} - \#_1 $ to be the submodels of 
  $s\block{q_a}{q_b}$ induced respectively by 
  $W\block{q_a}{q_b} \setminus \{w_{a,|q_{a}|,\#_2},w_{b,|q_{b}|,\#_2}\}$ and by 
  $W\block{q_a}{q_b} \setminus \{w_{a,|q_{a}|,\#_1},w_{b,|q_{b}|,\#_1},w_{a,|q_{a}|,\#_2},w_{b,|q_{b}|,\#_2}\}$. 
                                                                                                                  \end{definition}

\begin{definition}
  Let $q_1, q_2 \in \{0,1\}^*$. We define the epistemic state $s\block{q_a}{q_b}\{0,1,\#\}$ to be 
  $s\block{q_a}{q_b}\{0,1,\#\} = ((W'\block{q_a}{q_b},R'\block{q_a}{q_b},V'\block{q_a}{q_b}),w_{\vbl{root}})$, where
  \begin{align*}
                  W'\block{q_a}{q_b} = W\block{q_a}{q_b} &\cup &\{ & w_{\vbl{stg1}}, w_{\vbl{0}}(\xx), w_{\vbl{1}}(\xx), w_{\vbl{\#_1}}(\xx), w_{\vbl{\#_2}}(\xx), w_{\vbl{lp}} \mid \xx \in \{\aa,\bb\} \}; \\
            R'\block{q_a}{q_b} = 
        R\block{q_a}{q_b} &\cup  &\{                   &
      (w_{\vbl{root}},w_{\vbl{stg1}}), (w_{\vbl{stg1}},w_{\vbl{root}}), \\
      &&&
      (w_{\vbl{0}}(\xx),w_{\vbl{\#_1}}(\xx)),  (w_{\vbl{1}}(\xx),w_{\vbl{\#_1}}(\xx)), \\
      &&&
      (w_{\vbl{\#_1}}(\xx),w_{0}(\xx)), (w_{\vbl{\#_1}}(\xx),w_{1}(\xx)), \\
      &&&
      (w_{\vbl{\#_1}}(\xx),w_{\vbl{\#_2}}(\xx)),  
      (w_{\vbl{\#_2}}(\xx),w_{\vbl{\#_1}}(\xx)),\\
      &&&
      (w_{\vbl{\#_2}}(\xx),w_{\vbl{0}}(\xx)), (w_{\vbl{\#_2}}(\xx),w_{\vbl{1}}(\xx)), \\
      &&&
      (w_{\vbl{0}}(\xx),w_{\vbl{\#_2}}(\xx)),  (w_{\vbl{1}}(\xx),w_{\vbl{\#_2}}(\xx)), \\
      &&&
      (w_{\vbl{0}}(\xx),w_{\vbl{ntF}}(\xx)),  
      (w_{\vbl{1}}(\xx),w_{\vbl{ntF}}(\xx)), \\
      &&&
      (w_{\vbl{\#_1}}(\xx),w_{\vbl{ntF}}(\xx)),  
      (w_{\vbl{\#_2}}(\xx),w_{\vbl{ntF}}(\xx)) ,\\
      &&&
      (w_{\vbl{ntF}}(\xx),w_{\vbl{0}}(\xx)),  
      (w_{\vbl{ntF}}(\xx),w_{\vbl{1}}(\xx)), \\
      &&&
      (w_{\vbl{ntF}}(\xx),w_{\vbl{\#_1}}(\xx)),  
      (w_{\vbl{ntF}}(\xx),w_{\vbl{\#_2}}(\xx)) 
      \mid \xx \in \{\aa,\bb\}
      \}  \\
      &
      \cup 
      &
      \{ 
       &(w_{\xx,|q_{\xx}|,\#_2},w_{\vbl{bt}}(\xx)), (w_{\vbl{bt}}(\xx),w_{\xx,|q_{\xx}|,\#_2})
                                                            \mid ~ \xx \in \{\aa,\bb\}, ~ q_\xx \neq \varepsilon, \vbl{bt} \in \{0,1\}
      \}  \\
      &
      \cup 
      &
      \{ 
       & (w_{\xx},w_{\vbl{bt}}(\xx)), (w_{\vbl{bt}}(\xx),w_{\xx}) \mid ~ \xx \in \{\aa,\bb\}, ~ q_\xx = \varepsilon, ~ \vbl{bt} \in \{0,1\} \}
        \\
      &
      \cup 
      &
      \{ 
       & (w,w) \mid w \in 
       \{  w_{\vbl{stg1}}, w_{\vbl{0}}(\xx), w_{\vbl{1}}(\xx), w_{\vbl{\#_1}}(\xx), w_{\vbl{\#_2}}(\xx), w_{\vbl{lp}} \mid \xx \in \{\aa,\bb\} \}
       \}
       \\
      &
      \cup 
      &
      \{ 
       & (w,w_\vbl{lp}), (w_\vbl{lp},w)\mid w \in 
       \{   w_{\vbl{0}}(\xx), w_{\vbl{1}}(\xx), w_{\vbl{\#_1}}(\xx), w_{\vbl{\#_2}}(\xx) \mid \xx \in \{\aa,\bb\} \}
       \}
      ;   \end{align*}
            and 
      for every $w \in W\block{q_a}{q_b}$, $V'\block{q_a}{q_b}(w) = V\block{q_a}{q_b}$,
      $V'\block{q_a}{q_b}(w_{\vbl{stg1}}) = \{\vbl{stg1}\}$, and for 
      $p \in \{ \vbl{lp},\vbl{0},\vbl{1},\vbl{\#_1},\vbl{\#_2} \}$ and 
      $\xx \in \{\aa,\bb\}$, 
       $V'\block{q_a}{q_b}(w_p(\xx)) = \{p,\xx\}$.
      \end{definition}

\begin{figure}
  \centering 
\begin{tikzpicture}[->,>=stealth',shorten >=1pt,auto,node distance=2.8cm, thick]

    \node[mynodesh,initial] (wroot)   at (0, 0)      {$w_\vbl{root}$};
  \node[mynodesh] (wst)  at (-0.2, 1)      {$w_{\vbl{stg1}}$};
  \node[mynodesh] (wa)   at (1, 1)    {$w_a$};
  \node[mynodesh] (wb)   at (1, -1)   {$w_b$};
  \node[mynodesh] (wj0a) at (2, 2)    {$w_{a,1}$};
  \node[mynodesh] (wj0b) at (2, -0.1)   {$w_{b,1}$};
  \node[mynode] (wj1a) at (3, 1)   {$w_{a,1,\#_1}$};
  \node[mynode] (wj1b) at (3, -1)  {$w_{b,1,\#_1}$};
  \node[mynode] (wj1ash) at (4, 0.1)   {$w_{a,1,\#_2}$};
  \node[mynode] (wj1bsh) at (4, -2)  {$w_{b,1,\#_2}$};

  \node[mynode] (ejka) at (6, 2)   {$w_{a,|q_a|}$};
  \node[mynode] (ejkb) at (6, -0.1)  {$w_{b,|q_b|}$};

  \node[mynodel] (ejkash1) at (7, 1)   {$w_{a,|q_a|,\#_1}$};
  \node[mynodel] (ejkbsh1) at (7, -1)  {$w_{b,|q_b|,\#_1}$};

  \node[mynodel] (ejkash2) at (8, 0.1)   {$w_{a,|q_a|,\#_2}$};
  \node[mynodel] (ejkbsh2) at (8, -2)  {$w_{b,|q_b|,\#_2}$};

  \node[mynodesh] (w0a)   at (10, 3)  {$w_0$};
  \node[mynodesh] (w1a)   at (10, 1) {$w_1$};
  \node[mynodesh] (whash1a) at (11.5, 3)   {$w_{\#_1}$};
  \node[mynodesh] (whash2a) at (11.5, 1)   {$w_{\#_2}$};
  \node[mynodesh] (wenda) at (10.5, 0.2)   {$w_{\vbl{end}}$};
  
  \node[mynodesh] (w0b)   at (10, -1)  {$w_0$};
  \node[mynodesh] (w1b)   at (10, -3) {$w_1$};
  \node[mynodesh] (whash1b) at (11.5, -1)   {$w_{\#_1}$};
  \node[mynodesh] (whash2b) at (11.5, -3)   {$w_{\#_2}$};
  \node[mynodesh] (wendb) at (9.5, -4)   {$w_{\vbl{end}}$};
  
  \node[mynode] (wnfa) at (6.5, 4)   {$w_{\vbl{ntF}}$};
  \node[mynode] (wnfb) at (4, -4)   {$w_{\vbl{ntF}}$};
  
  \node[mynodesh] (wlp)   [below right = 0.3 and 0.7 of whash2a]  {$w_\vbl{lp}$};

    \draw[-] (wroot) -- (wst);
  \draw[-] (wroot) -- (wa);
  \draw[-] (wroot) -- (wb);
  
  \draw[-] (wa) -- (wj0a);
  \draw[-] (wb) -- (wj0b);
  
  \draw[-] (wj0a) -- (wj1a);
  \draw[-] (wj0b) -- (wj1b);

  \draw[-] (wj1a) -- (wj1ash);
  \draw[-] (wj1b) -- (wj1bsh);

  \draw[-,dotted] (wj1ash) -- (ejka);
  \draw[-,dotted] (wj1bsh) -- (ejkb);

  \draw[-] (ejka) -- (ejkash1);
  \draw[-] (ejkb) -- (ejkbsh1);
  \draw[-] (ejkash1) -- (ejkash2);
  \draw[-] (ejkbsh1) -- (ejkbsh2);

  \draw[-] (ejkash2) -- (w0a);
  \draw[-] (ejkash2) -- (w1a);
  \draw[-] (ejkbsh2) -- (w0b);
  \draw[-] (ejkbsh2) -- (w1b);

  \draw[-] (w0a) -- (whash1a);
  \draw[-] (w1a) -- (whash1a);

  \draw[-] (whash1a) -- (whash2a);
  \draw[-] (whash1a) -- (whash2a);

  \draw[-] (whash2a) -- (w0a);
  \draw[-] (whash2a) -- (w1a);

  \draw[-,line width=0.1,dashed] (whash2a) -- (wenda);

  \draw[-] (w0b) -- (whash1b);
  \draw[-] (w1b) -- (whash1b);

  \draw[-] (whash1b) -- (whash2b);
  \draw[-] (whash1b) -- (whash2b);

  \draw[-] (whash2b) -- (w0b);
  \draw[-] (whash2b) -- (w1b);

  \draw[-,line width=0.1,dashed] (whash2b) -- (wendb);
      \draw[-,line width=0.1,dashed] (ejkash2) -- (wenda);
  \draw[-,line width=0.1,dashed] (ejkbsh2) -- (wendb);

  \draw[-,line width=0.1,dashed] (w0a) -- (wnfa);
  \draw[-,line width=0.1,dashed] (w1a) -- (wnfa);

  \draw[-,line width=0.1,dashed] (w0b) edge[bend left=23] (wnfb);
  \draw[-,line width=0.1,dashed] (w1b) 
  --
    (wnfb);

  \draw[-,line width=0.1,dashed] (whash1a) -- (wnfa);
  \draw[-,line width=0.1,dashed] (whash2a) -- (wnfa);

  \draw[-,line width=0.1,dashed] (whash1b) -- (wnfb);
  \draw[-,line width=0.1,dashed] (whash2b) -- (wnfb);

  \draw[-,line width=0.1,dashed] (wj0a) -- (wnfa);
  \draw[-,line width=0.1,dashed] (wj0b) edge[bend right=15] (wnfb);

  \draw[-,line width=0.1,dashed] (wj1a) -- (wnfa);
  \draw[-,line width=0.1,dashed] (wj1b) edge[bend right=10] (wnfb);

  \draw[-,line width=0.1,dashed] (wj1ash) -- (wnfa);
  \draw[-,line width=0.1,dashed] (wj1bsh) -- (wnfb);

  \draw[-,line width=0.1,dashed] (ejka) -- (wnfa);
  \draw[-,line width=0.1,dashed] (ejkb) -- (wnfb);
  \draw[-,line width=0.1,dashed] (ejkash1) -- (wnfa);
  \draw[-,line width=0.1,dashed] (ejkbsh1) -- (wnfb);

    \draw[-,line width=0.1,dashed] (ejkash2) edge[bend right] (wnfa);
  \draw[-,line width=0.1,dashed] (ejkbsh2) -- (wnfb);
  
  \draw[-,line width=0.1,dashed] (wa) edge[bend left=45] (wnfa);
  \draw[-,line width=0.1,dashed] (wb) -- (wnfb);

  \draw[-,line width=0.1,dashed] (w0a) 
  --
    (wlp);
  \draw[-,line width=0.1,dashed] (w1a) 
    --
  (wlp);

  \draw[-,line width=0.1,dashed] (w0b) 
  --
    (wlp);
  \draw[-,line width=0.1,dashed] (w1b) 
    --
  (wlp);

  \draw[-,line width=0.1,dashed] (whash1a) 
  --
    (wlp);
  \draw[-,line width=0.1,dashed] (whash2a) 
    --
  (wlp);

  \draw[-,line width=0.1,dashed] (whash1b) 
  --
    (wlp);
  \draw[-,line width=0.1,dashed] (whash2b) 
    --
  (wlp);
\end{tikzpicture}
\caption{The epistemic state $s\block{q_a}{q_b} \{0,1,\#\}$. 
}
\end{figure}

\subsection{The Epistemic Actions}

We use the following shorthand formulas:
\begin{align*}
      \logicize{nxt}(0) & = \logicize{nxt}(1) = \#_1,
  &
  \logicize{nxt}(a) & = \logicize{nxt}(b) = \logicize{nxt}(\#_2) = 0 \lor 1,
  \\
      \logicize{nxt}(\#_1) & = \#_2, 
                  & 
  \logicize{symb} & = 0 \lor 1 \lor \#_1 \lor \#_2,
  \\ 
  \logicize{last} & = \bar{K} \vbl{end} \land K \neg \vbl{lp},
                                                          &
  \logicize{loop}_\xx & = \xx \land \bar{K} \vbl{lp}, \quad \text{ and}
        \end{align*}\begin{align*}
  \logicize{tail} & = 
  (\aa \lor \bb) \land 
  \left( 
    (\neg \logicize{symb} \land K
  \neg \vbl{nxt}(\aa)) \lor  
  \bigvee_{\vbl{d} \in \{0,1,\#_1,\#_2\}} (\vbl{d} \land 
  K
  \neg \vbl{nxt}(\vbl{d}))
  \right),
\end{align*}
for every 
$\xx \in \{\aa,\bb\}$.

We now adjust the epistemic actions that add blocks to the encoding of a sequence of blocks. 

For every $1 \leq i \leq n$, we define 
$
\ad_i = (E_i,R_i,e_s), $
where 
        \begin{align*}
                        E_i = \{& e_s,  
                                    e_\xx,  e_{\xx,lst},
                                    e_{st}, e_{\vbl{end}}, e_{\vbl{ntF}} ,
                  e_{01,\xx}                   & \mid & \xx \in \{\aa,\bb\} \}
                  \\
                  & \cup
                  \{
                    e_{\xx,j}, e_{\xx,j,\#_1}, e_{\xx,j,\#_2}
                    & \mid &
                    \xx \in \{\aa,\bb\},
                    ~~ 0 < j \leq  |\xx_i| 
                  \}; 
                  \\ 
                                          R_i &\text{ is the reflexive, symmetric closure of} \\ 
            R_i^- = \{&                             (e_s,e_{st}), (e_s,e_{\xx}),  
                (e_{s},e_{\xx,lst}), 
                \\ 
                &
                (e_\xx,e_{\xx}), 
                (e_\xx,e_{\vbl{ntF}}), 
                (e_\xx,e_{\xx,lst}),\\ 
                &
                (e_{\xx,lst},e_{\vbl{ntF}}), 
                                \\ 
                &
                (e_{01,\xx},
                                e_{01,\xx}),
                                (e_{01,\xx},e_{\vbl{end}}), 
                                                (e_{01,\xx},e_{\vbl{ntF}})
                                &\mid &
                \xx \in \{\aa,\bb\} 
                \}\\
                & \cup \\ 
                \{ & 
                (e_{\xx,lst},e_{\xx,1}), \\ 
                &(e_{\xx,j},e_{\xx,j,\#_1}), (e_{\xx,j},e_{\vbl{ntF}}), \\ 
                &(e_{\xx,j,\#_1},e_{\xx,j,\#_2}), (e_{\xx,j,\#_1},e_{\vbl{ntF}})\\  
                &(e_{\xx,j,\#_2},e_{\vbl{ntF}}), \\  
                &(e_{\xx,|\xx_i|,\#_2},e_{01,\xx}),
                (e_{\xx,|\xx_i|,\#_2},e_{\vbl{end}})
                                                                                &\mid &
                \xx \in \{\aa,\bb\}, ~~ 0 < j \leq |\xx_i|
                \}\\ 
               & \cup \\  &
                \{
                (e_{\xx,j,\#_2},e_{\xx,j+1})
                  &\mid &
                \xx \in \{\aa,\bb\}, ~~ 0 < j  < |\xx_i| 
                \}\\ 
               & \cup \\  &
                \{
                  (e_{\xx,lst},e_{01}), (e_{\xx,lst},e_{\vbl{end}})
                  &\mid &   
                  \xx \in \{\aa,\bb\}, ~~  |\xx_i| = 0 
                \};
                          \end{align*}
            and the precondition for each event in $E_i$ is given below:
            \begin{align*}
                        pre(e_s) =~& \vbl{root} \land \bar{K} \vbl{stg1} 
                        &
                        pre(e_{st}) =~& \vbl{stg1} \\ 
                                    pre(e_\xx) =~& 
                                      \xx 
                          \land \neg \vbl{last}
                                                &
            pre(e_{\xx,lst}) =~& 
                        \xx 
            \land \vbl{last}
            \\ 
                                    pre(e_{\vbl{ntF}}) =~& \vbl{ntF}
                        &
                                    pre(e_{\xx,j}) =~& \xx_i[j] \land \logicize{loop}_\xx
                        \\ 
                                    pre(e_{\xx,j,\#_1}) =~& \#_1 \land \logicize{loop}_\xx
                                    &
            pre(e_{\xx,j,\#_2}) =~& \#_2 \land \logicize{loop}_\xx
                        \\ 
                                    pre(e_{\vbl{end}}) =& \vbl{end}
                        &
                                    pre(e_{01,\xx}) =& \logicize{loop}_\xx \lor \vbl{lp} 
                                                            ,  
                                \end{align*}
        for each $ \xx \in \{\aa,\bb\}$ and $ j \leq |\xx_i|$. 

                The action is illustrated in \Cref{fig:addRS}.

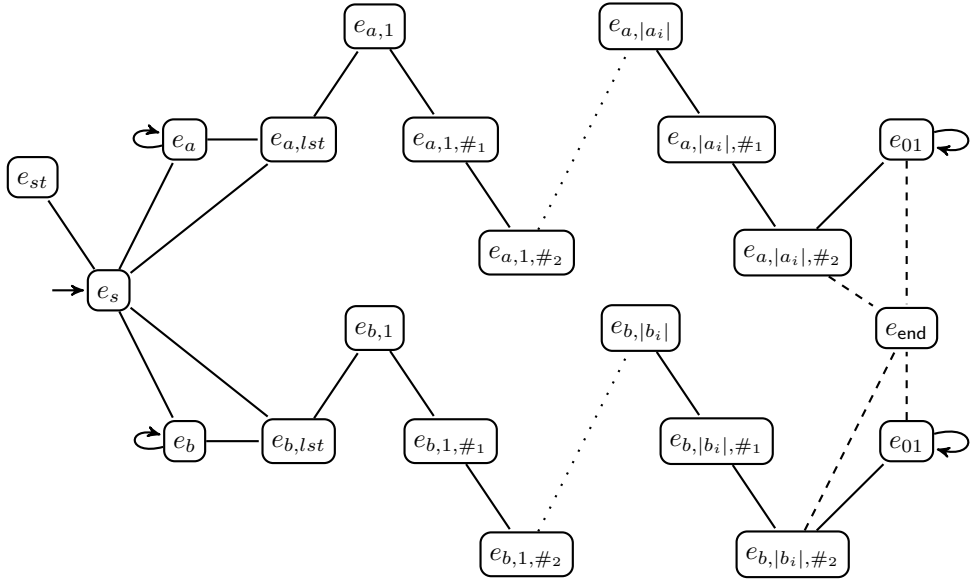
\begin{figure}
  \centering 
\begin{tikzpicture}[->,>=stealth',shorten >=1pt,auto,node distance=2.8cm, thick]

    \node[mynode,initial] (es)   at (0, 0)      {$e_s$};
  \node[mynode] (est)  at (-1, 1.5)      {$e_{st}$};
  \node[mynode] (ea)   at (1, 2)    {$e_a$};
  \node[mynode] (eb)   at (1, -2)   {$e_b$};
  \node[mynode] (ej0a) at (2.5, 2)    {$e_{a,lst}$};
  \node[mynode] (ej0b) at (2.5, -2)   {$e_{b,lst}$};
  \node[mynode] (ej1a) at (3.5, 3.5)   {$e_{a,1}$};
  \node[mynode] (ej1b) at (3.5, -0.5)  {$e_{b,1}$};
  \node[mynode] (ej1ash) at (4.5, 2)   {$e_{a,1,\#_1}$};
  \node[mynode] (ej1bsh) at (4.5, -2)  {$e_{b,1,\#_1}$};
  \node[mynode] (ej1ash2) at (5.5, 0.5)   {$e_{a,1,\#_2}$};
  \node[mynode] (ej1bsh2) at (5.5, -3.5)  {$e_{b,1,\#_2}$};

  \node[mynode] (ejka) at (7, 3.5)   {$e_{a,|a_i|}$};
  \node[mynode] (ejkb) at (7, -0.5)  {$e_{b,|b_i|}$};

  \node[mynode] (ejkash) at (8, 2)   {$e_{a,|a_i|,\#_1}$};
  \node[mynode] (ejkbsh) at (8, -2)  {$e_{b,|b_i|,\#_1}$};
  \node[mynode] (ejkash2) at (9, 0.5)   {$e_{a,|a_i|,\#_2}$};
  \node[mynode] (ejkbsh2) at (9, -3.5)  {$e_{b,|b_i|,\#_2}$};

  \node[mynode] (epsilon01a) at (10.5, 2)  {$e_{01}$};
  \node[mynode] (epsilon01b) at (10.5, -2)  {$e_{01}$};

  \node[mynode] (eend) at (10.5, -0.5)  {$e_{\vbl{end}}$};

    \draw[-] (es) -- (est);
  \draw[-] (es) -- (ea); 
  \draw[-] (es) -- (eb);

  \draw[-] (ea) edge[loop left] (ea);
  \draw[-] (eb) edge[loop left] (eb);
  
  \draw[-] (es) -- (ej0a);
  \draw[-] (es) -- (ej0b);

  \draw[-] (ea) -- (ej0a);
  \draw[-] (eb) -- (ej0b);
  
  \draw[-] (ej0a) -- (ej1a);
  \draw[-] (ej0b) -- (ej1b);

  \draw[-] (ej1a) -- (ej1ash);
  \draw[-] (ej1b) -- (ej1bsh);

  \draw[-,loosely dotted] (ej1ash2) -- (ejka);
  \draw[-,loosely dotted] (ej1bsh2) -- (ejkb);

  \draw[-] (ejka) -- (ejkash);
  \draw[-] (ejkb) -- (ejkbsh);
  
  \draw[-] (ej1ash) -- (ej1ash2);
  \draw[-] (ej1bsh) -- (ej1bsh2);
  \draw[-] (ejkash) -- (ejkash2);
  \draw[-] (ejkbsh) -- (ejkbsh2);
  
  \draw[-] (ejkash2) -- (epsilon01a);
  \draw[-] (ejkbsh2) -- (epsilon01b);
  
    \draw[-] (epsilon01a) edge[loop right] (epsilon01a);
  \draw[-] (epsilon01b) edge[loop right] (epsilon01b);
  
  \draw[-,dashed] (epsilon01a) -- (eend);
  \draw[-,dashed] (epsilon01b) -- (eend);
  \draw[-,dashed] (ejkash2) -- (eend);
  \draw[-,dashed] (ejkbsh2) -- (eend);

\end{tikzpicture}
\caption{The epistemic action $\ad_i$, where the designated event, $e_s$ is marked with an arrow. 
For a cleaner figure, 
we omit event $e_{\vbl{ntF}}$.}
\label[figure]{fig:addRS}
\end{figure}

The remaining reduction proceeds in the same way as the one in \Cref{sec:K}. The difference is that the accessibility relation for $\remv_{\vbl{bt}}$ must become its reflexive and symmetric closure, and that $\vbl{bt}$ should also range over $\#_1$ and $\#_2$. 
Then, as in \Cref{sec:multi}, one needs to also take into account the separator symbols  $\#_1$ and $\#_2$.

\section{A Single Agent with Transitivity}
\label{sec:transitivity}

The case of (single-agent) transitive frames requires a slightly different construction. If we use a similar approach as in the other cases to add blocks to the encoding of a block sequence, it is hard to then distinguish between a state in a final loop and one that is part of the block encoding.
Every state accessible from the loop is also accessible from the encoding state due to transitivity, but it is also the case that every state accessible from an encoding state must have an analogous state accessible from the loop, to be able to then unfold the loop into the encoding states.

Our solution is to switch the order between the loops and the encoding states.

We proceed to adjust our reduction to the case of a single agent with an accessibility relation that is reflexive and transitive (a preorder).

We use the following set of propositional variables: 
\[
\prop = \{ 0, 1, \#, \vbl{a}, \vbl{b}, \vbl{root}, \vbl{stg1}, \vbl{empty}, 
\vbl{lp}  \}.
\]

\subsection{The Epistemic States}

The initial epistemic state $\initial = ((W_0,R_0,V_0),w_{\vbl{root}})$, where
\begin{align*}
        W_0 = \{ &
    w_{p} \mid p \in {\{ \vbl{root}, \vbl{empty}, \vbl{stg1}, \aa, \bb\}}\} 
        \cup 
        \{ w_{p}(\xx) \mid p \in {\{ \vbl{0}, \vbl{1}, \vbl{\#},         \vbl{lp}
    \}}, ~ \xx \in {\{\aa,\bb\}} \};\\
                R_0 &\text{ is the reflexive and transitive closure of: }
    \\ 
    R_0^-
    = \{ 
        &(w_{\vbl{root}},w_{\vbl{empty}}), (w_{\vbl{root}},w_{\vbl{stg1}}), 
        \\ 
    &
        (w_{\vbl{root}},w_{\vbl{bt}}(\xx)), (w_{\vbl{bt}}(\xx),w_{\#}(\xx)), 
    \\ 
    &
                            (w_{\#}(\xx),w_{\vbl{bt}}(\xx)),
    (w_{\#}(\xx),w_{\xx}),
                                                  \\ 
    &
    (w_{\vbl{bt}}(\xx),w_{\vbl{lp}}),
              (w_{\#}(\xx),w_{\vbl{lp}})
          \qquad
      \mid \xx \in \{\aa,\bb\}, ~ \vbl{bt} \in \{0,1\} \}; \end{align*}
        and 
    for every $w_p, w_{p'}(\xx) \in W_0$, $V_0(w_p) = \{p\}$ and $V_0(w_{p'}(\xx)) = \{p',\xx\}$.
    The epistemic state $\initial$ is illustrated in \Cref{figure:initialT}.

\begin{figure}
  \centering 
\begin{tikzpicture}[->,>=stealth',shorten >=1pt,auto,node distance=2.5cm, thick]

    \node[mynode,initial] (r0)   at (0, 0)    {$w_{\vbl{root}}$};
  \node[mynode] (stg)   at (0, 1.5)    {$w_{\vbl{stg1}}$};
  \node[mynode] (empty)   at (-1.5, 1)    {$w_{\vbl{empty}}$};
  \node[mynode] (wa)   at (9, 1)  {$w_{\vbl{a}}$};
  \node[mynode] (wb)   at (9, -1) {$w_{\vbl{b}}$};

  \node[mynode] (w0a)   at (3, 1)  {$w_0$};
  \node[mynode] (w1a)   at (4, 2.5) {$w_1$};
  \node[mynode] (whash1a) at (6, 1)   {$w_{\#}$};
      
  \node[mynode] (w0b)   at (3, -1)  {$w_0$};
  \node[mynode] (w1b)   at (4, -2.5) {$w_1$};
  \node[mynode] (whash1b) at (6, -1)   {$w_{\#}$};

    \draw[->] (r0) -- (stg);
  \draw[->] (r0) -- (empty);
  \draw[->] (whash1a) -- (wa);
  \draw[->] (whash1b) -- (wb);

  \draw[->] (r0) -- (w0a);
  \draw[->] (r0) -- (w1a);
  \draw[->] (r0) -- (w0b);
  \draw[->] (r0) -- (w1b);

  \draw[->] (w0a) -- (whash1a);
  \draw[->] (w1a) -- (whash1a);

  \draw[->] (w0b) -- (whash1b);
  \draw[->] (w1b) -- (whash1b);

  \draw[->] (whash1a) -- (w0a);
  \draw[->] (whash1a) -- (w1a);

  \draw[->] (whash1b) -- (w0b);
  \draw[->] (whash1b) -- (w1b);

  \draw[->] (w1a) -- (w0a);
  \draw[->] (w0a) -- (w1a);

  \draw[->] (w1b) -- (w0b);
  \draw[->] (w0b) -- (w1b);

  \node[mynodesh] (wlp) at (5, 0)   {$w_{\vbl{lp}}$};

    \draw[->,dashed] (w0a) edge   (wlp);
  \draw[->,dashed] (w1a) edge  (wlp);

  \draw[->,dashed] (w0b) edge  (wlp);
  \draw[->,dashed] (w1b) edge  (wlp);

  \draw[->,dashed] (whash1a) edge   (wlp);
    
  \draw[->,dashed] (whash1b) edge  (wlp);
    
\end{tikzpicture}
\caption{The initial epistemic state $\initial$ for the case of a single agent with a reflexive and transitive accessibility relation. We do not draw all the accessibility pairs; the accessibility relation is the reflexive and transitive closure of the one in the figure.}
\label[figure]{figure:initialT}
\end{figure}
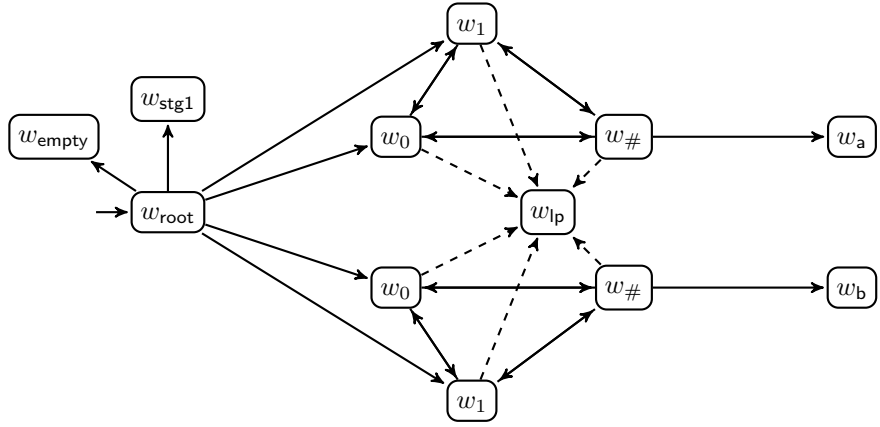

\begin{definition}
Let $q_\aa, q_\bb \in \{0,1\}^*$.
We define 
$s\block{q_\aa}{q_\bb}$ to be 
the epistemic state 
$s\block{q_\aa}{q_\bb} = ((W\block{q_\aa}{q_\bb},R\block{q_\aa}{q_\bb},V\block{q_\aa}{q_\bb}),w_{\vbl{root}})$, where
    \begin{align*}
      W\block{q_\aa}{q_\bb} = \{ &w_p             &&\mid p \in \{ \vbl{root}, \vbl{a}, \vbl{b} \}
            \} ~
      \cup \\ 
      \{ & w_{\xx,j}, 
            w_{\xx,j,\#}
                        &&\mid 
            ~ \xx \in \{\aa,\bb\}  \text{ and } 1 \leq j \leq |q_\xx|
            \};
        \\ 
    R\block{q_\aa}{q_\bb} &\text{ is the reflexive, transitive closure of } \\  \{
        &(w_{\vbl{root}},w_{\xx})     && \mid 
    \xx \in \{\aa,\bb\}
    \} ~\cup 
     \\ 
     \{& 
     (w_{\vbl{root}},w_{\xx,j}),
    (w_{\xx,j},w_{\xx,j,\#}), (w_{\xx,j,\#},w_{\xx})
    && \mid 
    \xx \in \{\aa,\bb\}, 
    1 \leq j \leq |q_{\xx}|
        \} ~\cup  \\ 
     \{& 
    (w_{\xx,j,\#},w_{\xx,j+1})
    && \mid 
    \xx \in \{\aa,\bb\}, 
    1 \leq j < |q_{\xx}|
        \} 
                        ; \text{ and }
    \end{align*}
        for every 
        $w_p \in W\block{q_\aa}{q_\bb}$, 
        $V(w_p) = \{p\}$, 
        and for every $1 \leq j \leq |q_{\xx}|$, 
        $V(w_{\xx,j}) = \{q_\xx[j],\xx\}$ and $V(w_{\xx,j,\#}) = \{\#,\xx\}$.
        \end{definition}

As usual, we 
also define a variations of $s\block{q_\aa}{q_\bb}$ that encodes the same block but without the last separator.

\begin{definition}
  Let $q_\aa, q_\bb \in \{0,1\}^*$. We define the epistemic state 
  $s\block{q_\aa}{q_\bb} - \# $ and 
    to be the submodel of 
  $s\block{q_\aa}{q_\bb}$ induced  by 
  $W\block{q_\aa}{q_\bb} \setminus \{w_{a,|q_\aa|,\#},w_{b,|q_\bb|,\#}\}$. 
                                                                                                                      \end{definition}

\begin{definition}
  Let $q_1, q_2 \in \{0,1\}^*$. We define the epistemic state $s\block{q_\aa}{q_\bb}\{0,1,\#\}$ to be 
  $s\block{q_\aa}{q_\bb}\{0,1,\#\} = ((W'\block{q_\aa}{q_\bb},R'\block{q_\aa}{q_\bb},V'\block{q_\aa}{q_\bb}),w_{\vbl{root}})$, where
  \begin{align*}
                  W'\block{q_\aa}{q_\bb} = W\block{q_\aa}{q_\bb} \cup &\{  w_{\vbl{stg1}}, w_{\vbl{0}}(\xx), w_{\vbl{1}}(\xx), w_{\vbl{\#}}(\xx), w_{\vbl{lp}} \mid \xx \in \{\aa,\bb\} \}; \\
            R'\block{q_\aa}{q_\bb} & \text{ is the reflexive, transitive closure of }\\  
        R\block{q_\aa}{q_\bb} \cup  \{&                         (w_{\vbl{root}},w_{\vbl{stg1}}),  
      (w_{\vbl{root}},w_{\vbl{0}}(\xx)), (w_{\vbl{root}},w_{\vbl{1}}(\xx)),\\
      &
      (w_{\vbl{0}}(\xx),w_{\vbl{\#}}(\xx)),  (w_{\vbl{1}}(\xx),w_{\vbl{\#}}(\xx)), \hfill \\
      &
      (w_{\vbl{\#}}(\xx),w_{0}(\xx)), (w_{\vbl{\#}}(\xx),w_{1}(\xx)), \\
                                                            &
      (w_{\vbl{0}}(\xx),w_{\vbl{lp}}),  
      (w_{\vbl{1}}(\xx),w_{\vbl{lp}}), 
      (w_{\#}(\xx),w_{\xx})
                                                                              \mid \xx \in \{\aa,\bb\}
      \}  \\
      \cup 
            \{ &
       (w_{\#}(\xx),w_{\xx,1})                                                             \mid ~ \xx \in \{\aa,\bb\}, ~ q_\xx \neq \varepsilon       \}  \\
      \cup 
                                                                                                      \{ &
        (w,w_\vbl{lp})
             \mid w \in 
       \{   w_{\vbl{0}}(\xx), w_{\vbl{1}}(\xx), w_{\vbl{\#}}(\xx) \mid \xx \in \{\aa,\bb\} \}
       \}
      ;   \end{align*}
            and 
      for every $w \in W\block{q_\aa}{q_\bb}$, $V'\block{q_\aa}{q_\bb}(w) = V\block{q_\aa}{q_\bb}$,
      $V'\block{q_\aa}{q_\bb}(w_{\vbl{stg1}}) = \{\vbl{stg1}\}$, 
      $V'\block{q_\aa}{q_\bb}(w_{\vbl{lp}}) = \{\vbl{lp}\}$, and for 
      $p \in \{ \vbl{0},\vbl{1},\vbl{\#} \}$ and 
      $\xx \in \{\aa,\bb\}$, 
       $V'\block{q_\aa}{q_\bb}(w_p(\xx)) = \{p,\xx\}$.
      \end{definition}

\begin{figure}
  \centering 
\begin{tikzpicture}[->,>=stealth',shorten >=1pt,auto,node distance=2.8cm, thick]

        \node[mynodesh,initial] (wroot)                                     {$w_\vbl{root}$};
  \node[mynodesh] (wst)     [above = 0.8 of wroot]                    {$w_{\vbl{stg1}}$};
  
    \node[mynodesh] (w1a)     [above right = 0.5 and 1 of wroot]      {$w_1$};
  \node[mynodesh] (w0a)     [above = 0.8 of w1a]                      {$w_0$};
  \node[mynodesh] (whash1a) [below right = 0.1 and 1 of w0a]          {$w_{\#}$};

    \node[mynodesh] (wlp)     [below right = 0.5 and 1 of w1a]                    {$w_\vbl{lp}$};

    \node[mynodesh] (w0b)     [below right = 0.5 and 1 of wroot]      {$w_0$};
  \node[mynodesh] (w1b)     [below = 0.8 of w0b]                      {$w_1$};
  \node[mynodesh] (whash1b) [below right = 0.1 and 1 of w0b]          {$w_{\#}$};
  
    \node[mynodesh] (wj0a)    [right = 0.75 of whash1a]                  {$w_{a,1}$};
  \node[mynodesh] (wj0b)    [right = 0.75 of whash1b]                  {$w_{b,1}$};
  
  \node[mynode]   (wj1a)    [right = 0.6 of wj0a]                       {$w_{a,1,\#}$};
  \node[mynode]   (wj1b)    [right = 0.6 of wj0b]                       {$w_{b,1,\#}$};
  
  \node[mynode]   (ejka)    [right = 1 of wj1a]                     {$w_{a,|q_\aa|}$};
  \node[mynode]   (ejkb)    [right = 1 of wj1b]                     {$w_{b,|q_\bb|}$};

  \node[mynodel]  (ejkash1) [right = 0.6 of ejka]                       {$w_{a,|q_\aa|,\#}$};
  \node[mynodel]  (ejkbsh1) [right = 0.6 of ejkb]                       {$w_{b,|q_\bb|,\#}$};

    \node[mynodesh] (wa)      [right = 0.75 of ejkash1]                  {$w_\aa$};
  \node[mynodesh] (wb)      [right = 0.75 of ejkbsh1]                  {$w_\bb$};

    \draw[->] (wroot) -- (wst);
  
    \draw[->] (wroot) -- (w0a);
  \draw[->] (wroot) -- (w1a);
  \draw[->] (wroot) -- (w0b);
  \draw[->] (wroot) -- (w1b);

    \draw[->] (w0a) -- (whash1a);
  \draw[->] (w1a) -- (whash1a);
  \draw[->] (whash1a) -- (w1a);
  \draw[->] (whash1a) -- (w0a);
  \draw[->] (w1a) -- (w0a);
  \draw[->] (w0a) -- (w1a);
  
  \draw[->] (w0b) -- (whash1b);
  \draw[->] (w1b) -- (whash1b);
  \draw[->] (whash1b) -- (w1b);
  \draw[->] (whash1b) -- (w0b);
  \draw[->] (w1b) -- (w0b);
  \draw[->] (w0b) -- (w1b);
  
    \draw[->] (whash1a) -- (wj0a);
  \draw[->] (whash1b) -- (wj0b);

    \draw[->] (wj0a) -- (wj1a);
  \draw[->] (wj0b) -- (wj1b);

  \draw[->,dotted] (wj1a) -- (ejka);
  \draw[->,dotted] (wj1b) -- (ejkb);
  
  \draw[->] (ejka) -- (ejkash1);
  \draw[->] (ejkb) -- (ejkbsh1);
  
    \draw[->] (ejkash1) -- (wa);
  \draw[->] (ejkbsh1) -- (wb);

    \draw[->,line width=0.1,dashed] (w0a) -- (wlp);
  \draw[->,line width=0.1,dashed] (w1a) -- (wlp);
  \draw[->,line width=0.1,dashed] (w0b) -- (wlp);
  \draw[->,line width=0.1,dashed] (w1b) -- (wlp);
  \draw[->,line width=0.1,dashed] (whash1a) -- (wlp);
  \draw[->,line width=0.1,dashed] (whash1b) -- (wlp);
\end{tikzpicture}
\caption{The epistemic state $s\block{q_\aa}{q_\bb} \{0,1,\#\}$. 
}
\end{figure}

\subsection{The Epistemic Actions}

We use the following shorthand formulas:
\begin{align*}
      \logicize{nxt}(0) & = \logicize{nxt}(1) = \#,
  &
        \logicize{nxt}(\#) & = 0 \lor 1, \\
            \logicize{nxt}(a) & = \logicize{nxt}(b) = 0 \lor 1,
    & 
  \logicize{symb} & = 0 \lor 1 \lor \#,
      \\ 
  \logicize{tail} & = 
  \logicize{symb}
    \land 
        \bigvee_{\vbl{d} \in \{0,1,\#\}} (\vbl{d} \land 
  K
  \neg \vbl{nxt}(\vbl{d}))
            , 
                  \text{ and}
                &
  \logicize{loop}_\xx &= \xx \land \bar{K} \vbl{lp} 
\end{align*}
for every 
$\xx \in \{\aa,\bb\}$.

\begin{remark}
  We note that the definition of $\logicize{tail}$ is the main reason we need the separator symbol $\#$.
  Otherwise, it would be harder to express that a state is the last encoding state when the accessibility relation is reflexive.
\end{remark}

We now show how to adjust the epistemic actions $\ad_i$ for this case. 

For every $1 \leq i \leq n$, we define 
$
\ad_i = (E_i,R_i,e_s), $
where 
        \begin{align*}
                        E_i = \{& e_s,  
                                    e_\xx,  
                                                      e_{st}, 
                                    e_{01,\xx}                   & \mid & \xx \in \{\aa,\bb\} \}
                  \\
                  & \cup
                  \{
                    e_{\xx,j}, e_{\xx,j,\#}
                    & \mid &
                    \xx \in \{\aa,\bb\},
                    ~~ 0 < j \leq  |\xx_i| 
                  \}; 
                  \\ 
                                          R_i &\text{ is the reflexive, transitive closure of} \\ 
            R_i^- = \{&                             (e_s,e_{st}), (e_s,e_{\xx}),  
                (e_{s},e_{01,\xx})
                                                                                                                                                                                                                                                                                                                                &\mid &
                \xx \in \{\aa,\bb\} 
                \}\\
                & \cup \\ 
                \{ & 
                (e_{s},e_{\xx,1}), 
                                                (e_{\xx,j},e_{\xx,j,\#}), (e_{\xx,j},e_{\xx})
                                                                                                                                                                &\mid &
                \xx \in \{\aa,\bb\}, ~~ 0 < j \leq |\xx_i|
                \}\\ 
               & \cup \\  &
                \{
                (e_{\xx,j,\#},e_{\xx,j+1})
                  &\mid &
                \xx \in \{\aa,\bb\}, ~~ 0 < j  < |\xx_i| 
                \}
                                                                                                                ;
                          \end{align*}
            and the precondition for each event in $E_i$ is given below:
            \begin{align*}
                        pre(e_s) =~& \vbl{root} \land \bar{K} \vbl{stg1} 
                        &
                        pre(e_{st}) =~& \vbl{stg1} \\ 
                                    pre(e_\xx) =~& 
                                      \xx 
                          \land \neg \vbl{loop}_\xx
                                                                                                                                                                        &
                                    pre(e_{\xx,j}) =~& \xx_i[j] \land \logicize{loop}_\xx
                        \\ 
                                    pre(e_{\xx,j,\#}) =~& \# \land \logicize{loop}_\xx
                                                                                                                                    &
                                    pre(e_{01,\xx}) =& \logicize{loop}_\xx \lor \vbl{lp} 
                                                            ,  
                                \end{align*}
        for each $ \xx \in \{\aa,\bb\}$ and $ j \leq |\xx_i|$. 

        Action $\ad_i$ corresponds to adding a block at the end of a sequence of blocks. The precondition of $e_s$ ensures that $\bar{K}\vbl{stg1}$ is true at the epistemic state and therefore we are still in the first stage. 
        The action is illustrated in \Cref{fig:addT}.

\begin{figure}
  \centering 
\begin{tikzpicture}[->,>=stealth',shorten >=1pt,auto,node distance=2.8cm, thick]
  \node[mynodesh,initial] (wroot)                                     {$e_s$};
  \node[mynodesh] (wst)     [above = 0.8 of wroot]                    {$e_{st}$};
  
    \node[mynodesh] (w0a)     [above right = 0.5 and 1.2 of wroot]      {$e_{01,\aa}$};

    \node[mynodesh] (w0b)     [below right = 0.5 and 1.2 of wroot]      {$e_{01,\bb}$};
  
    \node[mynodesh] (wj0a)    [right = 1 of w0a]                  {$e_{\aa,1}$};
  \node[mynodesh] (wj0b)    [right = 1 of w0b]                  {$e_{\bb,1}$};
  
  \node[mynode]   (wj1a)    [right = 0.6 of wj0a]               {$e_{\aa,1,\#}$};
  \node[mynode]   (wj1b)    [right = 0.6 of wj0b]               {$e_{\bb,1,\#}$};
  
  \node[mynode]   (ejka)    [right = 1.1 of wj1a]               {$e_{\aa,|\aa_i|}$};
  \node[mynode]   (ejkb)    [right = 1.1 of wj1b]               {$e_{\bb,|\bb_i|}$};

  \node[mynodel]  (ejkash1) [right = 0.6 of ejka]               {$e_{\aa,|\aa_i|,\#}$};
  \node[mynodel]  (ejkbsh1) [right = 0.6 of ejkb]               {$e_{\bb,|\bb_i|,\#}$};

    \node[mynodesh] (wa)      [right = 0.8 of ejkash1]            {$e_\aa$};
  \node[mynodesh] (wb)      [right = 0.8 of ejkbsh1]            {$e_\bb$};

    \draw[->] (wroot) -- (wst);
  
    \draw[->] (wroot) -- (w0a);
    \draw[->] (wroot) -- (w0b);

    \draw[->] (w0a) -- (wj0a);
  \draw[->] (w0b) -- (wj0b);

    \draw[->] (wj0a) -- (wj1a);
  \draw[->] (wj0b) -- (wj1b);

  \draw[->,dotted] (wj1a) -- (ejka);
  \draw[->,dotted] (wj1b) -- (ejkb);
  
  \draw[->] (ejka) -- (ejkash1);
  \draw[->] (ejkb) -- (ejkbsh1);
  
    \draw[->] (ejkash1) -- (wa);
  \draw[->] (ejkbsh1) -- (wb);

              \end{tikzpicture}
\caption{The epistemic action $\ad_i$, where the designated event, $e_s$ is marked with an arrow. 
}
\label[figure]{fig:addT}
\end{figure}
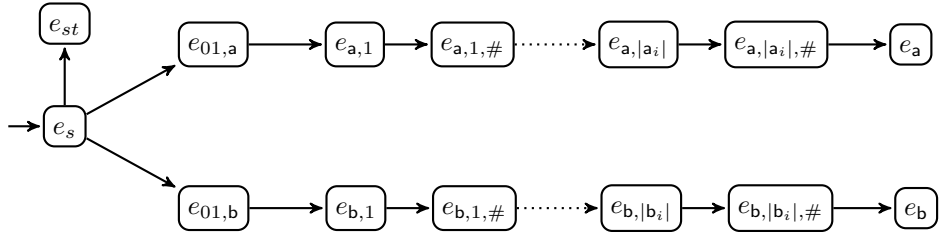

The remaining reduction proceeds in the same way as usual (see \Cref{sec:multi,sec:K,sec:rs}).

\section{Conclusions}
\label[section]{sec:conclusions}
We have shown that the plan existence problem for the case of preconditions of modal depth 1 and no postconditions is undecidable (except for the case of a single agent with negative introspection).
This result completes the picture of (un)decidability of plan existence with respect to the modal depth of the pre- and postconditions. 

The situation may seem grim, as it appears that even very restricted epistemic preconditions suffice to make the plan existence problem undecidable.
However, a more optimistic view is that the modal depth of the preconditions and postconditions is not the right measure to use to determine the hardness of the problem.
In fact, several approaches restrict the problem successfully to ensure decidability.

\paragraph*{Decidability}
Usually, one solves the plan existence problem for a decidable case by showing that the class of relevant epistemic states is finite of finitely representable (\emph{e.g.}~\cite{aucher:hal-01098740,bol-et-al-del-based-ep}).
To actively restrict the plan space, one can bound the length of the plan as in the bounded plan existence problem~\cite{bol-et-al-del-based-ep} or restrict the epistemic actions to be \emph{separable}~\cite{Bolander01012011,bol-et-al-del-based-ep} --- \emph{i.e.}~no two preconditions are consistent with each other.
This assumption ensures that the state space does not grow arbitrarily. In fact, we note that it was crucial for our reduction that the epistemic actions are \emph{not} separable, to allow the plan to encode arbitrarily long sequences of PCP blocks.

The following decidability result is based on known ideas that have the same effect.

\paragraph*{A decidable case: an agent with negative introspection}

When the accessibility relations are constrained to be an equivalence relation, every epistemic state is bisimilar to one where every state is accessible and every set of propositional formulas is satisfied in at most one state.
Then, \Cref{lem:K-is-preservedK} yields that the application of an epistemic action is equivalent to removing a subset of the states, and therefore we can consider only plans that have length at most as much as the number of states in the initial epistemic state.
This observation yields the following theorem.

\begin{theorem}\label[theorem]{thm:s5}
    The plan existence problem for the case of a single agent with reflexive, symmetric, and transitive accessibility relations, is in $\mathsf{NP}$.
\end{theorem}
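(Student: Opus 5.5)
Since every relation involved is an equivalence relation, we first replace the initial epistemic state $\initial$ by a canonical small bisimilar model. Plans are preserved under bisimulation, so this is harmless. Restricting to the generated submodel of the designated world leaves a single equivalence class, in which every world sees every world. Two worlds in that class satisfying the same propositional formulas are then bisimilar, so we may collapse them. The result is a model $\M_0$ whose worlds are distinct valuations, one per valuation that occurs; it is fully connected, and its size is at most the size of $\initial$.

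Next we show that applying an action to such a model only deletes worlds. Let $s=(\M,w)$ be fully connected with pairwise distinct valuations, and let $(\A,e)$ apply to it. Every world of $\M$ sees exactly the same set of worlds. Hence a modal-depth-1 precondition (a Boolean combination of atoms and formulas $K\psi$ with $\psi$ propositional) has its $K$-subformulas evaluated identically at every world of $s$. Consider the generated submodel of $s\times(\A,e)$ from $(w,e)$ and project each world $(u,f)$ to $u$. We claim this projection is a bisimulation onto the submodel of $\M$ induced by the set $U$ of its images, pointed at $w$.

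To check this, take $u'\in U$. Its witness pair lies in the same $Q$-class as $e$, because $Q$ is an equivalence relation and we are inside a generated submodel. Reflexivity and transitivity of $Q$, together with the fact that $R$ is total on $\M$, give the required forth and back steps. So $s\times(\A,e)\sim(\M|_U,w)$, and the new model is again fully connected with distinct valuations. This is the step where the proof leans on \Cref{lem:K-is-preservedK}: knowledge formulas only grow, and worlds only disappear.

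It follows that along any plan the relevant states form a chain of subsets $W_0\supseteq U_1\supseteq U_2\supseteq\cdots$ of the worlds of $\M_0$. If consecutive sets are equal, the corresponding action was a no-op up to bisimulation, so it can be dropped. Dropping it is safe because applicability and the resulting state depend only on the current state up to bisimulation. Hence if any plan exists, one of length at most $|W_0|$ exists.

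The NP procedure is then straightforward. Compute $\M_0$ in polynomial time by taking the generated submodel and merging equal valuations. Guess at most $|W_0|$ actions together with the subsets $U_j$, or simply recompute them. For each step, check in polynomial time that the precondition of the designated event holds at the current designated world, that $U_{j+1}$ is the set of worlds $u\in U_j$ for which some event $f$, $Q$-related to the designated event, satisfies $(\M|_{U_j},u)\models\pre(f)$, and finally that $\goal$ holds. Model checking is polynomial. The main obstacle is the bisimulation claim in the second step, in particular handling events outside the $Q$-class of $e$. These are exactly what the generated-submodel restriction removes, and the designated world survives because its event applies.
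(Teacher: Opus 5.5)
Your proposal is correct and follows the paper's own argument. You normalize $\initial$ to a fully connected model with pairwise distinct valuations, show that each action merely deletes worlds up to bisimulation, and bound the plan length by the number of worlds before guessing and model-checking the plan. Your explicit projection bisimulation fills in the step the paper only sketches via \Cref{lem:K-is-preservedK}; note that it uses symmetry of $Q$ as well as reflexivity and transitivity, and that it never actually needs the modal-depth bound on preconditions.
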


We can also use a very similar argument for the slightly more general case of a single agent that has a Eucledian accessibility relation \cite{Halpern2007Characterizing}. 

Furthermore, it is not hard to see that in this case, the plan existence problem is $\mathsf{NP}$-hard by a reduction from SAT.
A propositional formula $\varphi$ with $P$ the set of its propositional variables can be turned into an initial state whose worlds are $\{0, p \mid p \in P\}$, where $0 \notin P$;  $p \in L(w)$ if and only if $w=p$ for each $p\in P$ and world $w$; and $0$ is the designated state. 
For each $p \in P$ we can introduce an epistemic action that removes world $p$. Then, the goal is $\varphi'$, where $\varphi'$ results from $\varphi$ by replacing each $p \in P$ with $\bar{K} p$.

\begin{corollary}\label[corollary]{cor:5}
    The plan existence problem for the case of a single agent with accessibility relations that are restricted to be at least Eucledian, is $\mathsf{NP}$-complete.
\end{corollary}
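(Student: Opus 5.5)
The statement to prove is \Cref{cor:5}, which has two halves. For membership in $\mathsf{NP}$ I would extend the argument behind \Cref{thm:s5} from equivalence relations to Euclidean ones. For hardness I would formalise the SAT reduction sketched just before the corollary.

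For membership, I first normalise the initial epistemic state $\initial=((W,R,V),w)$ under a Euclidean $R$. The Euclidean property implies that $R(w)$, the set of worlds accessible from $w$, satisfies $R(u)=R(w)$ for every $u\in R(w)$. So the submodel generated by $w$ consists of $w$ together with a cluster $C=R(w)$ on which $R$ is universal. Since the designated world $w$ may lie outside $C$, this is a slightly more general shape than the S5 case.

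Up to bisimulation, I then collapse worlds of $C$ with equal valuations. I also restrict attention to the finitely many variables occurring in $\goal$ and in the preconditions of $\act$. This leaves a model of size at most $1+|C|$, with at most one cluster world per relevant valuation.

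Next I check what an action does to such a model. Applying $(\A,e)$ yields a world $(w,e)$, and its successors are the pairs $(u,f)$ with $u\in C$, $e\,Q\,f$, and $u\models\pre(f)$. These successors form a Euclidean cluster again. Because preconditions have modal depth at most $1$ and every $u\in C$ sees exactly $C$, the truth of $\pre(f)$ at $u$ depends only on $V(u)$ and on the set of valuations realised in $C$. By \Cref{lem:K-is-preservedK}, and by collapsing equal valuations, the resulting cluster realises a subset of the valuations realised in $C$. The valuation of the designated world does not change.

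So, up to bisimulation, every action either fails to apply or deletes some valuations from the cluster. A shortest successful plan therefore strictly shrinks the cluster at each step, or it contains a useless step that can be cut out. Hence it has length at most $|C|$.

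The certificate is the plan itself. Verification computes each successive normalised state, which stays of polynomial size because of the collapsing, and then model-checks $\goal$. This is all polynomial.

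The step I expect to need the most care is the one just described: that the product with an arbitrary (non-separable, Euclidean) action frame collapses back into a sub-cluster. Here I would argue directly that the successor set of $(u,f)$ is determined by $f$ alone, and that bisimilarity then only looks at realised valuations.

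For hardness, I reduce from SAT. Given $\varphi$ over the variables $P$, I take worlds $\{0\}\cup P$ with the universal relation, which is an equivalence relation and hence Euclidean. The valuation makes $p$ true exactly at world $p$, and $0$ is the designated world.

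For each $p\in P$ I introduce the action $\mathsf{del}_p$, which has a single reflexive event with precondition $\neg p$. Applying it removes world $p$ and keeps every other world.

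The goal is $\varphi'$, obtained from $\varphi$ by replacing each $p$ with $\bar K p$. A plan deletes some set $S\subseteq P$, and the resulting state satisfies $\bar K p$ iff $p\notin S$. Hence the plans correspond exactly to the assignments, and a plan exists iff $\varphi$ is satisfiable.

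The construction is clearly polynomial, which gives $\mathsf{NP}$-completeness.
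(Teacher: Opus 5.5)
Your route is the paper's: extend the S5 ``actions only delete valuations'' argument to Euclidean relations for membership, and use the same SAT reduction for hardness. The hardness half is correct.

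The membership half, however, starts from a false lemma. Euclideanness gives $R(x)\subseteq R(y)$ whenever $xRy$. It does not give $R(u)=R(w)$ for $u\in R(w)$. For example, take $W=\{w,u,v\}$ and $R=\{(w,u),(u,u),(u,v),(v,u),(v,v)\}$. This relation is Euclidean (even serial), yet $R(w)=\{u\}\neq\{u,v\}=R(u)$. What is true is that all $u\in R(w)$ see one common cluster $C\supseteq R(w)$ on which $R$ is universal. Here $R(w)\subsetneq C$ is possible whenever $R$ is not transitive. So your claim only holds for the transitive classes (K45, KD45, S5), while the corollary also covers K5 and KD5.

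This breaks your argument in two ways. First, your normal form ``root plus the cluster $R(w)$'' is not bisimilar to the given state. It discards $C\setminus R(w)$, which is visible to depth-1 preconditions evaluated at successors of the root, and to the goal, whose depth is unrestricted. Second, the shape is not preserved by Euclidean actions even if it holds initially. Take the action with events $e,f,g$, $Q=\{(e,f),(f,f),(f,g),(g,f),(g,g)\}$, $\pre(e)=\top$, $\pre(f)=\neg p$, $\pre(g)=p$. Apply it to the S5 state consisting of a $p$-world and a $\neg p$-world related universally. The resulting designated world satisfies $K\neg p\land\bar{K}\bar{K}p$, which no model of your shape satisfies. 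Your own remark that the successors of $(u,f)$ depend only on $f$ explains why: they are governed by $Q(f)$, the event cluster containing $Q(e)$, which may be strictly larger than $Q(e)$.

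The repair is routine but has to be made. Describe a normalised state by a triple $(V(w),A,B)$, where $A$ is the set of valuations realised in $R(w)$ and $B\supseteq A$ is the set realised in $C$. This triple determines the state up to bisimulation (with $B$ irrelevant when $A=\emptyset$). Let $D$ be the cluster of the event frame containing $Q(e)$. Then the product has root successors $\{(u,f)\mid u\in R(w),\ f\in Q(e),\ u\models\pre(f)\}$ and cluster $\{(u,f)\mid u\in C,\ f\in D,\ u\models\pre(f)\}$. Hence $A'\subseteq A$ and $B'\subseteq B$. A shortest plan never revisits a bisimulation class, so the pair strictly decreases along it. Its length is therefore at most $|A_0|+|B_0|\le 2|W|$, not $|C|$. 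With this correction, your certificate and polynomial-time verification go through as you describe.
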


\bibliographystyle{plain}
\bibliography{bibliography}

\end{document}